\newif\ifproceedingsversion
\newif\ifcombinedversion
\newif\ifcolrefversion
\newif\ifcompactversion
\spnewtheorem{myclaim}[theorem]{Claim}{\mdseries}{\itshape}
\spnewtheorem{myexample}[theorem]{Example}{\bfseries}{\itshape}
\newcommand{\ifproceedings}[2]{\ifbool{proceedingsversion}{#1}{#2}}
\newcommand{\ifcombined}[2]{\ifbool{combinedversion}{#1}{#2}}
\newcommand{\ifcolref}[2]{\ifbool{colrefversion}{#1}{#2}}
\newcommand{\ifcompact}[2]{\ifbool{compactversion}{#1}{#2}}
\newenvironment{proofof}[1]{\begin{proof}}
{\end{proof}}
\newcounter{oq}
\newcommand{\reals}{\mathbb{R}}
\newcommand{\function}[2]{:#1 \rightarrow #2}
\newcommand{\setdef}[2]{\left\{ \hspace{0.5mm} #1 : \hspace{0.5mm} #2 \right\}}
\newcommand{\Set}[1]{\left\{ \hspace{0.5mm} #1 \hspace{0.5mm} \right\}}
\newcommand{\refeq}[1]{(\ref{eq:#1})}
\newcommand{\calY}{\mathcal Y}
\newcommand{\msetdef}[2]{\left\{\!\!\left\{ \hspace{0.5mm} #1 : \hspace{0.5mm} #2 \right\}\!\!\right\}}
\newcommand{\cclass}[1]{\textsf{\upshape #1}}
\newcommand{\p}{\cclass{P}\xspace}
\newcommand{\conp}{\cclass{coNP}\xspace}
\newcommand{\gi}{\cclass{GI}\xspace}
\newcommand{\mathmyfont}{\mathsf}
\newcommand{\ds}[1]{S(#1)}
\newcommand{\auto}[1]{\mathit{Aut}(#1)}
\newcommand{\amenable}{\ensuremath{\mathmyfont{Amenable}}\xspace}
\newcommand{\compact}{\ensuremath{\mathmyfont{Compact}}\xspace}
\newcommand{\godsil}{\ensuremath{\mathmyfont{Godsil}}\xspace}
\newcommand{\tinhofer}{\ensuremath{\mathmyfont{Tinhofer}}\xspace}
\newcommand{\refinable}{\ensuremath{\mathmyfont{Refinable}}\xspace}
\newcommand{\discrete}{\ensuremath{\mathmyfont{Discrete}}\xspace}
\newcommand{\Pa}{{\mathcal P}}
\newcommand\bigzero{\text{\large0}}
\DeclareMathOperator{\IMP}{Imp}
\DeclareMathOperator{\CFI}{CFI}
\DeclareMathOperator{\Aut}{Aut}
\DeclareMathOperator{\Ext}{Ext}
\DeclareMathOperator{\Iso}{Iso}
\newcommand{\extr}[1]{\Ext(#1)}
\newcommand{\diag}[3]{#1\oplus #2\oplus #3}
\newcommand{\ddiag}[2]{#1\oplus #2}
\newcommand{\dddiagggg}[7]{#1\oplus #2\oplus #3\oplus #4\oplus #5\oplus #6\oplus #7}
\newcommand{\ddiagg}[4]{#1\oplus #2\oplus #3\oplus #4}
\newcommand{\hide}[1]{}
\title{\ifcombined{Graph Isomorphism, Color Refinement, and Compactness}{}
\ifcompact{On Tinhofer's Linear Programming Approach to Isomorphism Testing}{}%
\ifcolref{On the Power of Color Refinement}{}%
  \thanks{This work was supported by the Alexander von Humboldt
    Foundation in its research group linkage program. The second
    author and the fourth author were supported by DFG
    grants~KO~1053/7-2 and~VE~652/1-2, respectively.}}
\author{V.~Arvind\inst{1}, Johannes Köbler\inst{2}, Gaurav
  Rattan\inst{1}, Oleg Verbitsky\inst{2,}\inst{3}}
\institute{The Institute of Mathematical Sciences, Chennai 600 113,
  India \email{$\{$arvind,grattan$\}$@imsc.res.in} \and Institut f\"ur
  Informatik, Humboldt Universit\"at zu Berlin, Germany
  \email{$\{$koebler,verbitsk$\}$@informatik.hu-berlin.de} \and On
  leave from the Institute for Applied Problems of Mechanics and
  Mathematics, Lviv, Ukraine.}  \date{}
\begin{document} 

\maketitle

\begin{abstract}
  \ifcompact{}{\emph{Color refinement} is a classical technique used to show that
  two given graphs $G$ and $H$ are non-isomorphic; it is very
  efficient, although it does not succeed on all graphs. We call a
  graph $G$ \emph{amenable} to color refinement if the
  color-refinement procedure succeeds in distinguishing $G$ from any
  non-isomorphic graph $H$.} \ifcolref{Babai, Erd\H{o}s, and Selkow (1982) have
  shown that random graphs are amenable with high probability.

}{%

  Tinhofer (1991) explored a linear programming approach to Graph
  Isomorphism and defined the notion of \emph{compact graphs}: A graph
  is \emph{compact} if its fractional automorphisms polytope is
  integral. Tinhofer noted that isomorphism testing for compact graphs
  can be done quite efficiently by linear programming.  However, the
  problem of characterizing and recognizing compact graphs in
  polynomial time remains an open question. \ifcompact{In this paper we
    make new progress in our understanding of compact graphs.}{} Our
  results are summarized below:
\begin{itemize}[leftmargin=0mm,labelsep=1mm,itemindent=4mm,parsep=.5mm]}%
\ifcompact{}{\item We determine the exact range of applicability of color
  refinement by showing that amenable graphs are recognizable in time
  $O((n+m)\log n)$, where $n$ and $m$ denote the number of vertices
  and the number of edges in the input graph.} \ifcolref{}{%
\item We show that all \ifcompact{graphs $G$ which are distinguishable
    from any non-isomorphic graph by the classical color-refinement
    procedure}{amenable graphs} are compact. Thus, the applicability
  range for Tinhofer's linear programming approach to isomorphism
  testing is at least as large as for the combinatorial approach based
  on color refinement.
\item Exploring the relationship between color refinement and
  compactness further, we study related combinatorial and algebraic
  graph properties introduced by Tinhofer and Godsil. We show that the
  corresponding classes of graphs form a hierarchy and we prove that
  recognizing each of these graph classes is \p-hard. In particular,
  this gives a first complexity lower bound for recognizing compact graphs.
\end{itemize}
}
\end{abstract}

\ifcolref{\clearpage\setcounter{page}{1}}{}

\section{Introduction}\label{sec:intro}

\ifcompact{}{%
The well-known \emph{color refinement} (also known as \emph{naive
  vertex classification}) procedure for Graph Isomorphism works as
follows: it begins with a uniform coloring of the vertices of two
graphs $G$ and $H$ and refines the vertex coloring step by step. In a
refinement step, if two vertices have identical colors but differently
colored neighborhoods (with the multiplicities of colors counted),
then these vertices get new different colors. The procedure terminates
when no further refinement of the vertex color classes is
possible. Upon termination, if the multisets of vertex colors in $G$
and $H$ are different, we can correctly conclude that they are not
isomorphic. However, color refinement sometimes fails to distinguish
non-isomorphic graphs. The simplest example is given by any two
non-isomorphic regular graphs of the same degree with the same number
of vertices.  Nevertheless, color refinement turns out to be a useful
tool not only in isomorphism testing but also in a number of other
areas; see \cite{GroheKMS14,KerstingMGG14,ShervashidzeSLMB11} and
references there.

For which pairs of graphs $G$ and $H$ does the color refinement
procedure succeed in solving Graph Isomorphism? Mainly this question
has motivated the study of color refinement from different
perspectives.

Immerman and Lander \cite{ImmermanL90}, in their highly influential
paper, established a close connection between color refinement and
2-variable first-order logic with counting quantifiers. They show that
color refinement distinguishes $G$ and $H$ if and only if
these graphs are distinguishable by a sentence in this logic.

A well-known approach to tackling intractable optimization problems is
to consider an appropriate linear programming relaxation. A similar
approach to isomorphism testing, based on the notion of a fractional
isomorphism (introduced by Tinhofer \cite{Tinhofer86} using the term
doubly stochastic isomorphism), turns out to be equivalent to color
refinement. Building on Tinhofer's work \cite{Tinhofer86}, it is shown
by Ramana, Scheinerman and Ullman \cite{RamanaSU94} (see also
Godsil~\cite{Godsil97}) that two graphs are indistinguishable by color
refinement if and only if they are fractionally isomorphic.

We say that color refinement \emph{applies} to a graph $G$ if it
succeeds in distinguishing $G$ from any non-isomorphic $H$. A graph to
which color refinement applies is called \emph{amenable}.  There are
interesting classes of amenable graphs:

\begin{enumerate}
\item An obvious class of graphs to which color refinement is
  applicable is the class of \emph{unigraphs}. Unigraphs are graphs
  that are determined up to isomorphism by their degree sequences;
  see, e.g., \cite{BorriCP11,Tyshkevich00}.
\item Trees are amenable (Edmonds \cite{BusackerS65,Valiente02}).
\item It is easy to see that all graphs for which the color refinement
  procedure terminates with all singleton color classes (i.e.\ the
  color classes form the discrete partition) are amenable. Babai,
  Erd{\"o}s, and Selkow \cite{BabaiES80} have shown that a random
  graph $G_{n,\scriptscriptstyle 1/2}$ has this property with high
  probability. Moreover, the discrete partition of
  $G_{n,\scriptscriptstyle 1/2}$ is reached within at most two
  refinement steps. This implies that graph isomorphism is solvable
  very efficiently in the average case (see~\cite{BK79}).
\end{enumerate}}

\ifcompact{%
Consider the following natural linear programming
formulation \cite{Tinhofer86,RamanaSU94} of Graph Isomorphism. Let $G$
and $H$ be two graphs on $n$ vertices with adjacency matrices $A$ and
$B$, respectively. Then $G$ and $H$ are isomorphic if and only if
there is an $n\times n$ permutation matrix $X$ such that
  $AX=XB$.
A linear programming relaxation of this system of equations is to
allow $X$ to be a doubly stochastic matrix. 
If such an $X$ exists, it is called a \emph{fractional isomorphism}
from $G$ to $H$, and these graphs are said to be \emph{fractionally isomorphic}.

It turns out remarkably, as shown by \cite{RamanaSU94}, that two
graphs $G$ and $H$ are fractionally isomorphic if and only if they are
indistinguishable by the classical color-refinement procedure
(we outline this approach to isomorphism testing in Section
\ref{sec:compact}).

For which class of graphs is testing fractional isomorphism equivalent
to isomorphism testing? We call a graph $G$ \emph{amenable} if for any
other graph $H$ it holds that $G$ and $H$ are fractionally isomorphic
exactly when $G$ and $H$ are isomorphic.

The characterization of pairs of fractionally isomorphic graphs in terms of
color refinement implies that amenable graphs include
some well-known graph classes with easy isomorphism testing
algorithms, e.g.\ unigraphs, 
trees, and graphs for which color
refinement terminates with singleton color classes. 
We call graphs with the last property \emph{discrete}.
Babai, Erd{\"o}s, and Selkow \cite{BabaiES80}
have shown that a random graph $G_{n,\scriptscriptstyle 1/2}$ 
is discrete with high probability.
Thus, almost all graphs are amenable, which makes graph
isomorphism efficiently solvable in the average case.
Recently, we showed that the class of \emph{all} amenable graphs
can be recognized in nearly linear time \cite{arxiv};
a similar result was obtained independently in \cite{KSS15}.
This reduces testing of isomorphism for $G$ and $H$ to
computing a fractional isomorphism between these graphs,
once $G$ passes the amenability tests of~\cite{arxiv,KSS15}.
}{}

\ifcolref{}{%
The concept of a fractional isomorphism was used by Tinhofer
in \cite{Tinhofer91} as a basis for yet another 
linear-programming approach to isomorphism testing.
Tinhofer calls a graph $G$ \emph{compact} if the polytope of all its
fractional automorphisms is integral; more precisely, if $A$ is the
adjacency matrix of $G$, then the polytope in $\reals^{n^2}$
consisting of the doubly stochastic matrices $X$ such that
$AX=XA$ has only integral extreme points (i.e.\ all
coordinates of these points are integer).

If a compact graph $G$ is isomorphic to another graph $H$, then the
polytope of fractional isomorphisms from
$G$ to $H$ is also integral. If $G$ is not isomorphic to $H$, then
this polytope has \emph{no} integral extreme point. Thus, isomorphism
testing for compact graphs can be done in polynomial time by using
linear programming to compute an extreme point of the polytope and
testing if it is integral.

Before testing isomorphism of given $G$ and $H$ in this way,
we need to know that $G$ is compact. Unfortunately,
no efficient characterization of these graphs is currently known.
}

\ifcompact{}{%
\subsection*{Our results}

What is the class of graphs to which color refinement applies?  The
logical and linear programming based characterizations of color
refinement do not provide any efficient criterion answering this question. 

We aim at determining the exact range of applicability of color
refinement. We find an efficient characterization of the entire class
of amenable graphs, which allows for a quasilinear-time test whether
or not color refinement applies to a given graph. This result is shown
in Section~\ref{ss:exampl}, after we unravel the structure of amenable
graphs in Sections~\ref{ss:local} and~\ref{ss:global}.  We note that a
weak \textit{a priori} upper bound for the complexity of recognizing
amenable graphs is $\conp^{\gi[1]}$, where the superscript means the
one-query access to an oracle solving the graph isomorphism
problem. To the best of our knowledge, no better upper bound was known
before.

Combined with the Immerman-Lander result \cite{ImmermanL90} mentioned
above, it follows that the class of graphs definable by first-order
sentences with 2 variables and counting quantifiers is recognizable in
polynomial time.
}

\ifcolref{}{%
As our \ifcompact{}{second }main result, in Sections \ref{sec:compact} and 
\ref{sec:proofcompact} we show that 
 all amenable graphs are compact.
 Thus, Tinhofer's approach to Graph Isomorphism \cite{Tinhofer91} has
 at least as large an applicability range as color refinement.  In
 fact, the former approach is even more powerful because it is known
 that the class of compact graphs contains many regular graphs (for
 example, all cycles \cite{Tinhofer86}), for which no nontrivial color
 refinement is possible.

 In Section \ref{s:hierarchy}, we look at the relationship between the
 concepts of compactness and color refinement also from the other
 side.  Let us call a graph $G$ \emph{refinable} if the color
 partition produced by color refinement coincides with the orbit
 partition of the automorphism group of $G$.  It is interesting to
 note that the color-refinement procedure gives an efficient algorithm
 to check if a given refinable graph has a nontrivial automorphism.
 It follows from the results in \cite{Tinhofer91} that all compact
 graphs are refinable. The inclusion $\amenable\subset\compact$,
 therefore, implies that all amenable graphs are refinable as well.
 The last result is independently obtained in \cite{KSS15} by a
 different argument.  In the particular case of trees, this fact was
 observed long ago by several authors; see a survey
 in~\cite{TinhoferK99}.

Taking a finer look at the inclusion $\compact\subset\refinable$, 
in Section \ref{s:hierarchy} we discuss
algorithmic and algebraic graph properties that were introduced by
Tinhofer \cite{Tinhofer91} and Godsil \cite{Godsil97}. 
We note that, along with the other graph classes under consideration,
the corresponding classes \tinhofer and \godsil
form a hierarchy under inclusion:
\begin{eqnarray}\label{eq:hier-classes}
\discrete\subset \amenable\subset\compact\subset\godsil\subset \tinhofer\subset\refinable.
\end{eqnarray}
We show the following results on these graph classes:
\begin{itemize}
\item[$\bullet$] The hierarchy \refeq{hier-classes} is strict.
\item[$\bullet$] Testing membership in any of these
graph classes is \p-hard.
\end{itemize}

We prove the last fact by giving a suitable uniform AC$^0$ many-one
reduction from the \p-complete monotone boolean circuit-value problem
MCVP. More precisely, for a given MCVP instance $(C,x)$ our reduction
outputs a vertex-colored graph $G_{C,x}$ such that if $C(x)=1$ then
$G_{C,x}$ is discrete and if $C(x)=0$ then $G_{C,x}$ is not refinable.
In particular, the graph classes $\discrete$ and $\amenable$ are
P-complete. We note that Grohe \cite{Grohe99} established, for each
$k\ge2$, the \p-completeness of the equivalence problem for
first-order $k$-variable logic with counting quantifiers;
according to \cite{ImmermanL90}, this implies the \p-completeness of 
indistinguishability of two input graphs by color refinement.
We adapt the gadget constructions in \cite{Grohe99} to show our \p-hardness results.

}

\ifcompact{%
\paragraph{Related work.}
Particular families of compact graphs were identified in
\cite{Brualdi88,Godsil97,SchreckT88,WangLi05}; see also Chapter 9.10
in the monograph \cite{Brualdi06}. The concept of compactness is
generalized to \emph{weak compactness} in
\cite{EvdokimovKP99,EvdokimovPT00}. 

The linear programming approach of \cite{Tinhofer86,RamanaSU94} to
isomorphism testing is extended in
\cite{AtseriasM13,GroheO12,Malkin14}, where it is shown that this extension
corresponds to the $k$-dimensional Weisfeiler-Leman algorithm (which
is just color refinement if $k=1$).
}{%
\paragraph{Note.}  At the same time as our result appeared in an e-print
\cite{arxiv}, Sandra Kiefer, Pascal Schweitzer, and Erkal Selman
announced independently a similar characterization of amenable graphs
that subsequently appeared as an e-print \cite{KSS15}.
%

}

\paragraph{Notation.}
The vertex set of a graph $G$ is denoted by $V(G)$.  The vertices
adjacent to a vertex $u\in V(G)$ form its neighborhood $N(u)$.  A set
of vertices $X\subseteq V(G)$ induces a subgraph of $G$, that is
denoted by $G[X]$. For two disjoint sets $X$ and $Y$, $G[X,Y]$ is the
bipartite graph with vertex classes $X$ and $Y$ formed by all edges of
$G$ connecting a vertex in $X$ with a vertex in $Y$.  The
vertex-disjoint union of graphs $G$ and $H$ will be denoted by
$G+H$. Furthermore, we write $mG$ for the disjoint union of $m$ copies
of $G$.  The \emph{bipartite complement} of a bipartite graph $G$ with
vertex classes $X$ and $Y$ is the bipartite graph $G'$ with the same
vertex classes such that $\{x,y\}$ with $x\in X$ and $y\in Y$ is an
edge in $G'$ if and only if it is not an edge in $G$.  We use the
standard notation $K_n$ for the complete graph on $n$ vertices,
$K_{s,t}$ for the complete bipartite graph whose vertex classes have
$s$ and $t$ vertices, and $C_n$ for the cycle on $n$ vertices.

\ifcompact{\ifproceedings{%
In this extended abstract all missing proofs are given in the
appendix.
}{}}{}%

\ifcompact{%
\section{Amenable graphs}
}{%
\section{Basic definitions and facts}
\label{ss:amenable-basics}
}

For convenience, we will consider graphs to be vertex-colored in the
paper. A \emph{vertex-colored graph} is an undirected
simple graph $G$ endowed with a vertex coloring $c\function{V(G)}{\{1,\ldots,k\}}$. Automorphisms of
a vertex-colored graph and isomorphisms between vertex-colored graphs
are required to preserve vertex colors. We get usual graphs when $c$
is constant.

Given a graph $G$, the \emph{color-refinement} algorithm (to be
abbreviated as \emph{CR}) iteratively computes a sequence of colorings
$C^i$ of $V(G)$.  The initial coloring $C^0$ is the vertex coloring of
$G$, i.e., $C^0(u)=c(u)$. Then,
\begin{equation}
  \label{eq:Ci}
 C^{i+1}(u)=(C^i(u),\msetdef{C^i(a)}{a\in N(u)}),
\end{equation}
where $\left\{\!\!\left\{ \ldots \right\}\!\!\right\}$ denotes a
multiset. 



The partition $\Pa^{i+1}$ of $V(G)$ into the color classes of
$C^{i+1}$ is a refinement of the partition $\Pa^i$ corresponding to
$C^i$.  It follows that, eventually, $\Pa^{s+1}=\Pa^s$ for some $s$;
hence, $\Pa^{i}=\Pa^s$ for all $i\ge s$.  The partition $\Pa^s$ is
called the \emph{stable partition} of $G$ and denoted by~$\Pa_G$.

\ifcompact{%
Since the colorings $C^i$ are preserved under isomorphisms, for
isomorphic $G$ and $H$ we always have the equality
}{%
Given a partition $\Pa$ of the vertex set of a graph $G$, we call its
elements \emph{cells}.  We call $\Pa$ \emph{equitable} if:
\begin{enumerate}[leftmargin=8.5mm,label=(\roman*),topsep=1mm,itemsep=.5mm]
\item Each cell $X\in\Pa$ is monochromatic, i.e., all vertices $u,v\in
  X$ have the same color $c(u)=c(v)$.
\item For any cell $X\in\Pa$ the graph $G[X]$ induced by $X$ is
  \emph{regular}, that is, all vertices in $G[X]$ have equal degrees.
\item For any two cells $X,Y\in\Pa$ the bipartite graph $G[X,Y]$
  induced by $X$ and $Y$ is \emph{biregular}, that is, all vertices in
  $X$ have equally many neighbors in $Y$ and vice versa.
\end{enumerate}
%
It is easy to see that the stable partition of $G$ is equitable;
our analysis in the next section will make use of this fact.

A straightforward inductive argument shows that the colorings $C^i$
are preserved under isomorphisms.

\begin{lemma}\label{lem:CR-phi}
  If $\phi$ is an isomorphism from $G$ to $H$, then
  $C^i(u)=C^i(\phi(u))$ for any vertex $u$ of~$G$.
\end{lemma}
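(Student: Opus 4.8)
The plan is to argue by induction on the iteration index $i$, mirroring the recursive definition~\eqref{eq:Ci}. The only two properties of an isomorphism $\phi$ that I would invoke are that it respects the initial coloring, so that $c(u)=c(\phi(u))$ for every vertex $u$, and that it carries $N(u)$ bijectively onto the neighborhood $N(\phi(u))$ in $H$.

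The base case $i=0$ is immediate: $C^0(u)=c(u)=c(\phi(u))=C^0(\phi(u))$ by the color-preservation property of $\phi$. For the inductive step I would assume that $C^i(w)=C^i(\phi(w))$ holds for \emph{every} vertex $w$ of $G$ and then compare the two ordered pairs prescribed by~\eqref{eq:Ci}. Their first components, $C^i(u)$ and $C^i(\phi(u))$, agree by the induction hypothesis applied to $u$. For the second components I would rewrite the multiset over the neighbors of $\phi(u)$ using the bijection induced by $\phi$, namely $\msetdef{C^i(b)}{b\in N(\phi(u))}=\msetdef{C^i(\phi(a))}{a\in N(u)}$, and then apply the induction hypothesis termwise to each neighbor $a$ to obtain $\msetdef{C^i(a)}{a\in N(u)}$. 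This is exactly the second component of $C^{i+1}(u)$, so the two pairs coincide and $C^{i+1}(u)=C^{i+1}(\phi(u))$.

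I do not expect any genuine obstacle here; the single point that deserves care is the multiset identity in the inductive step, where one must use \emph{both} that $\phi$ restricts to a bijection between the two neighborhoods --- so that each neighbor is counted with the correct multiplicity and none is lost --- and that the induction hypothesis is available for the neighbors of $u$, not merely for $u$ itself. These two ingredients together are what make the inner multisets equal.
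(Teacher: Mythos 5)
Your proof is correct and is precisely the ``straightforward inductive argument'' that the paper invokes without spelling out: induction on $i$, using color preservation for the base case and the fact that $\phi$ maps $N(u)$ bijectively onto $N(\phi(u))$ (together with the induction hypothesis at the neighbors) to equate the multisets in the inductive step. Nothing to add.
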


Lemma \ref{lem:CR-phi} readily implies that, if graphs $G$ and $H$ are
isomorphic, then
}
\begin{equation}
  \label{eq:CR-check}
 \msetdef{C^i(u)}{u\in V(G)} = \msetdef{C^i(v)}{v\in V(H)}
\end{equation}
for all $i\ge0$\ifcompact{ or, equivalently, for $i=2n$. The CR
  algorithm accepts two graphs $G$ and $H$ as isomorphic exactly under
  this condition. To avoid an exponential growth of the lengths of
  color names, CR renames the colors after each refinement
  step.

We call a graph $G$ \emph{amenable} if CR works correctly on the input
$G,H$ for every $H$, that is, Equality~\refeq{CR-check} is false for
$i=2n$ whenever $H\not\cong G$.

The elements of the stable partition $\Pa_G$ of a graph $G$
will be called \emph{cells}. 
We define
the auxiliary \emph{cell graph} $C(G)$ of $G$ to be the complete graph
on the vertex set $\Pa_G$.  A vertex $X$ of the cell graph is called
\emph{homogeneous} if the graph $G[X]$ is either {complete} or {empty}
and \emph{heterogeneous} otherwise.  An edge $\{X,Y\}$ of the cell
graph is called \emph{isotropic} if the bipartite graph $G[X,Y]$ is
either {complete} or {empty} and \emph{anisotropic} otherwise. By an
\emph{anisotropic component} of the cell graph $C(G)$ we mean a
maximal connected subgraph of $C(G)$ whose edges are all anisotropic.
Note that if a vertex of $C(G)$ has no incident anisotropic edges, it
forms a single-vertex anisotropic component.

\begin{theorem}[Arvind et al.~\cite{arxiv}]\label{thm:amenable}
  A graph $G$ is amenable if and only if the stable partition $\Pa_G$
  of~$G$ fulfils the following three properties:
\begin{enumerate}[leftmargin=9mm,label=\normalfont\bfseries(\Alph*)]
\item For any cell $X\in\Pa_G$, $G[X]$ is an empty graph, a complete
  graph, a matching graph $mK_2$, the complement of a matching graph,
  or the 5-cycle;
\item For any two cells $X,Y\in\Pa_G$, $G[X,Y]$ is an empty graph, a
  complete bipartite graph, a disjoint union of stars $sK_{1,t}$
where $X$ and $Y$ are the set of $s$
  central vertices and the set of $st$ leaves, or the bipartite
  complement of the last graph.
\item Every anisotropic component $A$ of the cell graph $C(G)$ of $G$ is a
  tree and contains at most one heterogeneous vertex.
If $A$ contains an heterogeneous vertex, it has
  minimum cardinality among the vertices of~$A$. Let $R$ be any vertex
  of $A$ having minimum cardinality and let $A_R$ be the rooted
  directed tree obtained from $A$ by rooting it at $R$.  Then
  $|X|\le|Y|$ for any directed edge $(X,Y)$ of~$A_R$.
  \end{enumerate}
\end{theorem}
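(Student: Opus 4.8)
The plan is to reduce everything to the description of color refinement via equitable partitions. By the Ramana--Scheinerman--Ullman characterization cited above, $G$ and $H$ are indistinguishable by CR exactly when they are fractionally isomorphic, equivalently when their stable partitions carry the same parameters: a size-, color- and degree-preserving bijection $\beta$ between the cells of $\Pa_G$ and $\Pa_H$ matching the within-cell regularity degrees and the between-cell biregularity degrees. Since $\Pa_G$ is equitable, amenability of $G$ thus means precisely that these numerical data determine $G$ up to isomorphism. Two preliminary observations make conditions (A)--(C) natural. First, by Lemma~\ref{lem:CR-phi} every isomorphism preserves CR-colors and hence maps cells to cells, so an isomorphism between CR-equivalent graphs is assembled from cell-to-cell bijections that must be mutually compatible on each cell pair. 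Second, (A) and (B) are exactly the \emph{uniqueness} conditions for the local data: the graphs in (A) are precisely the regular graphs determined up to isomorphism by their order and degree (only the empty graph, $K_n$, $mK_2$, its complement, and $C_5$ have this property), and those in (B) are precisely the biregular bipartite graphs determined by the two sizes and two degrees (empty, complete bipartite, a union of stars, or its bipartite complement). In particular (A), (B), (C) depend only on the parameters and are therefore invariant under CR-equivalence.

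For sufficiency I would assume (A)--(C), fix an arbitrary $H$ with $G\equiv_{\mathsf{CR}}H$ together with the parameter-preserving bijection $\beta$, and note that by the previous paragraph $H$ also satisfies (A)--(C). I would build an isomorphism one anisotropic component of the cell graph $C(G)$ at a time, since every edge of $C(G)$ that is not anisotropic is isotropic (complete or empty bipartite) and is automatically respected by any cell-preserving bijection. For an anisotropic component $A$ I would root it as in (C) at a minimum cell $R$, choosing $R$ to be the heterogeneous cell when one exists, and fix an isomorphism $G[R]\to H[\beta(R)]$ of the uniquely determined induced graphs on the root. I would then propagate outward along the rooted tree $A_R$: for a directed edge $(X,Y)$ the bipartite graph $G[X,Y]$ is of star type (a union of stars with centers in the smaller cell $X$, or its bipartite complement), which partitions $Y$ into $|X|$ groups indexed by $X$; a bijection already fixed on $X$ therefore extends to $Y$ by sending each group to the image group, and this is where $|X|\le|Y|$ is used. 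Because $A_R$ is a tree the propagation never revisits a cell, and because every non-root cell of $A$ is homogeneous, any bijection used there automatically preserves its internal edges. Verifying that the resulting cell-preserving map respects all within-cell and between-cell adjacencies then shows it is an isomorphism $G\to H$.

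For necessity I would prove the contrapositive, exhibiting for each violated condition a graph $H$ with $H\equiv_{\mathsf{CR}}G$ but $H\not\cong G$. If (A) fails at a cell $X$, then $G[X]$ is a regular graph not determined by its parameters, so re-wiring the edges inside $X$ as a non-isomorphic regular graph of the same degree leaves all cell sizes and all quotient parameters unchanged, whence $H\equiv_{\mathsf{CR}}G$, while changing the isomorphism type of $G[X]$; since any isomorphism must preserve the (here unchanged) stable partition and restrict to an isomorphism of induced subgraphs on corresponding cells, it follows that $H\not\cong G$. The failure of (B) is handled identically by re-wiring a single anisotropic cell pair. The substantial case, which I expect to be the main obstacle, is the necessity of (C): here every cell and every cell pair is individually forced, yet a violated global structure still admits a non-isomorphic reassembly. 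I would treat separately the three ways (C) can fail --- an anisotropic component containing a cycle, a component with two heterogeneous cells, and a directed edge $(X,Y)$ with $|X|>|Y|$ --- and in each case construct an explicit \emph{twist} of the gluing along the offending edges that preserves all local parameters but alters the global isomorphism type. The cycle case is the delicate one: one must show that a cycle of star-type bipartite graphs can be reglued with nontrivial monodromy and then verify that the two assemblies are genuinely non-isomorphic rather than related by a cell-preserving automorphism. Once these constructions are in place, amenability forces each of (A), (B), (C), completing the equivalence.
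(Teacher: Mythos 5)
Your sufficiency argument and your necessity arguments for conditions (A) and (B) are essentially the paper's own: the paper also roots each anisotropic component at the heterogeneous (or a minimum-cardinality) cell, fixes an isomorphism there, and propagates along the tree using the star structure of anisotropic edges, and it establishes necessity of (A) and (B) exactly via the Johnson/Koren unigraph characterizations combined with a ``local surgery'' lemma showing that rewiring $G[X]$ or $G[X,Y]$ inside a cell or cell pair leaves all CR-colors unchanged. Up to that point your proposal is sound and matches the paper.

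The genuine gap is the necessity of (C), which you yourself flag as ``the main obstacle'' and then leave as a plan. This is not a routine verification but the technical core of the theorem, and the paper spends four separate constructions on it: (1) for a \emph{uniform} anisotropic path joining two heterogeneous cells, a ``conducting'' invariant (whether the bijection induced by composing the matchings along the path is an isomorphism of the end cells' internal graphs), which is destroyed by rewiring one end; (2) for a \emph{uniform} anisotropic cycle, the isomorphism type $\tau(Q)$ of the disjoint union of cycles obtained by composing the matchings around the cycle, which can be switched, e.g., between $kC_m$ and $C_{km}$; (3) for a size ``valley'' $X Y_1\ldots Y_l Z$ with $|X|<|Y_1|=\ldots=|Y_l|>|Z|$, a ``flatness'' invariant of the hypergraph formed on $Y$ by the two star systems hanging off $X$ and $Z$; and (4) for a path ascending from a smaller cell into a heterogeneous cell, a variant of (3) where the hypergraph also includes the matching edges inside the heterogeneous cell. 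Your three-case decomposition (cycle, two heterogeneous cells, monotonicity failure) does not map cleanly onto these constructions: ``regluing with nontrivial monodromy'' only makes sense on \emph{uniform} cycles, where the anisotropic bipartite graphs are perfect matchings that compose to bijections; a non-uniform cycle admits no such monodromy and requires the valley construction (3). Likewise, a non-uniform path between two heterogeneous cells cannot be twisted by a conducting-type argument and instead reduces to (3) or (4), and the failure mode in which the unique heterogeneous cell exists but lacks minimum cardinality (which under your rooting convention surfaces as a monotonicity violation) specifically needs construction (4). Finally, in every case one must exhibit an invariant that is preserved by all cell-respecting maps (conductance, flatness, $\tau(Q)$) to certify that the twisted graph is genuinely non-isomorphic to $G$; without naming these invariants, the claim ``alters the global isomorphism type'' is unsupported. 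Until these constructions are supplied, the ``only if'' direction of the theorem is not proved.
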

}{. When used for isomorphism testing, the CR algorithm accepts
  two graphs $G$ and $H$ as isomorphic exactly when the above
  condition is met on input $G+H$. Note that this condition is
  actually finitary: If Equality~\refeq{CR-check} is false for some
  $i$, it must be false for some $i<2n$, where $n$ denotes the number
  of vertices in each of the graphs.  This follows from the
  observation that the partition $\Pa^{2n-1}$ induced by the coloring
  $C^{2n-1}$ must be the stable partition of the disjoint union of $G$
  and $H$. In fact, Equality~\refeq{CR-check} holds true for all $i$
  if it is true for $i=n$; see, e.g., \cite{KrebsV14}.  Thus, it is
  enough that CR verifies \refeq{CR-check} for $i=n$.

Note that computing the vertex colors literally according to
\refeq{Ci} would lead to an exponential growth of the lengths of color
names. This can be avoided by renaming the colors after each
refinement step. Then CR never needs more than $n$ color names
(appearance of more than $n$ colors is an indication that the graphs
are non-isomorphic).

\begin{definition}
  We call a graph $G$ \emph{amenable} if CR works correctly on the
  input $G,H$ for every $H$, that is, Equality~\refeq{CR-check} is
  false for $i=n$ whenever $H\not\cong G$.
\end{definition}

\section{Local structure of amenable graphs}\label{ss:local}

Consider the stable 
partition $\Pa_G$ of an amenable graph $G$.  The following lemma gives a list of
all possible regular and biregular graphs that can occur,
respectively, as $G[X]$ and $G[X,Y]$ for cells $X,Y$ of~$\Pa_G$.

\begin{restatable}{lemma}{necessaryAB}\label{lem:necessaryAB}
  The stable partition $\Pa_G$ of an amenable graph~$G$ fulfills the
  following properties:
\begin{enumerate}[leftmargin=9mm,label=\normalfont\bfseries(\Alph*),topsep=1mm,itemsep=.5mm]
\item For any cell $X\in\Pa_G$, $G[X]$ is an empty graph, a complete
  graph, a matching graph $mK_2$, the complement of a matching graph,
  or the 5-cycle;
\item For any two cells $X,Y\in\Pa_G$, $G[X,Y]$ is an empty graph, a
  complete bipartite graph, a disjoint union of stars $sK_{1,t}$
  where $X$ and $Y$ are the set of $s$ central vertices and the set of
  $st$ leaves, or the bipartite complement of the last graph.
\end{enumerate}
\end{restatable}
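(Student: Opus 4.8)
The plan is to prove both parts by one \emph{replacement strategy}, reducing each to a combinatorial classification. Since the stable partition $\Pa_G$ is equitable, every cell induces a regular graph $G[X]$ and every pair of cells induces a biregular bipartite graph $G[X,Y]$. Thus for (A) it suffices to show that $G[X]$ is \emph{determined up to isomorphism by $|X|$ and its degree}, and for (B) that $G[X,Y]$ is determined, up to bipartition-preserving isomorphism, by $(|X|,|Y|)$ and its two degrees. Indeed, suppose $G[X]$ were a $d$-regular graph not uniquely determined by $(|X|,d)$, and pick a non-isomorphic $d$-regular graph $R'$ on the vertex set $X$. Form $H$ from $G$ by deleting the edges of $G[X]$ and inserting those of $R'$, leaving all other edges and all colours intact. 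I will show that $H\not\cong G$ while CR fails to distinguish $G$ from $H$, contradicting amenability. The argument for (B) is identical, replacing $G[X,Y]$ by a non-isomorphic biregular bipartite graph with the same pair of degrees.

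The replacement keeps $\Pa_G$ equitable in $H$ and changes no vertex's number of neighbours in any cell: each vertex of $X$ still has $d$ neighbours inside $X$ and its original neighbours elsewhere (in the bipartite case each vertex of $X$, resp.\ $Y$, keeps its degree into $Y$, resp.\ $X$). A straightforward induction on $i$ then shows that $C^i$ is constant on every cell of $\Pa_G$ and takes the same value in $G$ and in $H$: the inductive step only uses the multiset of colours on a vertex's neighbourhood, which by equitability and degree-preservation is identical in the two graphs. Hence the multisets in \refeq{CR-check} agree for all $i$, so CR does not separate $G$ and $H$. For non-isomorphism, any isomorphism $\phi\colon G\to H$ preserves the stable colouring by Lemma~\ref{lem:CR-phi}; since that colouring is the same in both graphs, $\phi$ must map $X$ onto $X$ (and $Y$ onto $Y$), so $\phi$ restricts to a (bipartition-preserving) isomorphism from $G[X]$ to $R'$ (resp.\ from $G[X,Y]$ to $R'$), contradicting the choice of $R'$.

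It remains to carry out the classification. For (A), complementation is a bijection on isomorphism classes of regular graphs that maps the claimed list onto itself, so I may assume $d\le (|X|-1)/2$. The values $d=0$ (empty) and $d=1$ (perfect matching) are immediate, and for $d=2$ the graph is a union of cycles, forced to be a single cycle exactly when $|X|\le5$; this yields $C_3=K_3$, $C_4=\overline{2K_2}$ and the genuinely new graph $C_5$, while for $|X|\ge6$ the pair $C_{|X|}$ and $C_3+C_{|X|-3}$ witnesses non-uniqueness. The real content is the range $3\le d\le(|X|-1)/2$, where I must show that a $d$-regular graph on $|X|$ vertices always occurs in at least two isomorphism types; I expect this to be the main obstacle. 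When $|X|\ge 2d+2$ one exhibits a disconnected example $K_{d+1}+R_0$ against a connected $d$-regular graph, so the delicate point is the boundary $|X|=2d+1$ (where parity forces $d$ even), which I would settle by a direct two-switch or circulant construction producing two graphs with different small-subgraph counts.

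For (B) the same scheme applies, with bipartite complementation reducing to $X$-degree $a\le|Y|/2$. Writing $a$ and $b$ for the degrees from $X$ and from $Y$, the degenerate values recover the listed families: $a=0$ is the empty graph, $a=|Y|$ the complete bipartite graph, $b=1$ (each vertex of $Y$ meeting a single vertex of $X$) gives the disjoint union of stars $sK_{1,t}$ with centres in $X$, the symmetric value $a=1$ gives stars with centres in $Y$, and $b=|X|-1$ or $a=|Y|-1$ give the bipartite complements. The work is to show that every biregular bipartite graph with $2\le a\le|Y|-2$ and $2\le b\le|X|-2$ is non-unique, i.e.\ that there are two inequivalent $0$--$1$ matrices with the prescribed row and column sums. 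As in the $2$-regular case, unions of even cycles such as $C_{2m}$ versus $C_4+C_{2m-4}$ already exhibit the phenomenon, and the general statement follows from a bipartite two-switch argument; notably, no sporadic graph analogous to $C_5$ arises on the bipartite side.
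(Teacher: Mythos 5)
Your replacement machinery is exactly the paper's argument: the induction showing that the colorings $C^i$ take the same values on $G$ and on the surgically modified graph $H$ is the paper's Lemma~\ref{lem:local-surgery}, and the non-isomorphism conclusion via preservation of the stable coloring (Lemma~\ref{lem:CR-phi}), forcing any isomorphism to restrict to an isomorphism on the modified cell or pair of cells, is also identical. That half of your proposal is correct and complete.

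The gap is in the classification step, which is the actual combinatorial content of the lemma. Where the paper simply invokes two published theorems --- Johnson's classification of regular unigraphs (Lemma~\ref{lem:johnson}) and Koren's classification of biregular bipartite unigraphs (Lemma~\ref{lem:koren}) --- you set out to prove these from scratch, and at precisely the hard points you substitute intention for proof. For part (A) you reduce to $3\le d\le(|X|-1)/2$, dispose of $|X|\ge 2d+2$ with the disconnected-versus-connected pair, and then for the boundary $|X|=2d+1$ (where, as you correctly note, no disconnected realization exists, since every component of a $d$-regular graph has at least $d+1$ vertices) you only say you ``would settle'' it by a two-switch or circulant construction; no construction is given, and one is genuinely needed --- for every even $d\ge4$ you must exhibit two $d$-regular graphs on $2d+1$ vertices together with an invariant (e.g.\ the triangle count) that separates them. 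For part (B) the entire core case $2\le a\le|Y|-2$, $2\le b\le|X|-2$ is delegated to ``a bipartite two-switch argument'' that is never carried out; note that performing a two-switch can perfectly well return an isomorphic graph, so the mere availability of two-switches does not by itself produce a second isomorphism type. As it stands, your proposal establishes the lemma only modulo exactly the two statements the paper cites from the literature; either cite Johnson and Koren, as the paper does, or supply the missing constructions.
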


The proof of Lemma~\ref{lem:necessaryAB} is\ifproceedings{ given in the
  appendix. It is}{} based on the following facts.

\begin{lemma}[Johnson \cite{Johnson75}]\label{lem:johnson}
  A regular graph of degree $d$ with $n$ vertices is a unigraph if and
  only if $d\in\{0,1,n-2,n-1\}$ or $d=2$ and $n=5$.\footnote{%
    The last case, in which the graph is the 5-cycle, is missing from
    the statement of this result in \cite[Theorem
    2.12]{Johnson75}. The proof in \cite{Johnson75} tacitly considers
    only graphs with at least 6 vertices.}
\end{lemma}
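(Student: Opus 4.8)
The plan is to first recast the unigraph property as a uniqueness statement. Since a $d$-regular graph on $n$ vertices has the constant degree sequence $(d,\dots,d)$, such a graph is a unigraph if and only if it is the only $d$-regular graph on $n$ vertices up to isomorphism. For the degrees listed in the lemma this uniqueness is immediate, whenever a graph with these parameters exists at all: $d=0$ forces the empty graph and $d=n-1$ forces $K_n$; $d=1$ forces the perfect matching $\frac n2 K_2$ and $d=n-2$ forces its complement; and a $2$-regular graph is a disjoint union of cycles of length at least $3$, so on $5$ vertices it can only be the single cycle $C_5$. This settles the ``if'' direction.

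For the converse I would prove the contrapositive: whenever $2\le d\le n-3$ and $(d,n)\neq(2,5)$, there are at least two non-isomorphic $d$-regular graphs on $n$ vertices (assuming $dn$ is even, as otherwise no such graph exists and there is nothing to prove). Passing to complements sends a $d$-regular graph on $n$ vertices to an $(n-1-d)$-regular one, preserves non-isomorphism, and fixes the exceptional set of degrees setwise (as $0\leftrightarrow n-1$, $1\leftrightarrow n-2$, and $(2,5)$ is self-paired); hence I may assume $d\le (n-1)/2$, i.e.\ $n\ge 2d+1$. This leaves a clean dichotomy. If $n\ge 2d+2$, a disconnected example exists: take $K_{d+1}$ together with any $d$-regular graph on the remaining $n-d-1\ge d+1$ vertices, the needed parity of $d(n-d-1)=dn-d(d+1)$ holding because both $dn$ and $d(d+1)$ are even. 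Since a connected $d$-regular graph on $n$ vertices always exists (for instance a suitable circulant built around a Hamiltonian cycle), connectivity distinguishes the two graphs and we are done.

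The remaining case $n=2d+1$ is the crux, and I expect it to be the main obstacle. Here connectivity is useless, since any disconnected $d$-regular graph needs at least $2(d+1)>n$ vertices; moreover $dn=d(2d+1)$ forces $d$ to be even, and since $(2,5)$ is excluded we have $d\ge 4$. Writing $d=2e$ with $e\ge 2$ and $n=4e+1$, I would exhibit two $d$-regular circulants on $\mathbb{Z}_n$ and separate them by a finer invariant, the number of triangles: let $G_1$ join $i\sim j$ whenever $i-j\in\pm\{1,\dots,e\}$, and let $G_2$ join $i\sim j$ whenever $i-j\in\pm(\{1,\dots,e-1\}\cup\{e+1\})$. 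Both are $d$-regular, and their triangle counts are governed by the number of additive triples $x+y=z$ inside the respective connection sets; moving the difference $e$ up to $e+1$ opens a gap at $e$ and should strictly reduce this number. The technical heart of the argument is precisely the verification that these two counts differ for every $e\ge2$ (and, if my specific pair happened to collide for some $e$, that the connection set can be perturbed to force a difference): the clean connected-versus-disconnected dichotomy breaks down exactly at $n=2d+1$, leaving no option but to separate the graphs by such a computable invariant. The base case $e=2$ is reassuring, since $G_1$ then has $9$ triangles on $9$ vertices while $G_2$ has only $3$, and the general computation follows the same pattern; once the counts are shown to differ we get $G_1\not\cong G_2$, completing the hard direction and hence the lemma.
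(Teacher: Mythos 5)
The paper offers no proof to compare against: Lemma~\ref{lem:johnson} is imported from Johnson's 1975 paper on simple separable graphs, and the authors only add a footnote correcting Johnson's statement (the missing $C_5$ case). Your proposal is therefore, by necessity, a genuinely different route --- a self-contained combinatorial proof where the paper relies on a citation --- and its skeleton is sound. The reformulation of the unigraph property as uniqueness of the $d$-regular graph on $n$ vertices is valid (any realization of a constant degree sequence is regular), the complement reduction is legitimate since the exceptional set is closed under $d\mapsto n-1-d$, and the dichotomy for $n\ge 2d+2$ works: $K_{d+1}$ plus a $d$-regular graph on the remaining $n-d-1\ge d+1$ vertices exists (the parity is fine because $d(d+1)$ is always even) and is disconnected, while a circulant containing $\pm1$ in its connection set is connected. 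Your proof also has the merit of explaining \emph{why} $(d,n)=(2,5)$ is exceptional: it is precisely the point where every $d$-regular graph on $n=2d+1$ vertices is forced to be connected and where the circulant construction ($d\ge4$) is unavailable --- something invisible in the paper, whose source even got this case wrong.

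The one step you flagged as incomplete --- that your two circulants on $\mathbb{Z}_{4e+1}$ have different triangle counts for every $e\ge2$ --- does go through exactly as you set it up, and no perturbation of the connection set is ever needed. The number of triangles in a circulant on $\mathbb{Z}_n$ with symmetric connection set $S\not\ni 0$ equals $\frac n6\bigl|\{(x,y)\in S\times S : x+y\in S\}\bigr|$. For $S_1=\pm\{1,\dots,e\}$ no sum wraps around modulo $n=4e+1$, and counting additive triples gives $\tfrac12(4e+1)e(e-1)$ triangles. For $S_2=\pm(\{1,\dots,e-1\}\cup\{e+1\})$ and $e\ge3$, sums again never wrap (the wrapped residues $\pm 2e$ and $\pm(2e-1)$ have absolute value exceeding $e+1$), and the same count gives $\tfrac12(4e+1)(e+1)(e-2)$ triangles; since $e(e-1)-(e+1)(e-2)=2$, the two counts differ by exactly $4e+1$ for every $e\ge3$. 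The case $e=2$, where wraparound does occur and the general formula fails, is the one you already settled by hand ($9$ versus $3$ triangles). With this computation inserted, your proof is complete and correct --- and, unlike the paper's treatment, it needs no appeal to (and no patching of) the literature.
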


\begin{lemma}[Koren \cite{Koren76}]\label{lem:koren}
  A bipartite graph is determined up to isomorphism by the conditions
  that every of the $m$ vertices in one part has degree $c$ and every
  of the $n$ vertices in the other part has degree $d$ if and only if
  $c\in\{0,1,n-1,n\}$ or $d\in\{0,1,m-1,m\}$.
\end{lemma}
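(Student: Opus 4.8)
The plan is to prove the equivalence in both directions, exploiting two symmetries throughout: the symmetry that exchanges the two vertex classes (which turns the condition on $c$ into the condition on $d$), and bipartite complementation (which, as defined in the Notation paragraph, sends a biregular bipartite graph with degrees $(c,d)$ to one with degrees $(n-c,m-d)$ and preserves isomorphism type).

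\textbf{Sufficiency.} Suppose $c\in\{0,1,n-1,n\}$; the case $d\in\{0,1,m-1,m\}$ is identical after exchanging the two parts. By bipartite complementation it suffices to treat $c\in\{0,1\}$, since then $c=n$ and $c=n-1$ follow by complementing the $c=0$ and $c=1$ graphs. If $c=0$ the graph is edgeless, hence unique. If $c=1$, every vertex of the $m$-part has exactly one neighbor, so the graph is a disjoint union of stars whose centers are the degree-$d$ vertices of the $n$-part; since all these degrees equal the common value $d=m/n$, the graph is determined up to isomorphism by partitioning the $m$ leaves into $n$ equal blocks, so it is unique. This settles the ``if'' direction.

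\textbf{Necessity.} I argue the contrapositive: assuming $2\le c\le n-2$ and $2\le d\le m-2$, I exhibit two non-isomorphic biregular bipartite graphs with these parameters. At least one realization exists because $mc=nd$ and $0\le c\le n$, $0\le d\le m$ (e.g.\ by Gale--Ryser). The isomorphism invariant I use is the number of $4$-cycles, $N_4=\sum_{\{u,u'\}}\binom{|N(u)\cap N(u')|}{2}$, the sum running over unordered pairs in the $m$-part. Its key feature is that the total codegree $\sum_{\{u,u'\}}|N(u)\cap N(u')|=n\binom{d}{2}$ is already fixed by $(m,n,c,d)$, whereas $N_4$, by strict convexity of $t\mapsto\binom{t}{2}$, depends on how this total is distributed across pairs. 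Thus two realizations with different codegree multisets have different $N_4$ and are non-isomorphic. The smallest instance of this phenomenon is the prototype $c=d=2$, $m=n=4$, where the two realizations are $2C_4$ and $C_8$.

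To produce two different values of $N_4$ I use a $2$-switch: choose $u_1,u_2$ in the $m$-part and $v_1,v_2$ in the $n$-part with $u_1v_1,u_2v_2$ edges and $u_1v_2,u_2v_1$ non-edges, and replace the former edges by the latter. This preserves every degree, hence biregularity, while it perturbs the codegree multiset. The degree slack on both sides is exactly what makes the required four-vertex pattern available: the bounds $c\le n-2$ and $d\le m-2$ guarantee enough non-edges at $u_1,u_2$ and $v_1,v_2$, while $c,d\ge 2$ guarantee enough incident edges, so that in some realization one can select the switch to change $N_4$ by a nonzero amount. The switched graph $G'$ then has the same parameters but $N_4(G')\neq N_4(G)$, whence $G\not\cong G'$.

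\textbf{Main obstacle.} The delicate point is precisely this switching lemma: showing that $2\le c\le n-2$ and $2\le d\le m-2$ are \emph{exactly} the conditions under which an admissible $2$-switch changing $N_4$ can be found, and ruling out degenerate situations in which every admissible switch happens to leave $N_4$ fixed. Equivalently, one must verify the boundary behaviour — that at each excluded value $c\in\{0,1,n-1,n\}$ (and symmetrically for $d$) no codegree-altering switch exists, which is consistent with uniqueness there. This case analysis at the extreme degrees, together with the existence of a genuinely $N_4$-changing switch in the interior range, is where the real work of the proof lies.
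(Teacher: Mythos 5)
The paper never proves this lemma at all: it is quoted as a known result of Koren \cite{Koren76} and used as a black box, so your attempt can only be judged on its own merits, not against a proof in the text. On those merits, your ``if'' direction is correct and complete: reducing to $c\in\{0,1\}$ via bipartite complementation (complementation changes the degrees to $(n-c,m-d)$ and preserves isomorphism between realizations) and then observing that the $c=0$ realization is the empty graph and the $c=1$ realization is the star forest $nK_{1,d}$ with $d=m/n$ is a perfectly sound argument.

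The ``only if'' direction, however, is a plan rather than a proof, and the plan is missing its central step. Everything is reduced to a ``switching lemma'': that for $2\le c\le n-2$ and $2\le d\le m-2$ some realization admits a $2$-switch changing the $4$-cycle count $N_4$. You never prove this; your final paragraph explicitly labels it ``the main obstacle'' and ``where the real work of the proof lies,'' which is an admission that the hard half of the theorem remains open in your write-up. Worse, the one supporting claim you do assert is false: two realizations with different codegree multisets need \emph{not} have different $N_4$. The multisets $\{3,1,1,1\}$ and $\{2,2,2,0\}$ have the same sum $6$ and both give $\sum_t\binom{t}{2}=3$; strict convexity of $t\mapsto\binom{t}{2}$ separates only multisets that are comparable under majorization, not arbitrary distinct multisets with equal sums. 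So even if you exhibit a switch that perturbs the codegrees, $N_4$ may stay fixed, and conversely all switches available from one particular realization could preserve $N_4$ while realizations with a different $N_4$ sit elsewhere in the switch graph; your argument addresses neither possibility. The standard way to close this gap --- and essentially what Koren does --- is to abandon the invariant-plus-switching scheme and instead construct, for every admissible parameter tuple $(m,n,c,d)$ with $2\le c\le n-2$ and $2\le d\le m-2$, two explicit $0$-$1$ matrices with constant row sums $c$ and column sums $d$ that are provably inequivalent under row and column permutations (your prototype $2C_4$ versus $C_8$ is the smallest instance of such a scheme, but one must give a construction covering all parameters, including those where, e.g., every realization is connected, so a connected-versus-disconnected distinction cannot serve as the invariant). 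Producing that general construction is the actual content of the theorem, and it is absent.
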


If $G$ contains a subgraph $G[X]$ or $G[X,Y]$ that is induced by some
$X,Y\in\Pa_G$ but not listed in Lemma \ref{lem:necessaryAB}, then
Lemmas \ref{lem:johnson} and \ref{lem:koren} imply that this subgraph
can be replaced by a non-isomorphic regular or biregular graph with
the same parameters. Hence, in order to prove Lemma
\ref{lem:necessaryAB} it suffices to show that the resulting graph $H$
is indistinguishable from $G$ by color refinement.  The graphs $G$ and
$H$ in the following lemma have the same vertex set.  Given a vertex
$u$, we distinguish its neighborhoods $N_G(u)$ and $N_H(u)$ and its
colors $C^i_G(u)$ and $C^i_H(u)$ in the two graphs.

\begin{restatable}{lemma}{localsurgery}\label{lem:local-surgery}
  Let $X$ and $Y$ be cells of the stable partition of a graph~$G$.
\begin{enumerate}[leftmargin=6.5mm,label=(\roman*),topsep=1mm,itemsep=.5mm]
\item If $H$ is obtained from $G$ by replacing the edges of the
  subgraph $G[X]$ with the edges of an arbitrary regular graph $(X,E)$
  having the same degree, then $C^i_G(u)=C^i_H(u)$ for any $u\in V(G)$
  and any~$i$.
\item If $H$ is obtained from $G$ by replacing the edges of the
  subgraph $G[X,Y]$ with the edges of an arbitrary biregular graph
  with the same vertex partition such that the vertex degrees remain
  unchanged, then $C^i_G(u)=C^i_H(u)$ for any $u\in V(G)$ and any~$i$.
\end{enumerate}
\end{restatable}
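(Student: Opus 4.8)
The plan is to prove both statements by a single induction on $i$, exploiting the fact that the coloring $C^i_G$ is \emph{constant on each cell} of the stable partition $\Pa_G$. Indeed, since the partitions $\Pa^0,\Pa^1,\dots$ form a refinement chain terminating in the stable partition $\Pa_G=\Pa^s$, every cell $Z\in\Pa_G$ is contained in a single color class of $C^i_G$ for every $i$; thus $C^i_G$ takes a single value on $Z$, which I will denote $\gamma_i(Z)$. For the base case $i=0$ we have $C^0_G(u)=c(u)=C^0_H(u)$, because $G$ and $H$ differ only in their edges and share the same initial vertex coloring $c$.

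For the inductive step I assume $C^i_G(u)=C^i_H(u)$ for all $u$ and compare $C^{i+1}_G(u)$ and $C^{i+1}_H(u)$ via the recurrence \refeq{Ci}. The first coordinate, $C^i(u)$, already agrees by the hypothesis, so it remains to establish equality of the neighbor-color multisets
\[
\msetdef{C^i_G(a)}{a\in N_G(u)}=\msetdef{C^i_H(a)}{a\in N_H(u)}.
\]
Here the central observation is that $N_G(u)$ and $N_H(u)$ differ only in vertices lying inside the surgically modified region: in part~(i) only edges with both endpoints in $X$ are altered, and in part~(ii) only edges between $X$ and $Y$. For every neighbor $a$ outside this region one has $a\in N_G(u)\Leftrightarrow a\in N_H(u)$ and $C^i_G(a)=C^i_H(a)$ by the inductive hypothesis, so these vertices contribute identically to both multisets.

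It therefore suffices to match the contribution of the modified neighbors, and this is exactly where the equitability of $\Pa_G$ enters. In part~(i), if $u\notin X$ its neighborhood is untouched; if $u\in X$, then since both $G[X]$ and the replacement graph $(X,E)$ are $d$-regular, $u$ has exactly $d$ neighbors in $X$ in each of $G$ and $H$, and all of them carry the single color $\gamma_i(X)$. Hence both multisets receive the same block consisting of $d$ copies of $\gamma_i(X)$, and the two multisets coincide. Part~(ii) is handled symmetrically: if $u\in X$, the biregularity of $G[X,Y]$ together with the degree-preservation of the replacement forces $u$ to have the same number of neighbors in $Y$ in $G$ and in $H$, all colored $\gamma_i(Y)$; the case $u\in Y$ is identical with the roles of $X$ and $Y$ exchanged, and the case $u\notin X\cup Y$ is trivial. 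In all cases $C^{i+1}_G(u)=C^{i+1}_H(u)$, completing the induction.

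The main thing to get right is the preliminary claim that $C^i_G$ is constant on every cell of $\Pa_G$ for \emph{all} $i$, not merely after stabilization; this is what collapses the multiset comparison to a bare count of neighbors inside the modified cell (or cells). Once this is in place, the remaining argument is purely a matter of checking that degree preservation keeps these counts unchanged, so I expect no serious obstacle beyond carefully separating the cases $u\in X$, $u\in Y$, and $u$ outside the modified cells.
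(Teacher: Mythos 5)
Your proof is correct and follows essentially the same route as the paper's: induction on $i$, splitting each neighborhood into the unmodified part (handled by the induction hypothesis) and the modified part, where degree preservation plus the fact that all vertices of a cell share one $C^i$-color (in $G$ by the refinement chain, and hence in $H$ by the induction hypothesis) makes the multiset contributions match. Your only organizational difference is stating the cell-constancy of $C^i_G$ as an explicit preliminary claim rather than inline, which is a harmless repackaging of the same observation.
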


\newcommand{\prooflocalsurgery}{%
\begin{proofof}{Lemma \ref{lem:local-surgery}}
  We proceed by induction on $i$. In the base case of $i=0$ the claim
  is trivially true. Assume that $C^i_G(a)=C^i_H(a)$ for all $a\in
  V(G)$.  We consider an arbitrary vertex $u$ and prove that
  \begin{equation}
    \label{eq:CiGH}
    C^{i+1}_G(u)=C^{i+1}_H(u). 
  \end{equation}
  {}From now on we treat Parts (i) and (ii) separately.

  \begin{enumerate}[leftmargin=0mm,labelsep=2mm,itemindent=6.5mm,label=(\roman*),topsep=1.5mm,itemsep=1.5mm]
  \setlength{\parindent}{5mm}
  \item Suppose first that $u\notin X$. Since the transformation of
    $G$ into $H$ does not affect the edges emanating from $u$, we have
    $N_G(u)=N_H(u)$.  Looking at the definition \refeq{Ci}, we
    immediately derive \refeq{CiGH} from the induction assumption.

    If $u\in X$, we only have the equality $N_G(u)\setminus
    X=N_H(u)\setminus X$, implying that
    \begin{equation}
      \label{eq:eq1}
      \msetdef{C^i_G(a)}{a\in N_G(u)\setminus X} = \msetdef{C^i_H(a)}{a\in N_H(u)\setminus X}.
    \end{equation}
    The equality $N_G(u)\cap X=N_H(u)\cap X$ is not necessarily true.
    However, $u$ has equally many neighbors from $X$ in $G$ and in
    $H$.  Furthermore, for any two vertices $a$ and $a'$ in $X$ we
    have $C^i_G(a)=C^i_G(a')$ because $X$ is a cell of $G$, and
    $C^i_H(a)=C^i_G(a)=C^i_G(a')=C^i_H(a')$ by the induction
    assumption.  That is, all vertices in $X$ have the same
    $C^i$-color both in $G$ and in $H$.  It follows that
    \begin{equation}
      \label{eq:eq2}
      \msetdef{C^i_G(a)}{a\in N_G(u)\cap X} = \msetdef{C^i_H(a)}{a\in N_H(u)\cap X}.
    \end{equation}
    Combining \refeq{eq1} and \refeq{eq2}, we conclude that
    \refeq{CiGH} holds in any case.
  \item If $u\notin X\cup Y$, we have $N_G(u)=N_H(u)$ and Equality
    \refeq{CiGH} readily follows from the induction assumption.

    Suppose that $u\in Y$. In this case we still have \refeq{eq1} and,
    exactly as in Part (i), we also derive \refeq{eq2}.  Equality
    \refeq{CiGH} follows.  The case of $u\in X$ is symmetric.\qed
  \end{enumerate}
\end{proofof}
}\ifproceedings{The proof of Lemma~\ref{lem:local-surgery} is given in
  the appendix.}{\prooflocalsurgery}%

\newcommand{\proofnecessaryAB}{%
\begin{proofof}{Lemma \ref{lem:necessaryAB}}
  \begin{enumerate}[leftmargin=0mm,labelsep=2mm,itemindent=8mm,label=\normalfont\bfseries(\Alph*),topsep=1.5mm,itemsep=1.5mm]
  \item If $G[X]$ is a graph not from the list, by Lemma
    \ref{lem:johnson}, it is not a unigraph. Hence, we can modify $G$
    locally on $X$ by replacing $G[X]$ with a non-isomorphic regular
    graph with the same parameters. Part (i) of Lemma
    \ref{lem:local-surgery} implies that the resulting graph $H$
    satisfies Equality \refeq{CR-check} for any $i$, implying that CR
    does not distinguish between $G$ and $H$. The graphs $G$ and $H$
    are non-isomorphic because, by Part (i) of Lemma
    \ref{lem:local-surgery} and by Lemma \ref{lem:CR-phi}, an
    isomorphism from $G$ to $H$ would induce an isomorphism from
    $G[X]$ to $H[X]$.  This shows that $G$ is not amenable.
  \item This condition follows, similarly to Condition~{\bfseries A},
    from Lemma \ref{lem:koren} and Part (ii) of
    Lemma~\ref{lem:local-surgery}.\qed
  \end{enumerate}
\end{proofof}
}\ifproceedings{\proofnecessaryAB}{\proofnecessaryAB}%

\section{Global structure of amenable graphs}\label{ss:global}

Recall that $\Pa_G$ is the stable partition of the vertex set of a
graph $G$, and that elements of $\Pa_G$ are called cells.  We define
the auxiliary \emph{cell graph} $C(G)$ of $G$ to be the complete graph
on the vertex set $\Pa_G$ with the following labeling of vertices and
edges. A vertex $X$ of $C(G)$ is called \emph{homogeneous} if the
graph $G[X]$ is either {complete} or {empty} and \emph{heterogeneous}
otherwise.  An edge $\{X,Y\}$ of $C(G)$ is called \emph{isotropic} if
the bipartite graph $G[X,Y]$ is either {complete} or {empty} and
\emph{anisotropic} otherwise. A path $X_1X_2\ldots X_l$ in $C(G)$
where every edge $\{X_i,X_{i+1}\}$ is anisotropic will be referred to
as an \emph{anisotropic path}.  If also $\{X_l,X_1\}$ is an
anisotropic edge, we speak of an \emph{anisotropic cycle}.  In the
case that $|X_1|=|X_2|=\ldots=|X_l|$, such a path (or cycle) is called
\emph{uniform}.

For graphs fulfilling Conditions~{\bfseries A} and~{\bfseries B} of
Lemma~\ref{lem:necessaryAB} we refine the labeling of the vertices and
edges of $C(G)$ as follows.  A heterogeneous cell $X\in\Pa_G$ is
called \emph{matching}, \emph{co-matching}, or \emph{pentagonal}
depending on the type of $G[X]$. Note that a matching or co-matching
cell $X$ always consists of at least 4 vertices. Further, an
anisotropic edge $\{X,Y\}$ is called \emph{constellation} if $G[X,Y]$
is a disjoint union of stars, and \emph{co-constellation} otherwise
(i.e., the bipartite complement of $G[X,Y]$ is a disjoint union of
stars). Likewise, homogeneous cells $X$ (and isotropic edges
$\{X,Y\}$) are called \emph{empty} if the graph $G[X]$
(resp. $G[X,Y]$) is empty, and \emph{complete} otherwise.


Note that if an edge $\{X,Y\}$ of a uniform path/cycle is
\emph{constellation} (resp.\ \emph{co-constellation}), then $G[X,Y]$
is a matching (resp.\ co-matching) graph.

\begin{restatable}{lemma}{necessaryCD}\label{lem:necessaryCD}
  The cell graph $C(G)$ of an amenable graph $G$ has the following
  properties:
\begin{enumerate}[leftmargin=8mm,label=\normalfont\bfseries(\Alph*),topsep=1mm,itemsep=.5mm]\setcounter{enumi}{2}
\item $C(G)$ contains no uniform anisotropic path connecting two
  heterogeneous cells;
\item $C(G)$ contains no uniform anisotropic cycle;
\item $C(G)$ contains neither an anisotropic path $XY_1\ldots Y_lZ$
  such that $|X|<|Y_1|=\ldots=|Y_l|>|Z|$ nor an anistropic cycle
  $XY_1\ldots Y_lX$ such that $|X|<|Y_1|=\ldots=|Y_l|$;
\item $C(G)$ contains no anisotropic path $XY_1\ldots Y_l$ such that
  $|X|<|Y_1|=\ldots=|Y_l|$ and the cell $Y_l$ is heterogeneous.
\end{enumerate}
\end{restatable}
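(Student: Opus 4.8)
The plan is, assuming $G$ is amenable, to derive a contradiction from each of the four forbidden configurations. By Lemma~\ref{lem:necessaryAB} (Conditions (A) and (B)) every cell of $G$ induces an empty or complete graph, a matching, a co-matching, or a $5$-cycle, and every anisotropic edge of $C(G)$ is a constellation or a co-constellation; on a uniform edge these specialize to a matching or a co-matching. For each configuration I would build a graph $H$ on the same vertex set that CR cannot distinguish from $G$ yet with $H\not\cong G$, contradicting amenability.

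The only surgery used is a \emph{twist}: inside the uniform run I pick one anisotropic edge $\{Y_i,Y_{i+1}\}$ and replace $G[Y_i,Y_{i+1}]$ by another biregular graph with the same vertex degrees --- for a uniform matching or co-matching this just means composing the underlying matching $\sigma_i$ with a permutation of $Y_{i+1}$. By Lemma~\ref{lem:local-surgery}(ii) this leaves all refinement colours intact, $C^j_G(u)=C^j_H(u)$ for every $u$ and $j$, so Equality~\refeq{CR-check} holds for all $j$ and CR does not separate $G$ and $H$. Since isomorphisms preserve refinement colours (Lemma~\ref{lem:CR-phi}) and cells are precisely the colour classes, every isomorphism $G\to H$ fixes each cell setwise; it therefore suffices to find a twist that changes some cell-respecting invariant of the cross-edges.

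That invariant is the \emph{horizontal structure} carried by the run. Composing the underlying matchings along $Y_1\dots Y_l$ yields a transport bijection $\sigma\colon Y_1\to Y_l$, and a key preliminary observation is that twisting one run-edge realises an arbitrary $\sigma$: replacing $\sigma_i$ by $\tau\sigma_i$ conjugates $\sigma$ by the fixed remaining factor, and conjugation is a bijection of the symmetric group. In case (D) the run closes into a cycle and $\sigma$ becomes a monodromy permutation $\pi$; the multiset of cycle lengths of $\pi$ (equivalently, the lengths of the horizontal cross-edge cycles) is an isomorphism invariant, and since each run cell has at least two vertices I can twist $\pi$ to a different cycle type. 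In case (C), transporting the two heterogeneous ends to a common cell glues two matchings (resp.\ co-matchings, or two $5$-cycles) by $\sigma$, and the cycle type of their union is the invariant. In cases (E) and (F) a star edge contributes, in place of a matching, a partition of its boundary cell into $s$ fibres of size $t\ge2$; here the invariant is the intersection pattern of the transported fibre-partition with the structure at the far end --- a second fibre-partition in (E), the induced matching or co-matching in (F).

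The step I expect to be the crux is showing that, over all reachable twists, each invariant really takes at least two values. Transporting everything to one cell, this is the group-theoretic statement that the product of the two boundary automorphism groups is a proper subset of $\mathrm{Sym}(k)$, equivalently that one of them fails to act transitively on the cosets of the other. The boundary groups are a hyperoctahedral group $S_2\wr S_m$ (matching or co-matching end), the dihedral group $D_5$ (pentagon end), or an imprimitive wreath product $S_t\wr S_s$ (star interface). Whenever both ends are of the same kind this is immediate, since a structure-stabiliser fixes its own structure while there are at least two structures of that kind (for (D) one simply uses that $\mathrm{Sym}(k)$ has more than one conjugacy class). The genuinely delicate case is (F), where a star interface $S_t\wr S_s$ meets a matching end: here I would argue that $S_t\wr S_s$ cannot act transitively on the perfect matchings of an $st$-set, because it preserves the block-crossing pattern of a matching and, for $s,t\ge2$, matchings with different crossing patterns exist (an all-crossing matching versus one containing a within-block edge). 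Finally, the degenerate parameters are excluded for free: $s=1$ or $t=1$ makes the star edge isotropic, and a pentagon end cannot meet a star interface, as that would force $st=5$ with $s,t\ge2$.
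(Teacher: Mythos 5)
Your overall strategy coincides with the paper's: perform a local surgery that Lemma~\ref{lem:local-surgery} renders invisible to CR, note via Lemma~\ref{lem:CR-phi} that any isomorphism between $G$ and the modified graph $H$ must fix every cell setwise, and then exhibit a cell-respecting invariant of the configuration that the surgery changes. The paper carries out the last step by writing down two explicit configurations in each case (conducting versus non-conducting in~(C), monodromy type $kC_m$ versus $C_{km}$ in~(D), ``flat'' versus ``non-flat'' hypergraphs in~(E) and~(F)); your reduction of that step to the single group-theoretic assertion that the product of the two boundary stabilisers is a proper subset of $\mathrm{Sym}(k)$ is a correct and pleasantly uniform reformulation, and your verifications for (C), (D) and (F) are sound.

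However, your case analysis of the crux has a hole at~(E). There the two boundary structures are fibre-partitions with parameters $(s,t)$ and $(a,b)$, where $st=ab=k$ but possibly $(s,t)\neq(a,b)$ (e.g.\ $(2,6)$ and $(3,4)$ for $k=12$). These are not ``ends of the same kind'' in the sense your immediate argument requires: the partition stabilised by $S_t\wr S_s$ does not itself lie in the $\mathrm{Sym}(k)$-orbit of $(a,b)$-partitions, so the observation that a structure-stabiliser fixes its own structure gives nothing, and transitivity of $S_t\wr S_s$ on $(a,b)$-partitions is not ruled out by it. You must instead run, for this sub-case too, the crossing-pattern argument you reserve for~(F): $S_t\wr S_s$ preserves the intersection pattern of any $(a,b)$-partition with the $(s,t)$-partition, and since $s,t,a,b\ge2$ there exist $(a,b)$-partitions with different patterns --- one in which a block of the smaller size is contained in a block of the larger size, and one, built from stride-$a$ sets exactly as in the paper, in which no containment occurs. (The paper's flat/non-flat toggle handles this uniformly without ever distinguishing whether the parameters agree.) A second, smaller repair: in (E) and (F) the uniform run may consist of a single cell ($l=1$), in which case there is no run edge to twist; the surgery must then be applied to a star edge itself, replacing $G[X,Y_1]$ or $G[Z,Y_l]$ by a different union of stars with the same degrees, which Lemma~\ref{lem:local-surgery}(ii) still permits and which is in fact what the paper does.
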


\newcommand{\proofnecessaryCD}{%
\begin{proof}
  \begin{enumerate}[leftmargin=0mm,labelsep=2mm,itemindent=8mm,label=\normalfont\bfseries(\Alph*),topsep=1.5mm,itemsep=1.5mm]\setcounter{enumi}{2}
  \setlength{\parindent}{5mm}
  \item Suppose that $P$ is a uniform anisotropic path in $C(G)$
    connecting two heterogeneous cells $X$ and $Y$. Let
    $k=|X|=|Y|$. Complementing $G[A,B]$ for each co-constellation edge
    $\{A,B\}$ of $P$, in $G$ we obtain $k$ vertex-disjoint paths
    connecting $X$ and $Y$.  These paths determine a one-to-one
    correspondence between $X$ and $Y$.  Given $v\in X$, denote its
    mate in $Y$ by $v^*$.  Call $P$ \emph{conducting} if this
    correspondence is an isomorphism between $G[X]$ and $G[Y]$, that
    is, two vertices $u$ and $v$ in $X$ are adjacent exactly when
    their mates $u^*$ and $v^*$ are adjacent. In the case that one of
    $X$ and $Y$ is matching and the other is co-matching, we call $P$
    \emph{conducting} also if the correspondence is an isomorphism
    between $G[X]$ and the complement of~$G[Y]$.

    We construct a non-isomorphic graph $H$ such that CR does not
    distinguish between $G$ and $H$. Since $Y$ is heterogeneous, we
    can replace the edges of the subgraph $G[Y]$ with the edges of an
    isomorphic but different subgraph $(Y,E)$. Since also $X$ is
    heterogeneous it follows that $P$ is a conducting path in the
    resulting graph $H$ if and only if $P$ is a non-conducting path in
    $G$. Now, Part~(i) of Lemma \ref{lem:local-surgery} implies that
    CR computes the same stable partition for $G$ and $H$ and does not
    distinguish between them. On the other hand,
    Lemma~\ref{lem:CR-phi} implies that any isomorphism $\phi$ between
    $G$ and $H$ must map each cell to itself.  As $\phi$ must also
    preserve the conducting property along the path $P$, it follows
    that $G$ and $H$ are not isomorphic. Hence, $G$ is not amenable.

  \item Suppose that $C(G)$ contains a uniform anisotropic cycle $Q$
    of length $m$.  All cells in $Q$ have the same cardinality; denote
    it by $k$.  Complementing $G[A,B]$ for each co-constellation edge
    $\{A,B\}$ of $Q$, in $G$ we obtain the vertex-disjoint union of
    cycles whose lengths are multiples of $m$. As two extreme cases,
    we can have $k$ cycles of length $m$ each or we can have a single
    cycle of length $km$.  Denote the isomorphism type of this union
    of cycles by $\tau(Q)$.  Note that this type is isomorphism
    invariant: For an isomorphism $\phi$ from $G$ to another graph
    $H$, $\tau(\phi'(Q))=\tau(Q)$ for the induced isomorphism $\phi'$
    from $C(G)$ to $C(H)$.

    Let $X$ and $Y$ be two consecutive cells in $Q$. We can replace
    the subgraph $G[X,Y]$ with an isomorphic but different bipartite
    graph so that in the resulting graph $H$, $\tau(Q)$ becomes either
    $kC_m$ or $C_{km}$, whatever we wish. In particular, we can
    replace the subgraph $G[X,Y]$ in such a way that $\tau(Q)$ is
    changed.

    Similarly as for Condition~{\bfseries C}, we use Part (ii) of
    Lemma \ref{lem:local-surgery} to argue that CR does not
    distinguish between $G$ and $H$. Furthermore, $G\not\cong H$
    because the types $\tau(Q)$ in $G$ and $H$ are
    different. Therefore, $G$ is not amenable.
  \item Suppose that $C(G)$ contains an anisotropic path $P=XY_1\ldots
    Y_lZ$ such that $|X|<|Y_1|=\ldots=|Y_l|>|Z|$ (for the case of a
    cycle, where $Z=X$, the argument is virtually the same).  Let
    $G[X,Y_1]=sK_{1,t}$ and $G[Z,Y_l]=aK_{1,b}$, where $s,a,t,b\ge2$
    (if any of these subgraphs is a co-constellation, we consider its
    complement). Thus, $|X|=s$, $|Z|=a$, and $|Y_1|=|Y_l|=st=ab$.

    Like in the proof of Condition~{\bfseries C}, the uniform
    anisotropic path $Y_1\ldots Y_l$ determines a one-to-one
    correspondence between the cells $Y_1$ and $Y_l$ that can be used
    to make the identification $Y_1=Y_l=\{1,2,\ldots,st\}=Y$. For each
    $x\in X$, let $Y_x$ denote the set of vertices in $Y$ adjacent to
    $x$. The set $Y_z$ is defined similarly for each $z\in Z$. Note
    that for any $x\ne x'$ in $X$ and $z\ne z'$ in $Z$,
    \begin{equation*}
      |Y_x|=t,\ \ |Y_z|=b,\ \ Y_x\cap Y_{x'}=\emptyset, 
      \text{ \ and \ }Y_z\cap Y_{z'}=\emptyset.
    \end{equation*}
    We regard $\calY_G=\Set{Y_x}_{x\in X}\cup\Set{Y_z}_{z\in Z}$ as a
    hypergraph on the vertex set $Y$. Note that $\calY_G$ might be a
    multi-hypergraph as the two hyperedges $Y_x$ and $Y_z$ might
    coincide for some pairs $(x,z)\in X\times Z$. Without loss of
    generality, we can assume that the hyperedges $Y_x$, $x\in X$,
    form consecutive intervals in $Y$. We call the anisotropic path
    $P$ \emph{flat}, if there exists no pair $(x,z)\in X\times Z$ such
    that one of the two hyperedge $Y_x$ and $Y_z$ is contained in the
    other.

    We construct a non-isomorphic graph $H$ such that CR does not
    distinguish between $G$ and $H$. If $P$ is flat in $G$, we replace
    the edges of the subgraph $G[Z,Y_l]$ by the edges of an isomorphic
    but different biregular graph such that $P$ becomes non-flat in
    the resulting graph $H$. More precisely, we replace the edges in
    such a way that all hyperedges of $\calY_H$ form consecutive
    intervals in $Y$ by letting $\calY_H=\Set{Y_x}_{x\in
      X}\cup\Set{Y_{i}}_{i\in[a]}$, where
    $Y_i=\{(i-1)b+1,\ldots,ib\}$. Likewise, if $P$ is non-flat in $G$,
    we replace the edges of $G[Z,Y_l]$ such that $P$ becomes flat in
    $H$ by letting $\calY_H=\Set{Y_x}_{x\in
      X}\cup\Set{Y_i}_{i\in[a]}$, where $Y_i=\{i,i+a,i+(b-1)a\}$.

    Now, Part~(i) of Lemma \ref{lem:local-surgery} implies that CR
    computes the same stable partition for $G$ and $H$ and does not
    distinguish between them. On the other hand,
    Lemma~\ref{lem:CR-phi} implies that any isomorphism $\phi$ between
    $G$ and $H$ must map each cell to itself.  As $\phi$ must also
    preserve the flatness property of the path $P$, it follows
    that $G$ and $H$ are not isomorphic. Hence, $G$ is not amenable.
  \item Suppose that $C(G)$ contains an anisotropic path $XY_1\ldots
    Y_l$ where $|X|<|Y_1|=\ldots=|Y_l|$ and $Y_l$ is heterogeneous.
    Let $G[X,Y_1]=sK_{1,t}$ (in the case of a co-constellation, we
    consider the complement).  Since $s,t\ge2$ and $|Y_1|=st$, the
    cell $Y_l$ cannot be pentagonal. Considering the complement if
    needed, we can assume without loss of generality that $Y_l$ is
    matching.  Like in the proof of Condition~{\bfseries E}, the
    uniform anisotropic path $Y_1\ldots Y_l$ determines a one-to-one
    correspondence between the cells $Y_1$ and $Y_l$ that can be used
    to make the identification $Y_1=Y_l=\{1,2,\ldots,st\}=Y$.
    Consider the hypergraph $\calY_G=\Set{Y_x}_{x\in X}\cup
    E(G[Y_l])$, where $Y_x=N_G(x)\cap Y_1$ and $E(G[Y_l])$ denotes the
    edge set of $G[Y_l]$.  Now, exactly as in the proof of
    Condition~{\bfseries E}, we can change the isomorphism type of
    $\calY_G$ by replacing the edges of the subgraph $G[X,Y_1]$ by the
    edges of an isomorphic biregular graph. This yields a
    non-isomorphic graph $H$ that is indistinguishable from $G$ by
    CR.\qed
  \end{enumerate}
\end{proof}
}{\proofnecessaryCD}

It turns out that Conditions~{\bfseries A--F} are not only necessary
for amenability (as shown in Lemmas \ref{lem:necessaryAB} and
\ref{lem:necessaryCD}) but also sufficient. As a preparation we first
prove the following Lemma \ref{lem:aniso-comp} that reveals a
tree-like structure of amenable graphs.  By an \emph{anisotropic
  component} of the cell graph $C(G)$ we mean a maximal connected
subgraph of $C(G)$ whose edges are all anisotropic.  Note that if a
vertex of $C(G)$ has no incident anisotropic edges, it forms a
single-vertex anisotropic component.

\begin{restatable}{lemma}{anisocomp}\label{lem:aniso-comp}
  Suppose that a graph $G$ satisfies Conditions~{\normalfont\bfseries
    A--F}.  Then for any anisotropic component $A$ of $C(G)$, the
  following is true.
  \begin{enumerate}[leftmargin=8mm,label=\normalfont\bfseries(\Alph*),topsep=1mm,itemsep=.5mm]\setcounter{enumi}{6}
  \item $A$ is a tree with the following monotonicity property.  Let
    $R$ be a cell in $A$ of minimum cardinality and let $A_R$ be the
    rooted directed tree obtained from $A$ by rooting $A$ at $R$.
    Then $|X|\le|Y|$ for any directed edge $(X,Y)$ of~$A_R$.
  \item $A$ contains at most one heterogeneous vertex.  If $R$ is such
    a vertex, it has minimum cardinality among the cells of~$A$.
  \end{enumerate}
\end{restatable}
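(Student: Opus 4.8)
The plan is to prove the two assertions \textbf{(G)} and \textbf{(H)} in that order, reducing each to one of the forbidden configurations \textbf{C}--\textbf{F} of Lemma~\ref{lem:necessaryCD}. The guiding observation is that \textbf{C}--\textbf{F} are phrased purely in terms of the cardinalities of the cells along an anisotropic path or cycle together with the homogeneous/heterogeneous labels. So I would work entirely with the cardinality function $X\mapsto|X|$ on the cells of the anisotropic component $A$, reading off a forbidden subconfiguration whenever the ``shape'' of this function misbehaves; Conditions \textbf{A} and \textbf{B} enter only through the meaning of ``anisotropic'' and ``heterogeneous''.

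For \textbf{(G)} I would first rule out cycles. Suppose $A$ contains an anisotropic cycle $Q$. If all its cells have equal cardinality, $Q$ is a uniform anisotropic cycle and \textbf{D} is violated. Otherwise the cardinalities form a non-constant cyclic sequence, and I take a maximal run $Y_1\ldots Y_l$ of cells attaining the global maximum cardinality $M$ along $Q$. By maximality the two cells $X,Z$ flanking this run satisfy $|X|,|Z|<M$, so $XY_1\ldots Y_lZ$ is a forbidden path as in \textbf{E} (and in the degenerate case where only one cell lies outside the run, $X=Z$ and we land in the forbidden \emph{cycle} form of \textbf{E}). Hence $A$ is acyclic, and being connected it is a tree. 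For the monotonicity I root $A$ at a minimum-cardinality cell $R$ and suppose some directed edge $(X,Y)$ has $|X|>|Y|$. Along the root-to-$Y$ path $R=Z_0,\ldots,Z_p=Y$ the last step descends, while $|Z_0|=|R|\le|Z_p|$ is minimal; thus the maximum $M'$ of the cardinalities on this path strictly exceeds both endpoints, and the same peak-plateau extraction yields a forbidden \textbf{E}-path.

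For \textbf{(H)} I would proceed in two steps, both using the tree structure and monotonicity just established. First, every heterogeneous cell has minimum cardinality: if a heterogeneous cell $Y$ had $|Y|>|R|$, then along the root-to-$Y$ path monotonicity produces a strict increase followed by a constant stretch ending at $Y$, that is a path $Z_j Z_{j+1}\ldots Y$ with $|Z_j|<|Z_{j+1}|=\cdots=|Y|$ and $Y$ heterogeneous, contradicting \textbf{F}. Second, given two heterogeneous cells $X,Y$, both are minimum cells by the first step; rooting $A$ at $X$, the tree path from $X$ to $Y$ runs away from the root, so monotonicity forces every cell on it to have cardinality $|X|=|Y|$ equal to the minimum, making it a uniform anisotropic path between two heterogeneous cells and contradicting \textbf{C}. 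Hence there is at most one heterogeneous cell, and it is of minimum cardinality, so it may serve as the root $R$.

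The main obstacle is the cycle/monotonicity argument of \textbf{(G)}: Conditions \textbf{D} and \textbf{E} outlaw only the very specific ``uniform cycle'' and ``strict peak plateau'' patterns, so the crux is the elementary but essential observation that every non-constant cyclic (resp.\ path) sequence of cardinalities must contain a maximal run at its global maximum whose immediate neighbours are strictly smaller---exactly the pattern forbidden by \textbf{E}. Care is needed with the degenerate cases (a plateau of length $l=1$, and the cyclic case where the two flanking cells coincide), which is precisely why \textbf{E} is stated in both a path and a cycle variant. Once \textbf{(G)} is in hand, \textbf{(H)} follows routinely from monotonicity together with \textbf{C} and \textbf{F}.
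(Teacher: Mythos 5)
Your proof is correct and takes essentially the same route as the paper's: cycles in $A$ are excluded via Conditions~\textbf{D} (uniform case) and \textbf{E} (non-uniform case, by your peak-plateau extraction, which the paper leaves implicit), monotonicity also follows from \textbf{E}, and part~\textbf{(H)} is reduced to \textbf{C} and \textbf{F}. The only difference is a minor reorganization of \textbf{(H)}: you first prove via \textbf{F} that every heterogeneous cell has minimum cardinality and then get a \textbf{C}-contradiction from the resulting uniform path, whereas the paper first takes the valley-shaped path between two heterogeneous cells, uses \textbf{C} to force a strict inequality, and then contradicts \textbf{F}---the same ingredients in the opposite order.
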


\newcommand{\proofanisocomp}{%
\begin{proof}
  \begin{enumerate}[leftmargin=0mm,labelsep=2mm,itemindent=8mm,label=\normalfont\bfseries(\Alph*),topsep=1.5mm,itemsep=1.5mm]\setcounter{enumi}{6}
  \setlength{\parindent}{5mm}
  \item $A$ cannot contain any uniform cycle by Condition~{\bfseries
      D} and any other cycle by Condition~{\bfseries E}. The
    monotonicity property follows from Condition~{\bfseries E}.
  \item Assume that $A$ contains more than one heterogeneous cell.
    Consider two such cells $S$ and $T$.  Let
    $S=Z_1,Z_2,\ldots,Z_l=T$ be the path from $S$ to $T$ in $A$.  The
    monotonicity property stated in Condition~{\bfseries G} implies that there is $j$
    (possibly $j=1,l$) such that
    $|Z_1|\ge\ldots\ge|Z_j|\le\ldots\le|Z_l|$.  Since the path cannot
    be uniform by Condition~{\bfseries C}, at least one of the
    inequalities is strict.  However, this contradicts
    Condition~{\bfseries F}.

    Suppose that $R$ is a heterogeneous cell in $A$.  Consider now a
    path $R=Z_1,Z_2,\ldots,Z_l=S$ in $A$ where $S$ is a cell with
    the smallest cardinality.  By the monotonicity property and
    Condition~{\bfseries F}, this path must be uniform, proving that
    $|R|=|S|$.\qed
  \end{enumerate}
\end{proof}
}{\proofanisocomp}

In combination with Conditions~{\bfseries A} and {\bfseries B},
Conditions~{\bfseries G} and {\bfseries H} on anisotropic components give a very stringent
characterization of amenability.

\begin{restatable}{theorem}{anisocomptwo}\label{thm:amenable}
  For a graph $G$ the following conditions are equivalent:
  \begin{enumerate}[leftmargin=8.5mm,label=(\roman*),topsep=1mm,itemsep=.5mm]
  \item $G$ is amenable.
  \item $G$ satisfies Conditions~{\normalfont\bfseries A--F}.
  \item $G$ satisfies Conditions~{\normalfont\bfseries A, B, G}
    and~{\bfseries H}.
  \end{enumerate}
\end{restatable}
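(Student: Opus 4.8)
The plan is to establish the theorem as the cycle of implications $(i)\Rightarrow(ii)\Rightarrow(iii)\Rightarrow(i)$. The first two links are already essentially in hand. For $(i)\Rightarrow(ii)$ it suffices to combine Lemmas~\ref{lem:necessaryAB} and~\ref{lem:necessaryCD}, which together show that every amenable graph satisfies Conditions~\textbf{A}--\textbf{F}. For $(ii)\Rightarrow(iii)$ I would invoke Lemma~\ref{lem:aniso-comp}: Conditions~\textbf{A} and~\textbf{B} are shared by both statements, while Conditions~\textbf{G} and~\textbf{H} are deduced there from~\textbf{A}--\textbf{F}. Hence the entire substance of the theorem is the sufficiency direction $(iii)\Rightarrow(i)$, and that is where I would concentrate the work: assuming~\textbf{A}, \textbf{B}, \textbf{G}, \textbf{H}, show that $G$ is amenable.

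To prove $(iii)\Rightarrow(i)$, let $H$ be an arbitrary graph that CR fails to distinguish from $G$. Running CR on the disjoint union $G+H$ yields a common stable coloring, so there is a color-preserving bijection between the cells of $G$ and those of $H$ that preserves cell cardinalities and all the degree data of the induced graphs $G[X]$ and of the bipartite graphs $G[X,Y]$; in particular $C(H)$ inherits the same isotropic/anisotropic labelling as $C(G)$. Now Condition~\textbf{A} makes each $G[X]$ a unigraph and Condition~\textbf{B} makes each $G[X,Y]$ a biregular unigraph, so for corresponding cells $X\leftrightarrow X'$ and $Y\leftrightarrow Y'$ we obtain $G[X]\cong H[X']$ and $G[X,Y]\cong H[X',Y']$. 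The task reduces to gluing these local isomorphisms into one global isomorphism $\phi\colon V(G)\to V(H)$.

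The gluing is driven entirely by the tree structure supplied by Conditions~\textbf{G} and~\textbf{H}. Fix an anisotropic component $A$; by~\textbf{G} it is a tree, which I would root at a cell $R$ of minimum cardinality, chosen to be the unique heterogeneous cell of $A$ whenever one exists (legitimate by~\textbf{H}). Set $\phi_R$ to be any isomorphism $G[R]\to H[R]$. Then I would define the bijections $\phi_X$ cell by cell, descending from the root: if $Y$ is a child of $X$ in $A_R$, the monotonicity clause of~\textbf{G} gives $|X|\le|Y|$, so the anisotropic edge $\{X,Y\}$ (or, in the co-constellation case, its bipartite complement) is a disjoint union of stars with centers in $X$ and leaves partitioning $Y$. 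Having already fixed $\phi_X$, extend it by mapping, for each center $x\in X$, the leaves adjacent to $x$ bijectively onto the leaves adjacent to $\phi_X(x)$ in $H$; this is well defined because $H[X',Y']$ has the identical star structure. Since the anisotropic edges of $C(G)$ are precisely the tree edges of the components, every other edge is isotropic, hence complete or empty bipartite and respected by any bijection; and all non-root cells are homogeneous by~\textbf{H}, so the graphs they induce are complete or empty and again respected by any bijection. Assembling the maps $\phi_X$ over all cells and all components produces a color-preserving bijection $\phi$ that, one verifies cell-pair by cell-pair, preserves adjacency; this is the desired isomorphism $G\cong H$.

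The crux is this last construction, and the delicate point is that a cell $Y$ plays two roles at once: it is the leaf-side of the edge to its parent but the center-side of the edges to its children, so one must ensure that the freedom left when defining $\phi_Y$ from the parent edge never clashes with the constraints imposed by the child edges. Here the tree structure is decisive---because $A$ has no cycles, each cell is reached from the root along a unique path and its only anisotropic neighbours are its parent and its children, so the downward propagation meets no circular dependency---and the monotonicity of~\textbf{G} guarantees that the parent always sits on the center-side of the star, which is exactly what makes the extension canonical. Conditions~\textbf{C}--\textbf{F} do not appear explicitly in this argument, but they are precisely what forces the tree-with-a-single-heterogeneous-root picture that the construction exploits, via Lemma~\ref{lem:aniso-comp}.
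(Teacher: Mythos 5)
Your proposal is correct and follows essentially the same route as the paper's proof: reduce everything to the sufficiency direction, use the stable partition of $G+H$ together with the unigraph properties from Conditions~\textbf{A} and~\textbf{B} to get cell-wise isomorphisms, then root each anisotropic component at its heterogeneous (or minimum-cardinality) cell and propagate the isomorphism down the tree, with monotonicity from Condition~\textbf{G} ensuring the parent cell is always the center side of the star structure. Your closing observation about the absence of circular dependencies is exactly the implicit point that makes the paper's propagation step well defined.
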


\newcommand{\proofanisocomptwo}{%
  \begin{proof} It only remains to show that any graph $G$ fulfilling
    the Conditions~{\normalfont\bfseries A, B, G} and~{\bfseries H} is
    amenable. Let $H$ be a graph indistinguishable from $G$ by
    CR. Then we have to show that $G$ and $H$ are isomorphic.

    Consider the coloring $C^s$ corresponding to the stable partition
    $\Pa^s$ of the disjoint union $G+H$.  Since $G$ and $H$ satisfy
    Equality~\refeq{CR-check} for $i=s$, there is a bijection
    $f:\Pa_G\to\Pa_H$ matching each cell $X$ of the stable partition
    of $G$ to the cell $f(X)\in\Pa_H$ such that the vertices in $X$
    and $f(X)$ have the same $C^s$-color.  Moreover,
    Equality~\refeq{CR-check} implies that $|X|=|f(X)|$.  We claim
    that for any cells $X$ and $Y$ of~$G$,
  \begin{enumerate}[leftmargin=6.5mm,label=(\alph*),topsep=1mm,itemsep=.5mm]
  \item $G[X]\cong H[f(X)]$ and
  \item $G[X,Y]\cong H[f(X),f(Y)]$,
  \end{enumerate}
  implying that $f$ is an isomorphism from $C(G)$ to $C(H)$. 

Indeed, since $X$ and $f(X)$ are cells of the stable partitions
$\Pa_G$ and $\Pa_H$, both $G[X]$ and $H[f(X)]$ are regular.
Since $X\cup f(X)$ is a cell of the stable
  partition $\Pa^s$ of $G+H$, the graphs $G[X]$ and $H[f(X)]$ have the same degree.  By
  Condition~{\bfseries A}, $G[X]$ is a unigraph, implying 
  Property (a). Property (b) follows from Condition~{\bfseries
    B} by a similar argument.

  We now construct an isomorphism $\phi$ from $G$ to $H$.  By Lemma
  \ref{lem:CR-phi}, we should have $\phi(X)=f(X)$ for each cell $X$.
  Therefore, we have to define the map $\phi\function X{f(X)}$ on
  each~$X$.

  By Condition~{\bfseries H}, an anisotropic component $A$ of the cell graph $C(G)$
  contains at most one heterogeneous cell.  Denote it by $R_A$ if it
  exists. Otherwise fix $R_A$ to be an arbitrary cell of the minimum cardinality in~$A$.

  For each $A$, define $\phi$ on $R=R_A$ to be an arbitrary
  isomorphism from $G[R]$ to $H[f(R)]$, which exists according to
  (a). After this, propagate $\phi$ to any other cell in $A$ as
  follows.  By Condition~{\bfseries G}, $A$ is a tree. Let $A_R$ be the directed
  rooted tree obtained from $A$ by rooting it at $R$. Suppose that
  $\phi$ is already defined on $X$ and $(X,Y)$ is an edge in $A$.
By the monotonicity property in Condition~{\bfseries G} and our choice of $R$,
we can assume that $|X|\le|Y|$.
  Then $\phi$ can be extended to $Y$ so that this is an isomorphism from
  $G[X,Y]$ to $H[f(X),f(Y)]$. This is possible by (b) due to the fact
that all vertices in $Y$ have degree 1 in $G[X,Y]$ or its bipartite complement 
(and the same holds for all vertices in $f(Y)$ in the graph $H[f(X),f(Y)]$).

  It remains to argue that the map $\phi$ obtained in this way is
  indeed an isomorphism from $G$ to $H$.  It suffices to show that
  $\phi$ is an isomorphism between $G[X]$ and $H[f(X)]$ for each cell
  $X$ of $G$ and between $G[X,Y]$ and $H[f(X),f(Y)]$ for each pair of
  cells $X$ and~$Y$.

  If $X$ is homogeneous, $f(X)$ is homogeneous of the same type,
  complete or empty, according to (a).  In this case, any $\phi$ is an
  isomorphism from $G[X]$ to $H[f(X)]$.  If $X$ is heterogeneous, the
  assumption of the lemma says that it belongs to a unique anisotropic
  component $A$ (and $X=R_A$).  Then $\phi$ is an isomorphism from
  $G[X]$ to $H[f(X)]$ by construction.

  If $\{X,Y\}$ is an isotropic edge of $C(G)$, then (b) implies that
  $\{f(X),f(Y)\}$ is an isotropic edge of $C(H)$ of the same type,
  complete or empty.  In this case, $\phi$ is an isomorphism from
  $G[X,Y]$ to $H[f(X),f(Y)]$, no matter how it is defined. If
  $\{X,Y\}$ is anisotropic, it belongs to some anisotropic component
  $A$, and $\phi$ is an isomorphism from $G[X,Y]$ to $H[f(X),f(Y)]$ by
  construction.\qed
\end{proof}
}{\proofanisocomptwo}


\vspace{-2.5mm}

\section{Examples and applications}\label{ss:exampl}

\vspace{-1.5mm}

Theorem \ref{thm:amenable} is a convenient tool for verifying
amenability.  For example, amenability of discrete graphs is a
well-known fact. Recall that those are graphs whose stable partitions
consist of singletons. As each cell is a singleton, any anisotropic
component of a discrete graph consists of a single cell. Hence,
Conditions~{\bfseries A} and {\bfseries B} as well as
Conditions~{\bfseries G} and {\bfseries H} on anisotropic components
are fulfilled by trivial reasons.

Checking these four conditions, we can also reprove the amenability of
trees.  Moreover, we can extend this result to the class of
forests. This extension does not seem to be straightforward because
the class of amenable graphs is not closed under disjoint unions.  For
example, $C_3+C_4$ is indistinguishable by CR from $C_7$ and, hence,
is not amenable.

\begin{restatable}{corollary}{forests}\label{cor:forests}
  All forests are amenable.
\end{restatable}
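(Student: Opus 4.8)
The plan is to invoke Theorem~\ref{thm:amenable} and verify that an arbitrary forest $G$ satisfies Conditions~\textbf{A--F}. The whole argument rests on one elementary fact: a graph with minimum degree at least $2$ contains a cycle. Since every subgraph of a forest is acyclic, any subgraph of $G$ that I construct must contain a vertex of degree at most~$1$; I will repeatedly derive a contradiction by building a subgraph in which this fails.

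First I would dispose of the local Conditions~\textbf{A} and~\textbf{B}. For a cell $X$ the graph $G[X]$ is a regular induced subgraph of $G$, hence a regular forest; as a regular graph of degree $\ge 2$ has a cycle, $G[X]$ has degree $0$ or $1$, i.e.\ it is empty or a matching $mK_2$. Likewise $G[X,Y]$ is a biregular bipartite forest; if both sides had degree $\ge 2$ it would contain a cycle, so one side has degree $\le 1$ and $G[X,Y]$ is empty or a disjoint union of stars $sK_{1,t}$. This yields Conditions~\textbf{A} and~\textbf{B} and, crucially, pins down the shape of the anisotropic edges: an anisotropic (constellation) edge $\{X,Y\}$ with $|X|<|Y|$ has the smaller cell $X$ as the set of $s=|X|$ centres and satisfies $t=|Y|/|X|\ge 2$, so every centre has degree $t\ge 2$ in $G[X,Y]$; and along any uniform anisotropic path or cycle (equal cell sizes) every anisotropic edge is a perfect matching.

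The heart of the proof is Conditions~\textbf{C--F}, which I would treat uniformly: each forbidden configuration produces a subgraph of $G$ of minimum degree $\ge 2$. In a uniform anisotropic cycle (Condition~\textbf{D}) the perfect matchings between consecutive cells already make the configuration $2$-regular. In a uniform anisotropic path joining two heterogeneous cells (Condition~\textbf{C}), an interior vertex gets degree $2$ from the two matchings incident to its cell, while each heterogeneous endpoint contributes one path matching edge and one edge of its own internal matching $mK_2$, so every vertex has degree $\ge 2$. For the descending-run configurations of Conditions~\textbf{E} and~\textbf{F}, I would keep the matchings along the uniform run $Y_1\ldots Y_l$ together with the star-forests of the low-to-high edges at the ends: a low-cardinality end cell (the cell $X$, and also $Z$ in the path case of~\textbf{E}, or $X$ again when the run closes up in the cycle case) is a centre cell and so its vertices already have degree $t\ge 2$, whereas each vertex of a $Y_i$ receives degree $2$ either from two run-matchings (interior cells), or from one matching plus an incident star edge at $Y_1$, or, in Condition~\textbf{F}, from one matching plus the internal matching of the heterogeneous endpoint $Y_l$. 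In every case the constructed subgraph has minimum degree $\ge 2$ and hence a cycle, which is impossible in a forest; thus Conditions~\textbf{C--F} hold and $G$ is amenable by Theorem~\ref{thm:amenable}.

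The only real care needed, and the main (if minor) obstacle, is the degree bookkeeping at the ends of these configurations and the degenerate cases such as $l=1$ (where $Y_1=Y_l$): one must check that the low-cardinality endpoints genuinely contribute degree $\ge 2$ as centre cells and that a heterogeneous endpoint always supplies its internal matching edge, so that no vertex of the constructed subgraph is left with degree~$1$.
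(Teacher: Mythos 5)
Your proposal is correct and takes essentially the same route as the paper's proof: both invoke Theorem~\ref{thm:amenable} and dispose of Conditions~\textbf{A} and~\textbf{B} by noting that regular (resp.\ biregular) acyclic graphs are empty or matchings (resp.\ empty or disjoint unions of stars), and both kill the remaining conditions by exhibiting a subgraph of minimum degree at least~$2$, which is impossible in a forest. The only cosmetic difference is that you verify the equivalent characterization~(ii) of the theorem (Conditions~\textbf{A}--\textbf{F}), whereas the paper verifies characterization~(iii) (Conditions~\textbf{A}, \textbf{B}, \textbf{G}, \textbf{H}, i.e.\ tree-shaped anisotropic components with monotonicity and a unique minimum-size heterogeneous cell); the degree-counting arguments are interchangeable, and your handling of the endpoint cells and degenerate cases such as $l=1$ is sound.
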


\begin{proof}
  A regular acyclic graph is either an empty or a matching graph. This
  implies Condition~{\bfseries A}.  Condition~{\bfseries B} follows
  from the observation that biregular acyclic graphs are either empty
  or disjoint unions of stars.

  Let $C^*(G)$ be the version of the cell graph $C(G)$ where all empty
  edges are removed. 
  If $C^*(G)$ contains a cycle, $G$ must contain a cycle as well.
  Therefore, if $G$ is acyclic, then $C^*(G)$ is acyclic too, and any
  anisotropic component of $C(G)$ must be a tree. To prove the
  monotonicity property in Condition~{\bfseries G}, it suffices to
  show that $C(G)$ cannot contain an anisotropic path $XY_1\ldots
  Y_lZ$ with $|X|<|Y_1|=\dots=|Y_l|>|Z|$. But this easily follows
  since in this case each vertex of the induced subgraph $G[X\cup
  Y_1\cup\ldots\cup Y_l\cup Z]$ has degree at least 2 in $G$,
  contradicting the acyclicity of $G$.

  To prove Condition~{\bfseries H}, suppose that $C(G)$ contains an
  anisotropic path $X_0,X_1,\ldots,X_l$ connecting two heterogeneous
  cells $X_0$ and $X_l$. Then each vertex of the induced subgraph
  $G[X_0\cup X_1\cup\ldots\cup X_{l-1}\cup X_l]$ has degree at least 2
  in $G$, a contradiction.  The same contradiction arises if such a
  path connects a heterogeneous cell $X_0$ with an arbitrary cell
  $X_l$, where $|X_l|<|X_{l-1}|$. Hence, $X_0$ must have minimum
  cardinality among all cells belonging to the same anisotropic
  component.\qed
\end{proof}

Our characterization of amenable graphs via Conditions~{\bfseries A},
{\bfseries B}, {\bfseries G} and~{\bfseries H} leads to an efficient
test for amenability of a given graph, that has the same time
complexity as CR. It is known (Cardon and Crochemore \cite{CardonC82};
see also \cite{BerkholzBG13}) that the stable partition of a given
graph $G$ can be computed in time $O((n+m)\log n)$. It is supposed
that $G$ is presented by its adjacency list.

\begin{restatable}{corollary}{recogn}\label{cor:recogn}
  The class of amenable graphs is recognizable in time $O((n+m)\log n)$,
where $n$ and $m$ denote the number of vertices and edges of the input graph.
\end{restatable}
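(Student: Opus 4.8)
The plan is to base the recognition algorithm on the characterization of amenability given by the equivalence of (i) and (iii) in Theorem~\ref{thm:amenable}: a graph $G$ is amenable if and only if its stable partition satisfies Conditions~{\bfseries A}, {\bfseries B}, {\bfseries G}, and~{\bfseries H}. First I would compute the stable partition $\Pa_G$ using the Cardon--Crochemore algorithm, which runs in time $O((n+m)\log n)$; this step dominates the whole computation. It then remains to verify the four conditions, and the goal is to show that, given $\Pa_G$, each verification can be carried out in time $O(n+m)$, so that the overall bound is the one claimed.

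For the \emph{local} Conditions~{\bfseries A} and~{\bfseries B} I would exploit that the stable partition is equitable: every $G[X]$ is regular and every $G[X,Y]$ is biregular, so the relevant subgraphs are completely determined by their degree data, which I can read off from a single representative vertex per cell. For Condition~{\bfseries A}, I pick one vertex of each cell $X$, count its neighbours inside $X$ to obtain the regularity degree $d$, and accept the cell exactly when $d\in\{0,1,|X|-2,|X|-1\}$, or $d=2$ and $|X|=5$; by Lemma~\ref{lem:johnson} this is precisely the condition that $G[X]$ is one of the admissible unigraphs. Summed over all cells this costs $O(n+m)$. For Condition~{\bfseries B} I need, for each pair of adjacent cells $(X,Y)$, the two cross-degrees $c$ (neighbours of an $X$-vertex in $Y$) and $d$ (neighbours of a $Y$-vertex in $X$); by Lemma~\ref{lem:koren} the bipartite graph $G[X,Y]$ is admissible exactly when $c\in\{0,1,|Y|-1,|Y|\}$ or $d\in\{0,1,|X|-1,|X|\}$. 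The crucial point is that I must \emph{not} iterate over all $\Theta(|\Pa_G|^2)$ pairs of cells; instead I scan the adjacency list of one representative of each cell, tally its neighbours by cell, and thereby enumerate only those cell pairs actually joined by an edge of $G$. The total number of such pairs, together with the work to produce them, is $O(n+m)$, and the same scan lets me label each adjacent pair as \emph{isotropic} (here, complete bipartite) or \emph{anisotropic}, and each cell as homogeneous or heterogeneous.

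For the \emph{global} Conditions~{\bfseries G} and~{\bfseries H} I would build the anisotropic subgraph of the cell graph $C(G)$, whose vertices are the cells and whose edges are the anisotropic pairs identified above; this graph has at most $n$ vertices and $O(n+m)$ edges. A single traversal (BFS or DFS, or a union--find pass) of each anisotropic component $A$ lets me check all required properties in time linear in $A$: that $A$ is a tree (connected with $|A|-1$ edges); that, after rooting $A$ at a cell $R$ of minimum cardinality and orienting edges away from $R$, one has $|X|\le|Y|$ for every directed edge $(X,Y)$, which is the monotonicity property of Condition~{\bfseries G}; and that $A$ contains at most one heterogeneous cell, which, when present, is of minimum cardinality in $A$, as required by Condition~{\bfseries H}. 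Accepting $G$ precisely when all four conditions hold yields a correct decision by Theorem~\ref{thm:amenable}, and the running time is dominated by the computation of $\Pa_G$, giving the bound $O((n+m)\log n)$.

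The main obstacle I anticipate lies not in any individual test but in uniformly keeping every step within $O(n+m)$ once $\Pa_G$ is available, in particular avoiding the naive quadratic enumeration of cell pairs implicit in Conditions~{\bfseries B}, {\bfseries G}, and~{\bfseries H}. Equitability is what makes this possible: it guarantees that all the degree quantities are well defined on cells rather than on individual vertices, so that a single pass over the adjacency lists of cell representatives produces all cross-degrees and all adjacent cell pairs, after which every remaining verification reduces to a linear-time graph traversal.
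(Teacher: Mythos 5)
Your proposal is correct and takes essentially the same approach as the paper's proof: compute the stable partition via Cardon--Crochemore, verify Conditions~{\bfseries A} and~{\bfseries B} from per-cell degree data obtained by scanning the adjacency lists of cell representatives (thereby avoiding quadratic enumeration of cell pairs), and check Conditions~{\bfseries G} and~{\bfseries H} by a linear-time search of the cell graph with empty edges removed, rooting each anisotropic component at a minimum-cardinality cell to test monotonicity. The only cosmetic difference is that the paper runs its BFS on the cell graph $C^*(G)$ retaining all non-empty (including complete) edges, while you restrict directly to the anisotropic edges; both variants work identically.
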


\newcommand{\proofrecogn}{%
\begin{proof}
  Using known algorithms, we first compute the stable partition
  $\Pa_G=\{X_1,\dots,X_k\}$ of the input graph $G$. Let $C^*(G)$ be
  the version of the cell graph $C(G)$ where all empty edges are
  removed. We can compute the adjacency list of each vertex $X_i$
  of~$C^*(G)$ by traversing the adjacency list of an arbitrary vertex
  $u\in X_i$ and listing all cells $X_j$ that contain a vertex $v$
  adjacent to $u$. Simultaneously, we compute for each pair $(i,j)$
  such that $i=j$ or~$\{X_i,X_j\}$ is an edge of~$C^*(G)$ the
  number $d_{ij}$ of neighbors in $X_j$ of any vertex in
  $X_i$. Knowing the numbers $|X_i|$, $|X_j|$ and $d_{ij}$ allows us
  to determine whether all the subgraphs $G[X_i]$ and $G[X_i,X_j]$
  fulfill Conditions~{\bfseries A} and~{\bfseries B} of Lemma
  \ref{lem:necessaryAB}.

  To check Conditions~{\bfseries G} and~{\bfseries H} we use
  breadth-first search in the graph~$C^*(G)$ to find all anisotropic
  components $A$ of $C(G)$ and, simultaneously, to check that each
  component $A$ is a tree containing at most one heterogeneous
  cell. If we restart the search from an arbitrary cell in $A$ having
  minimum cardinality, we can also check for each forward edge of the
  resulting search tree whether the monotonicity property of
  Condition~{\bfseries G} is fulfilled.  \qed
\end{proof}
}{\proofrecogn}%

We conclude \ifcombined{this section }by considering logical aspects of
our result.  A \emph{counting quantifier} $\exists^m$ opens a sentence
saying that there are at least $m$ elements satisfying some property.
Immerman and Lander \cite{ImmermanL90} discovered an intimate
connection between color refinement and 2-variable first-order logic
with counting quantifiers.  This connection implies that amenability
of a graph is equivalent to its definability in this logic. Thus,
Corollary \ref{cor:recogn} asserts that the class of graphs definable
by a first-order sentence with counting quantifiers and occurrences of
just 2 variables is recognizable in polynomial time.
}

\ifcolref{\clearpage}{%
\section{Amenable graphs are compact}\label{sec:compact}

An $n\times n$ real matrix $X$ is \emph{doubly stochastic} if its
elements are nonnegative and all its rows and columns sum up to 1.
Doubly stochastic matrices are closed under products and convex
combinations. The set of all $n\times n$ doubly stochastic matrices
forms the \emph{Birkhoff polytope} $B_n\subset \reals^{n^2}$.
\emph{Permutation matrices} are exactly 0-1 doubly stochastic matrices.
By Birkhoff's Theorem, the $n!$ permutation matrices form precisely
the set of all extreme points of $B_n$. Equivalently, every doubly
stochastic matrix is a convex combination of permutation matrices.

Let $G$ and $H$ be graphs with vertex set $\{1,\ldots,n\}$.  An
isomorphism $\pi$ from $G$ to $H$ can be represented by the
permutation matrix $P_\pi=(p_{ij})$ such that $p_{ij}=1$ if and only if $\pi(i)=j$. Denote the
set of matrices $P_\pi$ for all isomorphisms $\pi$ by $\Iso(G,H)$, and
let $\Aut(G)=\Iso(G,G)$. 

Let $A$ and $B$ be the adjacency matrices of graphs $G$ and $H$
respectively. If the graphs are uncolored, a permutation matrix $X$ is in $\Iso(G,H)$ if
and only if $AX=XB$. 
For vertex-colored graphs, $X$ must additionally satisfy the condition
$X[u,v]=0$ for all pairs of differently colored $u$ and $v$, i.e.,
this matrix must be block-diagonal with respect to the color classes.
We say that (vertex-colored) graphs $G$ and $H$
are \emph{fractionally isomorphic} if $AX=XB$ for a doubly stochastic
matrix $X$, where $X[u,v]=0$ if $u$ and $v$ are of different
colors. The matrix $X$ is called a \emph{fractional isomorphism}.

Denote the set of all fractional isomorphisms from $G$ to $H$ by
$\ds{G,H}$ and note that it forms a polytope in $\reals^{n^2}$. 
The set of isomorphisms $\Iso(G,H)$ is
contained in $\extr{\ds{G,H}}$, where $\extr{Z}$ denotes the set of
all extreme points of a set $Z$. Indeed, $\Iso(G,H)$ is the set of
integral extreme points of $\ds{G,H}$. 

The set $\ds G=\ds{G,G}$ is the
polytope of \emph{fractional automorphisms} of $G$.
A graph $G$ is called \emph{compact} \cite{Tinhofer86} if $\ds G$ has
no other extreme points than $\Aut(G)$, i.e., $\extr{\ds G}=\Aut(G)$.
Compactness of $G$ can equivalently be defined by any of the following
two conditions:
\begin{itemize}[topsep=0mm]
\item The polytope $\ds G$ is integral;
\item Every fractional automorphism of $G$ is a convex combination of
  automorphisms of $G$, i.e., $\ds G=\langle\Aut(G)\rangle$, where
  $\langle Z\rangle$ denotes the convex hull of a set $Z$.
\end{itemize}

\begin{myexample}\label{ex:}
  Complete graphs are compact as a consequence of Birkhoff's theorem.
  The compactness of trees and cycles is established in \cite{Tinhofer86}. 
Matching graphs
  $mK_2$ are also compact. This is a particular instance of a much more
  general result by Tinhofer \cite{Tinhofer91}: If $G$ is compact,
  then $mG$ is compact for any~$m$. Tinhofer \cite{Tinhofer91} also
  observes that compact graphs are closed under complement.

  For a negative example, note that the graph $C_3+C_4$ is not
  compact. This follows from a general result in \cite{Tinhofer91}:
  All regular compact graphs must be vertex-transitive (and $C_3+C_4$
  is not).
\end{myexample}

 Tinhofer \cite{Tinhofer91} noted that, if
 $G$ is compact, then for any graph $H$, either all or
  none of the extreme points of the polytope $\ds{G,H}$ are integral.
As mentioned in the introduction, this yields a
linear-programming based polynomial-time algorithm to test if a
compact graph $G$ is isomorphic to any other given graph $H$.
The following
result shows that Tinhofer's approach works for all amenable graphs.


\begin{theorem}\label{thm:compact}
  All amenable graphs are compact. 
\end{theorem}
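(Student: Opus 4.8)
The plan is to show that every fractional automorphism of an amenable graph $G$ is a convex combination of automorphisms, i.e.\ $\extr{\ds G}=\Aut(G)$, equivalently $\ds G=\langle\Aut(G)\rangle$. I would exploit the structural description of amenable graphs from Theorem~\ref{thm:amenable} (Conditions~\textbf{A}, \textbf{B}, \textbf{G}, \textbf{H}) together with the compactness of the individual building blocks.

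\emph{Step 1 (block-diagonalization).} First I would prove that any $X\in\ds G$ vanishes across cells of the stable partition: $X[u,v]=0$ whenever $u$ and $v$ lie in different cells of $\Pa_G$. This follows by induction on the refinement rounds, showing $X[u,v]>0\Rightarrow C^i(u)=C^i(v)$. The base case is the color constraint $X[u,v]=0$ for differently colored $u,v$. The inductive step uses $AX=XA$ and double stochasticity: summing this identity over a color class $Q$ of $C^i$ yields that $|N(u)\cap Q|$ equals the $X[u,\cdot]$-weighted average of the counts $|N(a)\cap Q|$ over the current $C^i$-class of $u$; a maximality argument on these counts then forces $X$ to respect the finer classes of round $i+1$.

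\emph{Step 2 (reduction to one anisotropic component).} Once $X$ is block-diagonal, the diagonal blocks $X_C$ are fractional automorphisms of the regular graphs $G[C]$, while the off-diagonal equations reduce to the intertwining relations $A[C,D]X_D=X_C A[C,D]$ for pairs of cells $C,D$. For an isotropic edge (where $G[C,D]$ is complete or empty) this relation holds automatically for all doubly stochastic blocks, so only anisotropic edges impose constraints. Since anisotropic edges join cells only within a common anisotropic component of $C(G)$, the polytope $\ds G$ factors as a direct product of polytopes, one per anisotropic component (single homogeneous or heterogeneous cells forming one-cell components). As the extreme points of a product polytope are the products of extreme points, it suffices to establish integrality of the extreme points for a single anisotropic component $A$.

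\emph{Step 3 (tree induction inside a component).} By Conditions~\textbf{G} and~\textbf{H}, $A$ is a tree with a minimum-cardinality root $R$ that is the only possible heterogeneous cell, and along every edge directed away from $R$ the cardinality does not decrease. I would argue integrality by peeling a non-root leaf cell $L$ of the tree, whose tree-neighbor $P$ satisfies $|P|\le|L|$; thus $G[P,L]$ (after complementing in the co-constellation case, which leaves the intertwining relation unchanged since $JX_L=J=X_PJ$) is a star forest $sK_{1,t}$ with centers $P$ and leaf-groups in $L$. A direct computation of $A[P,L]X_L=X_P A[P,L]$ shows that once $X_P$ is a permutation matrix inducing a center permutation $\sigma$, the block $X_L$ is forced to be supported on the group-blocks permuted by $\sigma$, each such $t\times t$ block being doubly stochastic; Birkhoff's theorem applied per group then yields integrality of $X_L$. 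The base case is the root block $X_R$, a fractional automorphism of one of the five graphs allowed by Condition~\textbf{A}: empty and complete graphs are compact by Birkhoff, while matching, co-matching, and the $5$-cycle are compact by Example~\ref{ex:} and closure under complement.

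\emph{Main obstacle.} The crux is turning this peeling into a valid extremality argument for the coupled polytope $S_A$: one must show that an extreme point projects to an extreme point after deleting $L$, so that induction applies and $X_P$ is already integral. I would do this using an explicit \emph{uniform lift} $X_L^{\mathrm{unif}}(X_P)$ obtained by spreading $X_P[p_i,p_k]/t$ uniformly over the block from leaf-group $i$ to leaf-group $k$; this lift is doubly stochastic, satisfies the intertwining relation, and depends linearly on $X_P$, which lets any nontrivial convex splitting of the smaller polytope be lifted back to $S_A$ and contradict extremality. The only delicate point is maintaining nonnegativity where $X_L$ has zero entries, and this is precisely where the degree-$1$ structure of the leaves in the star forest and the monotone orientation $|P|\le|L|$ are used to keep the support under control.
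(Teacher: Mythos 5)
Your skeleton — Ramana-style block-diagonalization, factoring $\ds G$ over anisotropic components, then peeling leaf cells of the component tree using the star structure $sK_{1,t}$ and Birkhoff's theorem on the $t\times t$ blocks — matches the paper's proof closely, and your Step~2 (product of polytopes, extreme points of a product are products of extreme points) is in fact a cleaner way to handle several components than the paper's matrix-product computation. However, the step you yourself call the crux has a genuine gap, and it is not quite the one you flag. With the uniform lift as stated, if $X_{\mathrm{rest}}=\lambda X'_{\mathrm{rest}}+(1-\lambda)X''_{\mathrm{rest}}$ is a nontrivial splitting of the projection, then the two lifted points average to $(X_{\mathrm{rest}},\,X_L^{\mathrm{unif}}(X_P))$, which is \emph{not} the original extreme point $X$ unless $X_L$ happens to equal the uniform lift; so extremality of $X$ is never contradicted (recall that projections of extreme points of a polytope are in general not extreme points of the projection, so some such argument is genuinely needed). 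If instead you carry along the offset $D=X_L-X_L^{\mathrm{unif}}(X_P)$ so that the average does reconstruct $X$, then nonnegativity fails for real: writing the block of $X_L$ from leaf-group $i$ to leaf-group $k$ as $y_{ik}W^{(i,k)}$ with $W^{(i,k)}$ doubly stochastic, the lifted entry at a position where $W^{(i,k)}[u,v]=0$ equals $\bigl(X'_P[p_i,p_k]-X_P[p_i,p_k]\bigr)/t$, which is negative whenever the splitting decreases that entry of $X_P$ and $t\ge2$. Neither the degree-$1$ leaf structure nor the orientation $|P|\le|L|$ prevents this.

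The repair is to make the lift multiplicative rather than uniform: first prove, for an \emph{arbitrary} doubly stochastic $X_P$ (not only a permutation, as in your Step~3), that the two intertwining relations force every block of $X_L$ to have the form $y_{ik}W^{(i,k)}$ with $W^{(i,k)}$ doubly stochastic — this is exactly Claim~\ref{cl:dsD} in the paper — and then lift a splitting $X'_P$ by taking the $(i,k)$ block of $X'_L$ to be $X'_P[p_i,p_k]\,W^{(i,k)}$, reusing the factors $W^{(i,k)}$ extracted from $X$ itself. This lift is doubly stochastic, satisfies the intertwining relations, is nonnegative by construction, and its average over the splitting reconstructs $X$ exactly, so extremality is contradicted and your induction goes through. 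Alternatively, you can sidestep extreme-point projection entirely, which is what the paper does: it inducts on the statement that every fractional automorphism of $G[V_1\cup\dots\cup V_i]$ is a convex combination of automorphisms, writes the restricted matrix as $\sum\alpha_{P',P}\,\ddiag{P'}P$ over $\Aut(H)$, uses Claim~\ref{cl:dsD} to peel off the $W^{(k,\ell)}$ factors, and reassembles via Birkhoff — no claim about extreme points of projections is ever needed.
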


\ifproceedings{The proof of Theorem \ref{thm:compact} is in the
  appendix (see Section~\ref{sec:proofcompact}).}{We defer the proof
  to the next section.}  Theorem \ref{thm:compact} unifies and extends
several earlier results providing examples of compact graphs.  In
particular, it gives another proof of the fact that almost all graphs
are compact, which also follows from a result of Godsil
\cite[Corollary 1.6]{Godsil97}.  Indeed, while Babai, Erd{\"o}s, and
Selkow \cite{BabaiES80} proved that almost all graphs are discrete,
we already mentioned in Section \ref{sec:intro} that all
discrete graphs are amenable.

Furthermore, Theorem \ref{thm:compact} reproves Tinhofer's result that
trees are compact.\footnote{The proof of Theorem~\ref{thm:compact}
  uses only compactness of complete graphs, matching graphs, and the
  5-cycle.}  Since also forests are amenable \cite{arxiv}, we can
extend this result to forests.  This extension is not straightforward
as compact graphs are not closed under disjoint union; see Example
\ref{ex:}.  In \cite{Tinhofer89}, Tinhofer proves compactness for the
class of \emph{strong tree-cographs}, which includes forests only with
pairwise non-isomorphic connected components.

Compactness of unigraphs, which also follows from Theorem
\ref{thm:compact}, seems to be earlier never observed. Summarizing, we
note the following result.

\begin{corollary}
  Discrete graphs, forests, and unigraphs are compact.
\end{corollary}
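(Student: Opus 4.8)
The plan is to reduce the entire statement to amenability and then invoke Theorem~\ref{thm:compact}, which guarantees that every amenable graph is compact. Thus it suffices to show that each of the three classes in question---discrete graphs, forests, and unigraphs---is contained in the class of amenable graphs. For forests this is immediate, as it is exactly the content of Corollary~\ref{cor:forests}.

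For discrete graphs I would appeal to the structural characterization of amenability provided by Theorem~\ref{thm:amenable}, in the form of equivalence (iii). If the stable partition $\Pa_G$ consists only of singletons, then every anisotropic component of the cell graph $C(G)$ is a single vertex, so Conditions~\textbf{G} and~\textbf{H} hold vacuously; meanwhile Conditions~\textbf{A} and~\textbf{B} hold automatically, since a one-vertex induced subgraph $G[X]$ and a bipartite graph $G[X,Y]$ on two singleton cells are trivially of the permitted types (empty or complete). Hence every discrete graph satisfies the conditions of Theorem~\ref{thm:amenable} and is amenable.

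For unigraphs I would argue directly from the definition of amenability rather than through the structural conditions. Suppose $G$ is a unigraph and $H$ is indistinguishable from $G$ by CR, i.e., Equality~\refeq{CR-check} holds for all $i$; the goal is to conclude that $H\cong G$. The key observation is that the refinement step~\refeq{Ci} already encodes vertex degrees: for an uncolored graph the initial coloring $C^0$ is constant, so $C^1(u)$ is determined by $|N(u)|=\deg(u)$. Consequently, equality of the color multisets at level $i=1$ forces $G$ and $H$ to have the same degree sequence. Since a unigraph is, by definition, determined up to isomorphism by its degree sequence, this yields $H\cong G$, and therefore $G$ is amenable.

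Combining the three amenability claims with Theorem~\ref{thm:compact} then gives the corollary. I do not expect any genuine obstacle here: the only step that is not completely formal is the unigraph case, and even there the sole ingredient beyond the definitions is the elementary remark that CR-indistinguishability preserves the degree sequence (equivalently, that a fractional isomorphism matches up degrees), which follows from the very first refinement round.
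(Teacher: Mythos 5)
Your proposal is correct and follows essentially the same route as the paper: the corollary is obtained by combining Theorem~\ref{thm:compact} with the amenability of the three classes, where forests are handled by Corollary~\ref{cor:forests}, discrete graphs by observing (as the paper does in Section~\ref{ss:exampl}) that Conditions~\textbf{A}, \textbf{B}, \textbf{G}, \textbf{H} hold trivially when all cells are singletons, and unigraphs by the observation that CR-indistinguishability preserves the degree sequence. Your unigraph argument merely spells out what the paper treats as an obvious fact in its introduction, so there is no substantive difference.
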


\newcommand{\proofcompact}{%
\section{Proof of Theorem \protect\ref{thm:compact}}\label{sec:proofcompact}

We will use a known fact on the structure of fractional automorphisms.
For a partition $V_1,\dots,V_m$ of $\{1,\ldots,n\}$ let
$X_1,\dots,X_m$ be matrices, where the rows and columns of $X_i$ are
indexed by elements of $V_i$. Then we denote the block-diagonal matrix
with blocks $X_1,\dots,X_m$ by $\diag{X_1}{\dots}{X_m}$.
\begin{lemma}[Ramana et al.~\cite{RamanaSU94}]\label{lem:Ramana}
  Let $G$ be a (vertex-colored) graph on vertex set $\{1,\ldots,n\}$
  and assume that the elements $V_1,\dots,V_m$ of the stable partition
  $\Pa_G$ of $G$ are intervals of consecutive integers.  Then any
  fractional automorphism $X$ of $G$ has the form
  $X=\diag{X_1}{\dots}{X_m}$.
\end{lemma}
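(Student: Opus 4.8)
The plan is to prove the stronger statement that any fractional automorphism $X$ is block-diagonal with respect to \emph{every} intermediate partition $\Pa^i$ produced by color refinement, and then to specialize to the index $s$ at which $\Pa^s=\Pa_G$. Concretely, I would show by induction on $i$ that $X[u,v]=0$ whenever $C^i(u)\neq C^i(v)$. The base case $i=0$ is exactly the color-compatibility constraint built into the definition of a fractional automorphism (we have $X[u,v]=0$ for differently colored $u,v$, and $C^0=c$). Once the claim holds for $i=s$, the cells $V_1,\dots,V_m$ of $\Pa_G$ are precisely the $C^s$-classes, so $X[u,v]$ vanishes across distinct cells; since the cells are intervals of consecutive integers, this is exactly the asserted form $X=X_1\oplus\cdots\oplus X_m$.

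For the inductive step, assume $X[u,v]=0$ whenever $C^i(u)\neq C^i(v)$. Then each row and each column of $X$ is supported on a single $C^i$-class, so restricting $X$ to a class $D$ yields a doubly stochastic block $X_D$. I would extract the refinement information from $AX=XA$ by summing its $(u,w)$-entry over all $w$ in a fixed $C^i$-class $E$. Writing $d_E(v)=|N(v)\cap E|$ and using the induced block structure, the left-hand side $\sum_{w\in E}(AX)[u,w]$ collapses to $d_E(u)$ (each neighbor $a\in N(u)$ contributes its full row-sum $1$ if $a\in E$ and $0$ otherwise), while the right-hand side $\sum_{w\in E}(XA)[u,w]$ collapses to $\sum_{b\in D}X[u,b]\,d_E(b)$. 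Hence for every $u\in D$ and every class $E$,
\begin{equation*}
 d_E(u)=\sum_{b\in D}X[u,b]\,d_E(b),
\end{equation*}
that is, each degree vector $\delta_E=(d_E(b))_{b\in D}$ is fixed by $X_D$. By the recurrence \refeq{Ci}, two vertices of $D$ lie in the same $C^{i+1}$-class exactly when they share the same profile $(d_E)_E$, so it remains to prove that $X_D$ is block-diagonal with respect to these profile classes.

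This last deduction is the crux of the argument, and the step I expect to be the main obstacle. Since distinct $C^{i+1}$-classes inside $D$ have distinct degree profiles, I can choose real weights $c_E$ so that $w=\sum_E c_E\,\delta_E$ takes pairwise distinct values on these classes; by linearity $X_D w=w$. I would then run an extremal argument on the doubly stochastic matrix $X_D$: let $D'$ be the profile class on which $w$ attains its maximum. For $u\in D'$, the identity $(X_D w)(u)=w(u)$ writes the maximal value of $w$ as a convex combination (row $u$ of $X_D$) of values all no larger, forcing $X[u,b]=0$ whenever $w(b)<w(u)$; thus the rows indexed by $D'$ are supported on columns in $D'$. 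Double stochasticity then forces the columns indexed by $D'$ to be supported on rows in $D'$, so $D'$ decouples from the rest of $D$; deleting $D'$ leaves a smaller doubly stochastic matrix satisfying the same hypothesis, and induction on the number of profile classes completes the step. The genuine difficulty is precisely this passage from ``$X_D$ fixes all the degree vectors'' to ``$X_D$ respects their common level-set partition''; everything else is bookkeeping with the refinement recurrence and Birkhoff-type double stochasticity.
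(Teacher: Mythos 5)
Your proof is correct, but there is essentially nothing in the paper to compare it against: the paper states Lemma~\ref{lem:Ramana} as a quoted result of Ramana, Scheinerman, and Ullman \cite{RamanaSU94} and gives no proof, using it only as a black box (to reduce compactness of $G$ to compactness with respect to its stable coloring). So what you have produced is a self-contained proof of an imported fact. Checking your steps: the base case is precisely the color constraint in the paper's definition of a fractional automorphism of a vertex-colored graph; the two collapses $\sum_{w\in E}(AX)[u,w]=d_E(u)$ and $\sum_{w\in E}(XA)[u,w]=\sum_{b\in D}X[u,b]\,d_E(b)$ use the inductive hypothesis legitimately (on rows of $X$ for the first, on the row indexed by $u$ for the second); the identification of the $C^{i+1}$-classes inside $D$ with the level sets of the profile $(d_E)_E$ is exactly the refinement rule \refeq{Ci}; generic separating weights $c_E$ exist because only finitely many hyperplanes must be avoided; and the decoupling induction (rows of the maximal level set decouple by convexity, columns then decouple by mass counting, then delete and recurse) is sound --- note it uses only row and column sums, not Birkhoff's theorem, despite your closing remark. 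For the record, the original argument in \cite{RamanaSU94} takes a genuinely different route: one shows that the partition of $V(G)$ into connected components of the \emph{support} of $X$ is an equitable partition, and then invokes the fact that the stable partition is the coarsest equitable partition, so the support components refine $\Pa_G$, which is exactly the asserted block-diagonal form. Your induction along refinement rounds with the maximum principle trades that lattice-theoretic fact for a longer but elementary and fully self-contained argument, and it delivers the formally stronger (in fact equivalent, since $\Pa_G$ refines every $\Pa^i$) statement that $X$ respects each intermediate partition $\Pa^i$.
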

Note that the assumption of the lemma can be ensured for any graph by
appropriately renaming its vertices. An immediate consequence of
Lemma~\ref{lem:Ramana} is that a graph $G$ is compact if and only if it is
compact with respect to its stable coloring.

Given an amenable graph $G$ and a fractional automorphism $X$ of $G$,
we have to express $X$ as a convex combination of permutation matrices
in $\Aut(G)$. Our proof strategy consists in exploiting the structure
of amenable graphs as described by Theorem~\ref{thm:amenable}.  Given
an anisotropic component $A$ of the cell graph $C(G)$, we define the
\emph{anisotropic component $G_A$ of $G$} as the subgraph of $G$
induced by the union of all cells belonging to $A$.  Our overall idea
is to prove the claim separately for each anisotropic component $G_A$,
applying an inductive argument on the number of cells in $A$. A key
role will be played by the fact that, according to
Theorem~\ref{thm:amenable}, $A$ is a tree with at most one
heterogeneous vertex.

By Lemma \ref{lem:Ramana}, we can assume that $G$ is colored by the
stable coloring. We first consider the case when $G$ consists of a
single anisotropic component. By Theorem~\ref{thm:amenable}, the
corresponding cell graph $C(G)$ has at most one heterogeneous vertex,
and the anisotropic edges form a spanning tree of $C(G)$. Without loss
of generality, we can number the cells $V_1,\dots,V_m$ of $G$ so that
$V_1$ is the unique heterogeneous cell if it exists; otherwise $V_1$
is chosen among the cells of minimum cardinality.  Moreover, we can
suppose that, for each $i\le m$, the cells $V_1,\dots,V_i$ induce a
connected subgraph in the tree of anisotropic edges of~$C(G)$.

We will prove this case by induction on the number $m$ of cells.  In
the base case of $m=1$, our graph $G=G[V_1]$ is one of the graphs
listed in Condition~{\bfseries A} of Theorem~\ref{thm:amenable}. All
of them are known to be compact; see Example \ref{ex:}.  As induction
hypothesis, assume that the graph $H=G[V_1\cup\dots\cup V_{m-1}]$ is
compact.  For the induction step, we have to show that also
$G=G[V_1\cup\dots\cup V_m]$ is compact.

Denote $D=V_m$. Since $G$ has no more than one heterogeneous cell,
$G[D]$ is complete or empty.  It will be instructive to think of $D$
as a ``leaf'' cell having a unique anisotropic link to the remaining
part $H$ of $G$.  Let $C\in\{V_1,\dots,V_{m-1}\}$ be the unique cell
such that $\{C,D\}$ is an anisotropic edge of $C(G)$.  To be specific,
suppose that $G[C,D]\cong sK_{1,t}$.  If $G[C,D]$ is 
the bipartite complement of $sK_{1,t}$,
we can consider the complement of $G$ and use the
facts that the class of amenable graphs is closed under
complementation and that complementation does not change
fractional isomorphisms of the graph.
By the monotonicity property stated in
Condition~{\bfseries C} of Theorem~\ref{thm:amenable}, $|C|=s$ and
$|D|=st$. Let $ C=\{c_1,c_2,\dots,c_s\} $ and, for each $i$,
$N(c_i)\subset D$ be the neighborhood of $c_i$ in $G[C,D]$.  Thus,
$D=\bigcup_{i=1}^s N(c_i)$.

Let $X$ be a fractional automorphism of $G$. It is
convenient to break it up into three blocks
$X = \diag{X'}YZ$,
where $Y$ and $Z$ correspond to $C$ and $D$ respectively, and $X'$ is
the rest. By induction hypothesis we have the convex combination
\begin{equation}\label{eq:ds-C}
\ddiag{X'}Y = \sum_{\ddiag{P'}P\in\Aut(H)} \alpha_{P',P}\, \ddiag{P'}P,
\end{equation}
where $\ddiag{P'}P$ are permutation matrices corresponding to
automorphisms $\pi$ of the graph $H$, such that the permutation matrix
block $P$ denotes the action of $\pi$ on the color class $C$ and $P'$
the action on the remaining color classes of $H$.

We need to show that $X$ is a convex combination of automorphisms of
$G$. Let $A$ denote the adjacency matrix of $G$, and $A_{S,T}$ denote
the submatrix of $A$ 
row-indexed by $S\subset V(G)$ and column-indexed by $T\subset V(G)$.
Since $X$ is a fractional automorphism of $G$, we have
  $XA=AX$.  
Recall that $Y$ and $Z$ are blocks of $X$ corresponding to color
classes $C$ and $D$. Looking at the corner fragments of the matrices
$XA$ and $AX$, we get
\[
\left(
\begin{array}{cc}
  Y & \bigzero \\ 
  \bigzero &  Z \\
\end{array}
\right)
\left(
\begin{array}{cc}
  A_{C,C} & A_{C,D} \\ 
  A_{D,C} &  A_{D,D} \\
\end{array}
\right) = 
\left(
\begin{array}{cc}
  A_{C,C} & A_{C,D} \\ 
  A_{D,C} &  A_{D,D} \\
\end{array}
\right)
\left(
\begin{array}{cc}
  Y & \bigzero \\ 
  \bigzero &  Z \\
\end{array}
\right),
\]
which implies
\begin{eqnarray}\label{eq:ds-block1}
  Y A_{C,D} & = & A_{C,D}\, Z,\\ 
  A_{D,C}\, Y & = & Z\, A_{D,C}. \label{eq:ds-block2}
\end{eqnarray}

Consider $Z$ as an $st\times st$ matrix whose rows and columns are
indexed by the elements of sets $N(c_1), N(c_2),\dots, N(c_r)$ in that
order.  We can thus think of $Z$ as an $s\times s$ block matrix of
$t\times t$ matrix blocks $Z^{(k,\ell)}, 1\le k,\ell \le s$.  The next
claim is a consequence of Equations~\refeq{ds-block1} and
\refeq{ds-block2}.

\begin{restatable}{myclaim}{dsD}\label{cl:dsD}
  Each block $Z^{(k,\ell)}$ in $Z$ is of the form
  \begin{equation}\label{eq:ds-D}
    Z^{(k,\ell)}=y_{k,\ell}W^{(k,\ell)},
  \end{equation}
  where $y_{k,\ell}$ is the $(k,\ell)^{th}$ entry of $Y$, and
  $W^{(k,\ell)}$ is a doubly stochastic matrix.
\end{restatable}

\newcommand{\proofdsD}{%
\begin{proof}
  We first note from Equation~\refeq{ds-block1} that the $(k,j)^{th}$
  entry of the $s\times st$ matrix $Y A_{C,D} = A_{C,D} Z$ can be
  computed in two different ways. In the left hand side matrix, it is
  $y_{k,\ell}$ for each $j\in N(c_\ell)$. On the other hand, the right
  hand side matrix implies that the same $(k,j)^{th}$ entry is also
  the sum of the $j^{th}$ column of the $N(c_k)\times N(c_\ell)$ block
  $Z^{(k,\ell)}$ of the matrix $Z$.

  We conclude, for $1\le k,\ell\le s$, that each column in
  $Z^{(k,\ell)}$ adds up to~$y_{k,\ell}$.  By a similar argument,
  applied to Equation~\refeq{ds-block2} this time, it follows, for
  each $1\le k,\ell\le s$, that each \emph{row} of any block
  $Z^{(k,\ell)}$ of $Z$ adds up to $y_{k,\ell}$.

  We conclude that, if $y_{k,\ell}\ne0$, then the matrix
  $W^{(k,\ell)}=\frac1{y_{k,\ell}}Z^{(k,\ell)}$ is doubly stochastic.
  If $y_{k,\ell}=0$, then \refeq{ds-D} is true for any choice
  of~$W^{(k,\ell)}$.$~~\rule{1.0ex}{1.2ex}$
\end{proof}
}\ifproceedings{\proofdsD}{\proofdsD}%

For every $P=(p_{k\ell})$ appearing in an automorphism $\ddiag{P'}P$
of $H$ (see Equation~\refeq{ds-C}), we define the $st\times st$ doubly
stochastic matrix $W_P$ by its $t\times t$ blocks indexed by $1\le
k,\ell \le s$ as follows:
\begin{equation}\label{eq:W_P}
  W_P^{(k,\ell)}=
  \begin{cases}
    W^{(k,\ell)}&\text{if } p_{k\ell}=1,\\
    \bigzero&\text{if } p_{k\ell}=0.
  \end{cases}
\end{equation}
Equations \refeq{ds-C} and \refeq{ds-D} imply that
\begin{equation}\label{eq:ds-X}
  X=\diag{X'}YZ = \sum_{\ddiag{P'}P\in\Aut(H)} \alpha_{P',P}\, \diag{P'}P{W_P}.
\end{equation}
In order to see this, on the left hand side consider the
$(k,\ell)^{th}$ block $Z^{(k,\ell)}$ of $Z$.  On the right hand side,
note that the corresponding block in each $\diag{P'}P{W_P}$ is the
matrix $W^{(k,\ell)}$. Clearly, the overall coefficient for this block
equals the sum of $\alpha_{P',P}$ over all $P'$ and $P$ such that
$p_{k,\ell}=1$, which is precisely $y_{k,\ell}$ by
Equation~\refeq{ds-C}.

Since each $W^{(k,\ell)}$ is a doubly stochastic matrix, by Birkhoff's
theorem we can write it as a convex combination of $t\times t$
permutation matrices $Q_{j,k,\ell}$, whose rows are indexed by
elements of $N(c_k)$ and columns by elements of $N(c_\ell)$:
\[
W^{(k,\ell)}=\sum_{j=1}^{t!}\beta_{j,k,\ell}\, Q_{j,k,\ell}.
\]

Substituting the above expression in Equation~\refeq{W_P}, that defines the
doubly stochastic matrix $W_P$, we express $W_P$ as a convex
combination of permutation matrices
$W_P = \sum_{Q} \delta_{Q,P}\, Q$
where $Q$ runs over all $st\times st$ permutation matrices indexed by
the vertices in color class $D$. Notice here that $\delta_{Q,P}$ is
nonzero only for those permutation matrices $Q$ that have structure
similar to that described in Equation~\refeq{W_P}: The block
$Q^{(k,\ell)}$ is a null matrix if $p_{k\ell}=0$ and it is some
$t\times t$ permutation matrix if $p_{k\ell}=1$.  For each such $Q$,
the $(s+st)\times (s+st)$ permutation matrix $\ddiag PQ$ is an
automorphism of the subgraph $G[C,D]=sK_{1,t}$ (because $Q$ maps
$N(c_i)$ to $N(c_j)$ whenever $P$ maps $c_i$ to $c_j$).  Since
$P\in\Aut(G[C])$ and $D$ is a homogeneous set in $G$, we conclude
that, moreover, $\ddiag PQ$ is an automorphism of the subgraph
$G[C\cup D]$.

Now, if we plug the above expression for each $W_P$ in
Equation~\refeq{ds-X}, we will finally obtain the desired convex
combination
\[
X = \sum_{P',P,Q} \gamma_{P',P,Q}\, \diag{P'}PQ.
\]
It remains to argue that every $\diag{P'}PQ$ occurring in this sum is
an automorphism of $G$. Recall that a pair $P',P$ can appear here only
if $\ddiag{P'}P\in\Aut(H)$.  Moreover, if such a pair is extended to a
matrix $\diag{P'}PQ$, then $\ddiag PQ\in\Aut(G[C\cup D])$. Since
$G[B,D]$ is isotropic for every color class $B\ne D$ of $G$, we
conclude that $\diag{P'}PQ\in\Aut(G)$.  This completes the induction
step and finishes the case when $G$ has one anisotropic component.

\smallskip

Next, we consider the case when $C(G)$ has several anisotropic
components $T_1,\dots,\allowbreak T_k$, $k\ge2$. Let $G_1,\dots, G_k$,
where $G_i=G[\bigcup_{U\in V(T_i)}U]$, be the corresponding
anisotropic components of $G$.  By the proof of the previous case we
already know that $G_i$ is compact for each~$i$.

\begin{restatable}{myclaim}{autos}\label{cl:autos}
  The automorphism group $\Aut(G)$ of $G$ is the product of the
  automorphism groups $\Aut(G_i)$, $1\le i\le k$.
\end{restatable}

\newcommand{\proofautos}{%
\begin{proof}
  Recall that any automorphism of $G$ must map each color class of
  $G$, which is a cell of the underlying amenable graph $G'$, onto
  itself.  Thus, any automorphism $\pi$ of $G$ is of the form
  $(\pi_1,\dots,\pi_k)$, where $\pi_i$ is an automorphism of the
  subgraph $G_i$. Now, for any two subgraphs $G_i$ and $G_j$, we
  examine the edges between $V(G_i)$ and $V(G_j)$. For any color
  classes $U\subseteq V(G_i)$ and $U'\subseteq V(G_j)$, the edge
  $\{U,U'\}$ is isotropic because it is not contained in any
  anisotropic component of $C(G)$. Therefore, the bipartite graph
  $G[U,U']$ is either complete or empty. It follows that for any
  automorphisms $\pi_i$ of $G_i$, $1\le i\le k$, the permutation
  $\pi=(\pi_1,\dots,\pi_k)$ is an automorphism of the
  graph~$G$.$~~\rule{1.0ex}{1.2ex}$
\end{proof}
}\ifproceedings{\proofautos}{\proofautos}%

As follows from Lemma \ref{lem:Ramana}, any fractional automorphism
$X$ of $G$ is of the form
$X = \diag{X_1}{\dots}{X_k}$,
where $X_i$ is a fractional automorphism of $G_i$ for each $i$.  As
each $G_i$ is compact we can write each $X_i$ as a convex combination
$X_i = \sum_{\pi\in \Aut(G_i)}\alpha_{i,\pi}\, P_{\pi}$.
This implies
\begin{equation}\label{eq:eq3}
  \dddiagggg I\dots I{X_i}I\dots I = \sum_{\pi\in \Aut(G_i)}\alpha_{i,\pi}\,
  \dddiagggg I\dots I{P_{\pi}} I\dots I,
\end{equation}
where block diagonal matrices in the above expression have $X_i$ and
$P_{\pi}$ respectively in the $i^{th}$ block (indexed by elements of
$V(G_i)$) and identity matrices as the remaining blocks.

We now decompose the fractional automorphism $X$ as a matrix product
of fractional automorphisms of $G$
\[
  X = \diag{X_1}{\dots}{X_k} 
    = (\ddiagg{X_1}I\dots I)\cdot (\ddiagg I{X_2}\dots I)\cdot \dots
  \cdot (\ddiagg I\dots I{X_k}).
\]
Substituting for $\dddiagggg I\dots I{X_i}I\dots I$ from
Equation~\refeq{eq3} in the above expression and writing the product
of sums as a sum of products, we see that $X$ is a convex combination
of permutation matrices of the form $\diag{P_{\pi_1}}\dots{P_{\pi_k}}$
where $\pi_i\in\Aut(G_i)$ for each $i$. By Claim \ref{cl:autos}, all
the terms $\diag{P_{\pi_1}}\dots{P_{\pi_k}}$ correspond to
automorphisms of $G$. Hence, $G$ is compact, completing the proof of
Theorem \ref{thm:compact}.
}
\ifproceedings{\proofcompact}{\proofcompact}%

\section{A color-refinement based hierarchy of graphs}\label{s:hierarchy}

Let $u\in V(G)$ and $v\in V(H)$ be vertices of two graphs $G$ and
$H$. By \emph{individualization} of $u$ and $v$ we mean assigning the
same \emph{new color} to $u$ and $v$, which makes them distinguished
from the remaining vertices of $G$ and $H$. Tinhofer \cite{Tinhofer91}
proved that, if $G$ is compact, then the following polynomial-time
algorithm correctly decides if $G$ and $H$ are isomorphic.
\begin{enumerate}
\item Run CR on $G$ and $H$ until the coloring of
  $V(G)\cup V(H)$ stabilizes.
\item If the multisets of colors in $G$ and $H$ are different, then
  output ``non-isomorphic'' and stop. Otherwise,
\begin{enumerate}
\item if all color classes are singletons in $G$ and $H$, then if the
  map $u\mapsto v$ matching each vertex $u\in V(G)$ to the vertex $v\in V(H)$ of the
  same color is an isomorphism, output ``isomorphic'' and stop. Else
  output ``non-isomorphic'' and stop.
\item pick any color class with at least two vertices in both $G$ and
  $H$, select an arbitrary $u\in V(G)$ and $v\in V(H)$ in this color
  class and individualize them. Goto Step 1.
\end{enumerate}
\end{enumerate}

If $G$ and $H$ are any two non-isomorphic graphs, then Tinhofer's
algorithm will always output ``non-isomorphic''. However, it can fail
for isomorphic input graphs, in general. We call $G$ a \emph{Tinhofer
  graph} if the algorithm works correctly on $G$ and every $H$ for all
choices of vertices to be individualized. 
Thus, the result of \cite{Tinhofer91} can be stated as the inclusion
$\compact\subseteq\tinhofer$.

If $G$ is a Tinhofer graph,
then the above algorithm can be used to even find a canonical labeling of $G$. 
Using Theorem \ref{thm:compact}, we state the following fact.

\begin{corollary}
  Amenable and, more generally, compact graphs admit canonical
  labeling in polynomial time.
\end{corollary}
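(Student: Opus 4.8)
The plan is to chain the two inclusions already available and then to verify that Tinhofer graphs admit a canonical labeling. By Theorem~\ref{thm:compact} every amenable graph is compact, so it suffices to treat compact graphs; and by the inclusion $\compact\subseteq\tinhofer$ recorded above, it in turn suffices to show that every Tinhofer graph $G$ admits a canonical labeling computable in polynomial time. Here a canonical labeling is a map assigning to $G$ a bijection $V(G)\to\{1,\dots,n\}$ (equivalently, an ordered adjacency matrix $\mathrm{can}(G)$) that depends only on the isomorphism type of~$G$.

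First I would make the individualization-refinement procedure deterministic: run CR, order the resulting stable color classes by their (isomorphism-invariant) CR-colors, and if some class is non-singleton, individualize an arbitrary vertex of the first such class and iterate. After at most $n$ individualizations the coloring becomes discrete, yielding a labeling of $V(G)$ and hence an ordered adjacency matrix; each CR run and each round is polynomial, so a single branch of this process is computed in polynomial time.

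The key step, and the main obstacle, is to show that the matrix obtained does not depend on the arbitrary choices made, and that it is isomorphism-invariant. For the first point I would run Tinhofer's algorithm on the pair $(G,G)$, making \emph{independent} choices of individualized vertices in the two copies. Since $G$ is a Tinhofer graph and $G\cong G$, the algorithm succeeds for all such choices and outputs an isomorphism, i.e.\ an automorphism $\sigma$ of $G$ that identifies the two discrete colorings. Thus any two leaves of the search tree are related by an automorphism of $G$; since precomposing a labeling with an automorphism leaves the ordered adjacency matrix unchanged, all branches produce the same matrix, which I take to be $\mathrm{can}(G)$.

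For isomorphism-invariance, let $\phi$ be an isomorphism from $G$ to $H$ (both compact, hence Tinhofer). By Lemma~\ref{lem:CR-phi} the CR-colorings, and therefore the canonical ordering of color classes, correspond under $\phi$; consequently $\phi$ maps each branch $B$ of the procedure on $G$ to a branch $\phi(B)$ on $H$ producing the same ordered adjacency matrix. Combining this with the choice-independence established above (applied to both $G$ and $H$) yields $\mathrm{can}(G)=\mathrm{can}(H)$. Hence a single deterministic branch already computes $\mathrm{can}(G)$ in polynomial time, proving the corollary. The delicate part is the choice-independence argument, since it relies on the Tinhofer property persisting through every individualization step; handling this cleanly via the correctness of Tinhofer's algorithm on $(G,G)$ for all choices is what makes the whole construction go through.
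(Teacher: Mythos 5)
Your proof is correct and follows essentially the same route as the paper: chain $\amenable\subseteq\compact$ (Theorem~\ref{thm:compact}) with Tinhofer's inclusion $\compact\subseteq\tinhofer$, and then observe that Tinhofer's individualization-refinement algorithm yields a canonical labeling for Tinhofer graphs. The paper states this last step as a one-line remark, whereas you work out the choice-independence via a run of Tinhofer's algorithm on $(G,G)$; this is exactly the intended argument, just made explicit.
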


\ifcombined{}{%
A partition $\Pa$ of the vertex set of a graph $G$ is \emph{equitable}
if, for any two elements $X$ and $Y$ of $\Pa$, every vertex $x\in X$
has the same number of neighbors in $Y$. Note that the stable
partition $\Pa_G$ produced by CR on the input $G$ is equitable.}

Let $A$ be a subgroup of the automorphism group $\auto G$ of a graph
$G$.  Then the partition of $V(G)$ into the $A$-orbits is called an
\emph{orbit partition} of $G$.  Any orbit partition of $G$ is
equitable, but the converse is not true, in general.  However, Godsil
\cite[Corollary 1.3]{Godsil97} has shown that the converse holds for
compact graphs. We define
\emph{Godsil graphs} as the graphs for which the two notions
of an equitable and an orbit partition coincide.  
Thus, the result of \cite{Godsil97} can be stated as the inclusion
$\compact\subseteq\godsil$.
Now, the inclusion $\compact\subseteq\tinhofer$
can easily be strengthened as follows.

\begin{restatable}{lemma}{GodTin}\label{lem:God-Tin}
  Any Godsil graph is a Tinhofer graph.
\end{restatable}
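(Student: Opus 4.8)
The plan is to exploit the fact, stated just before the algorithm, that Tinhofer's procedure \emph{always} answers ``non-isomorphic'' correctly when $G\not\cong H$; hence it suffices to show that for a Godsil graph $G$ and any $H\cong G$ the algorithm answers ``isomorphic'' no matter which vertices are individualized. I would track the pairs $u_1,\dots,u_k\in V(G)$ and $v_1,\dots,v_k\in V(H)$ individualized so far, writing $G_k$ and $H_k$ for $G$ and $H$ with these vertices given fresh, mutually matching colors. The invariant to maintain along the run is that there exists an isomorphism $\psi_k\colon G_k\to H_k$, i.e.\ an isomorphism $G\to H$ with $\psi_k(u_i)=v_i$ for all $i\le k$. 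The base case $k=0$ is immediate since $G\cong H$, and the invariant guarantees (via Lemma~\ref{lem:CR-phi}) that the stable colorings of $G_k$ and $H_k$ always match, so Step~2 never wrongly reports ``non-isomorphic''.

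For the inductive step, after re-running CR the algorithm picks a cell $X$ of size $\ge2$ in $G_k$ and the equicolored cell $X'=\psi_k(X)$ in $H_k$, and individualizes arbitrary $u_{k+1}\in X$ and $v_{k+1}\in X'$. The difficulty is that $\psi_k(u_{k+1})$ and $v_{k+1}$ both lie in $X'$ but need not coincide. I would repair this by postcomposing $\psi_k$ with an automorphism $\tau\in\auto{H_k}$ that fixes $v_1,\dots,v_k$ and sends $\psi_k(u_{k+1})$ to $v_{k+1}$; then $\psi_{k+1}:=\tau\circ\psi_k$ is the required isomorphism of $G_{k+1}$ and $H_{k+1}$.

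Producing $\tau$ is where the Godsil hypothesis enters, and it rests on the key lemma that \emph{individualizing a vertex preserves the Godsil property}. Indeed, any equitable partition $\Pa$ of $G_w$ has $\{w\}$ as a forced singleton cell (the unique color of $w$ makes monochromaticity split it off) and is also equitable in $G$, since $G_w$ and $G$ share the same edges; as $G$ is Godsil, $\Pa$ is the orbit partition of some $A\le\auto G$, and because $\{w\}$ is a cell it is an $A$-orbit, so $A$ fixes $w$ and $A\le\auto{G_w}$ witnesses that $\Pa$ is an orbit partition of $G_w$. Iterating this over the individualized vertices shows $H_k$ is Godsil (Godsil-ness being isomorphism-invariant and $H_k\cong G_k$). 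Applying the Godsil property to the equitable stable partition of $H_k$, its cell $X'$ is an orbit of some $A\le\auto{H_k}$; transitivity of $A$ on $X'$ yields $\tau\in A$ with $\tau(\psi_k(u_{k+1}))=v_{k+1}$, while each singleton cell $\{v_i\}$ being an $A$-orbit forces $\tau$ to fix $v_1,\dots,v_k$. This closes the induction.

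Finally I would note termination and read off correctness: each round strictly refines the partition (the individualized vertex splits from its cell of size $\ge2$), so after at most $n$ rounds all cells are singletons in both graphs, at which point the invariant isomorphism $\psi_k$ is, by Lemma~\ref{lem:CR-phi} and uniqueness of colors, exactly the color-matching bijection of Step~2a, hence an isomorphism, so the algorithm outputs ``isomorphic''. The main obstacle I anticipate is precisely the preservation lemma for individualization together with the bookkeeping ensuring the correcting automorphism $\tau$ fixes all previously individualized vertices; once these are in place, the remaining verifications are routine.
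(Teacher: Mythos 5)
Your proposal is correct and takes essentially the same route as the paper's proof: an induction over individualization rounds that maintains a color-preserving isomorphism, repaired after each new individualization by an automorphism supplied by the Godsil property applied to the (equitable) refined partition, whose elements automatically fix the previously individualized (singleton-cell) vertices. The only cosmetic difference is that the paper applies Godsil-ness of $G$ directly, observing inline that the partition with individualized vertices is still an equitable partition of $G$ itself and precomposing with the resulting $\alpha\in\auto G$, whereas you package that very observation as a preservation lemma (individualization keeps a graph Godsil) and postcompose with $\tau\in\auto{H_k}$ on the $H$-side.
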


\newcommand{\proofGodTin}{%
\begin{proof}
  Assume that $G$ is a Godsil graph. It suffices to show that
  Tinhofer's algorithm is correct whenever $G$ and $H$ are isomorphic.
  Let $\phi$ be an isomorphism from $G$ to $H$. We will prove that,
  after the $i$-th refinement step made by the algorithm, there exists
  an isomorphism $\phi_i$ from $G$ to $H$ that preserves colors of the
  vertices. If this is true for each $i$, the algorithm terminates
  only if the discrete partition (i.e., the finest partition into
  singletons) is reached. Suppose that this happens in the $k$-th
  step. Then $\phi_k$ ensures that the algorithm decides isomorphism.

  We prove the claim by induction on $i$.  At the beginning,
  $\phi_1=\phi$. Assume that an isomorphism $\phi_i$ exists and the
  partition is still not discrete.  Suppose that now the algorithm
  individualizes $u\in V(G)$ and $v\in V(H)$.  If $v=\phi_i(u)$, then
  $\phi_{i+1}=\phi_i$. Otherwise, consider the vertices $u$ and
  $\phi^{-1}_i(v)$, which are in the same monochromatic class of $G$.
  Note that the partition of $G$ produced in each refinement step is
  equitable. Since $G$ is Godsil, there is an automorphism $\alpha$
  preserving the partition such that $\alpha(u)=\phi^{-1}_i(v)$. We
  can, therefore, take $\phi_{i+1}=\phi_i\circ\alpha$.\qed
\end{proof}
}\ifproceedings{}{\proofGodTin}%

The orbit partition of $G$ with respect to $\auto G$ is always a
refinement of the stable partition $\Pa_G$ of $G$. We call $G$
\emph{refinable} if $\Pa_G$ is the orbit
partition of $\auto G$.
It is easy to show the following.

\begin{restatable}{lemma}{Tinref}\label{lem:Tin-ref}
  Any Tinhofer graph is refinable.
\end{restatable}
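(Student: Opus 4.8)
The plan is to use the fact recalled just above the lemma, that the orbit partition of $\auto G$ always refines the stable partition $\Pa_G$. Hence it suffices to establish the reverse refinement: that any two vertices $u,v$ lying in a common cell of $\Pa_G$ are in the same orbit, i.e.\ that some $\alpha\in\Aut(G)$ satisfies $\alpha(u)=v$. We may assume $u\neq v$, so their common stable cell has at least two elements.

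To produce such an $\alpha$, I would run Tinhofer's algorithm on the pair of graphs $G$ and $H:=G$, exploiting the freedom in choosing which vertices to individualize. Since $G\cong H$ and $G$ is a Tinhofer graph, the algorithm must output ``isomorphic'' for \emph{every} sequence of choices; inspecting the algorithm, this can only happen in Step~2(a), so it necessarily terminates with a discrete coloring in which the color-matching map $\phi\colon V(G)\to V(H)$ is an isomorphism, that is, an automorphism of $G$. In particular it never halts prematurely with ``non-isomorphic''. After the initial refinement both copies carry the coloring corresponding to $\Pa_G$, so the color of $u$ in $G$ equals the color of $v$ in $H$ and both of the corresponding color classes have size at least two; I would choose $u\in V(G)$ and $v\in V(H)$ for the first individualization, assigning them a common fresh color $*$.

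The key point, which I expect to be the crux, is to argue that this forces $\phi(u)=v$ regardless of all later choices. The fresh color $*$ is assigned only to $u$ in $G$ and only to $v$ in $H$, and each subsequent color produced by CR is a tuple retaining its predecessor as a component; hence throughout the run $u$ remains the unique vertex of $G$ whose color has $*$ as its core, and likewise $v$ in $H$. At termination the multisets of colors in $G$ and $H$ coincide (otherwise the output would be ``non-isomorphic''), so the final color of $u$ also occurs in $H$; but any $H$-vertex carrying it must have core $*$, and the only such vertex is $v$. Therefore $u$ and $v$ end with the same color, and since $\phi$ matches equally colored vertices we obtain $\phi(u)=v$.

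Combining, $\phi\in\Aut(G)$ with $\phi(u)=v$, so $u$ and $v$ lie in the same orbit. This shows $\Pa_G$ refines the orbit partition; together with the reverse refinement recalled at the outset, $\Pa_G$ equals the orbit partition, i.e.\ $G$ is refinable. The main obstacle is precisely the persistence-of-the-marker argument in the third paragraph: one must be careful that individualization uses a genuinely new color and that CR only subdivides (never merges) color classes, so that the marker $*$ keeps singling out $u$ and $v$ in their respective graphs until the coloring becomes discrete.
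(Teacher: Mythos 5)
Your proof is correct and takes essentially the same approach as the paper: both arguments exploit the freedom of choice in Tinhofer's algorithm by running it on two copies of $G$ and individualizing, in the first round, two vertices $u$ and $v$ lying in a common cell of $\Pa_G$. The only difference is presentational — the paper argues contrapositively (if $u$ and $v$ lie in different orbits, the individualized copies are non-isomorphic as colored graphs, so the algorithm must answer ``non-isomorphic'' on isomorphic inputs), while you argue directly, making explicit the marker-persistence fact that the paper leaves implicit in the soundness of the algorithm.
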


\newcommand{\proofTinref}{%
\begin{proof}
  Suppose that $G$ is not refinable. Then $G$ has vertices $u$ and $v$
  that are in different orbits but not separated by the stable
  partition $\Pa_G$. This means that individualization of $u$ and
  $v$ in isomorphic copies $G'$ and $G''$ of $G$ gives non-isomorphic
  results. Therefore, if Tinhofer's algorithm is run on $G'$ and $G''$
  and individualizes $u$ and $v$, it eventually decides that $G'$ and
  $G''$ are non-isomorphic.\qed
\end{proof}
}\ifproceedings{}{\proofTinref}%

Summarizing Theorem \ref{thm:compact}, Lemmas \ref{lem:God-Tin} and
\ref{lem:Tin-ref}, and \cite[Corollary 1.3]{Godsil97}, we 
state the following hierarchy result.

\begin{theorem}\label{thm:hierarchy}
  The classes of graphs under consideration form the inclusion chain
  \newcounter{old}\setcounter{old}{\theequation}\setcounter{equation}{0}
  \begin{equation}
    \discrete\subset \amenable\subset\compact\subset\godsil\subset\tinhofer\subset\refinable.
  \end{equation}\setcounter{equation}{\theold}
  Moreover, all of the inclusions are strict.
\end{theorem}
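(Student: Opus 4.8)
The inclusion chain itself is essentially a compilation of results proved earlier, so I would begin by recording them: $\discrete\subseteq\amenable$ holds because a discrete graph has only singleton cells, so Conditions~\textbf{A}, \textbf{B}, \textbf{G}, \textbf{H} are satisfied for trivial reasons; $\amenable\subseteq\compact$ is Theorem~\ref{thm:compact}; $\compact\subseteq\godsil$ is Godsil's \cite[Corollary 1.3]{Godsil97}; $\godsil\subseteq\tinhofer$ is Lemma~\ref{lem:God-Tin}; and $\tinhofer\subseteq\refinable$ is Lemma~\ref{lem:Tin-ref}. Hence the entire substance of the theorem is the \emph{strictness} of the five inclusions, which I would establish by exhibiting, for each consecutive pair, a graph lying in the larger class but outside the smaller one.

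The two leftmost separations are immediate. For $\discrete\subsetneq\amenable$ I would take $K_2$: it is a unigraph, hence amenable, yet its stable partition is a single cell of size two, so it is not discrete. For $\amenable\subsetneq\compact$ I would take the hexagon $C_6$: cycles are compact (Example~\ref{ex:}), but by Theorem~\ref{thm:amenable} the graph $C_6$ is not amenable, since its unique cell induces $C_6$, a regular graph not among those permitted by Condition~\textbf{A}.

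For the three remaining separations my plan is to use vertex-transitive graphs. Any vertex-transitive graph is refinable: by Lemma~\ref{lem:CR-phi} an automorphism carrying one vertex to another forces all vertices to receive the same CR-color, so the stable partition is the single-cell partition, which coincides with the orbit partition of the (transitive) group $\Aut(G)$. This reduces the three hardest separations to producing vertex-transitive witnesses at three successive levels: one that is Godsil but not compact (for $\compact\subsetneq\godsil$), one that is Tinhofer but not Godsil (for $\godsil\subsetneq\tinhofer$), and one that is not even Tinhofer (for $\tinhofer\subsetneq\refinable$). To certify the three negative properties I would, respectively, exhibit an explicit fractional automorphism of $G$, i.e.\ a doubly stochastic matrix commuting with the adjacency matrix, that is an extreme point of $\ds G$ with a non-integral entry; exhibit an equitable partition of $G$ that is not the orbit partition of any subgroup of $\Aut(G)$; and exhibit an isomorphic copy $H$ together with a sequence of individualization choices on which the algorithm of Section~\ref{s:hierarchy} reaches a discrete coloring whose induced bijection is not an isomorphism, making it wrongly report ``non-isomorphic''.

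The main obstacle is the construction and verification of these three vertex-transitive witnesses, especially for $\compact\subsetneq\godsil$ and $\godsil\subsetneq\tinhofer$, where the neighbouring classes differ only subtly. In the first case one must find a graph whose purely combinatorial data (its equitable partitions) is entirely explained by its automorphism group, yet whose fractional-automorphism polytope nonetheless carries a spurious non-integral extreme point; in the second one must find a graph on which individualization-refinement always terminates correctly even though some equitable partition fails to be an orbit partition. The difficulty is not conceptual but lies in pinning down sufficiently small graphs and checking the fine-grained polytope geometry and the behaviour of the algorithm under \emph{all} individualization sequences, which is where the real work of the proof resides.
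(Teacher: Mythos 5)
Your skeleton is the same as the paper's: the five inclusions are indeed just Theorem~\ref{thm:compact}, \cite[Corollary 1.3]{Godsil97}, Lemma~\ref{lem:God-Tin}, Lemma~\ref{lem:Tin-ref}, plus the trivial observation for $\discrete\subseteq\amenable$, and the content of the theorem is the strictness. Your two easy witnesses also match the paper's ($K_n$ for the first separation, long cycles for the second, using compactness of cycles from Example~\ref{ex:}). Your reduction of the remaining separations to finding vertex-transitive witnesses is sound as far as it goes (vertex-transitive $\Rightarrow$ regular $\Rightarrow$ the stable partition is the one-cell partition $=$ orbit partition, so such graphs are automatically refinable), and your certificates for ``not compact'', ``not Godsil'', and ``not Tinhofer'' are the correct ones.

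However, there is a genuine gap: for the three hard separations you never produce the witnesses, and you explicitly defer exactly the part where, as you yourself put it, ``the real work of the proof resides.'' This is not a routine verification that can be waved through. In the paper, the witness for $\compact\subsetneq\godsil$ is the Petersen graph --- non-compactness is quoted from \cite[Corollary 5.4]{EvdokimovKP99}, but membership in \godsil{} requires a complete enumeration of all equitable partitions of the Petersen graph organized by minimum cell size, which occupies several pages of case analysis. The witness for $\godsil\subsetneq\tinhofer$ is the Johnson graph $J(n,2)$, $n\ge 7$ --- non-Godsilness is quoted from \cite{ChanG97}, but the Tinhofer property is established by a nontrivial argument: one describes the stabilizer $\Aut(J(n,2))_F$ for every individualized set $F$ (using Whitney's theorem that $\Aut(J(n,2))\cong S_n$), classifies its orbits into six cases, and shows each orbit is definable in two-variable first-order logic in terms of $F$, which forces color refinement to separate the orbits. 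Your proposal gives no indication of how one would verify correctness ``under \emph{all} individualization sequences'' other than by this kind of argument. Finally, for $\tinhofer\subsetneq\refinable$ your plan requires a \emph{vertex-transitive} non-Tinhofer graph, whose existence you assert implicitly but do not establish; the paper sidesteps this entirely by using a (vertex-colored, de-colorable) gadget graph built from two $\CFI$ gadgets sharing their input pairs, which is refinable by inspection of its orbit partition but fails Tinhofer's algorithm when the two output pairs are individualized inconsistently. So your route is compatible with the paper's for the Petersen and Johnson witnesses (both are vertex-transitive), but diverges on the last separation, where it replaces a concrete construction by an open-ended search; as written, the proposal is a plan rather than a proof.
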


It is worth of noting that the hierarchy \refeq{hier-classes}
collapses to \discrete if we restrict ourselves to only rigid graphs,
i.e., graphs with trivial automorphism group.
 
\newcommand{\proofSepRefTin}{%
  Consider the gadget $\CFI(P_1,P_2,P_3)$ depicted in
  Figure~\ref{fig:gadgets}, with two input pairs $P_1$ and $P_2$ and
  one output pair $P_3$. This gadget \cite{CFI92} has the property
  that any automorphism of it must flip an even number of the three
  pairs $P_1$, $P_2$, and $P_3$. We can combine $\CFI(P_1,P_2,P_3)$
  with a second gadget $\CFI(P_1,P_2,P_4)$ with the same input pairs
  and a fresh output pair $P_4$. We assume that the four pairs $P_1$,
  $P_2$, $P_3$, and $P_4$ and the intermediate sets of four connecting
  vertices, are all different color classes.

  This defines a refinable graph $G$, also depicted in
  Figure~\ref{fig:gadgets}, with four color classes $P_1$, $P_2$,
  $P_3$, and $P_4$ of size 2, and two color classes $F$ and $F'$ of
  size 4 corresponding to the orbit partition of $G$. The graph $G$
  has the property that any automorphism of it must flip either both
  pairs $P_3$ and $P_3$ or none of them. Now, if we run the Tinhofer
  procedure on two identical copies $G'$ and $G''$ of $G$, it might
  individualize color class $P_3$ in the first round and color class
  $P_4$ in the second round in such a way that the resulting graphs
  are not isomorphic, since the partial isomorphism flips exactly one
  of the two color classes. Note that the vertex colors of $G$ can be
  removed if we connect the four vertices in $F$ by six edges and the
  two vertices in $P_1$ by one edge.
}

\newcommand{\proofstrict}{%
  The following separating examples prove that all inclusions are
  strict.
\begin{description}
\item{Separation of \discrete and \amenable:} For $n\ge2$, the
  complete graph $K_n$ is amenable but not discrete.

\item{Separation of \amenable and \compact:} For $n\ge6$, the cycles
  $C_n$ are not amenable because they are indistinguishable from a
  pair $C_{n_1}+C_{n_2}$ of disjoint cycles on $n_1+n_2=n$
  vertices. On the other hand, cycles are known to be compact graphs
  \cite[Theorem~2]{Tinhofer86}.

\item{Separation of \compact and \godsil:} These classes are separated
  by the well-known Petersen graph. Evdokimov, Karpinski, and
  Ponomarenko \cite[Corollary 5.4]{EvdokimovKP99} prove that the
  Petersen graph is not compact. They explicitly give a fractional
  automorphism of the Petersen graph which cannot be written as a
  convex combination of its automorphisms.  It remains to show that
  the Petersen graph belongs to the class \godsil.

  This problem is solvable by modern computer algebra tools; see
  \cite{Ziv-Av} where equitable and orbit partitions are counted for
  various strongly regular graphs, including the Petersen graph.  We
  give a non-computer-assisted proof in Section \ref{ssec:petg}.

\item{Separation of \godsil and \tinhofer:} These classes are
  separated by the Johnson graphs $J(n,2)$ for $n\ge7$. The
  \emph{Johnson graph} $J(n,k)$ has the $k$-element subsets of
  $[n]=\{1,\ldots,n\}$ as vertices; any two of them are adjacent if
  their intersection consists of $k-1$ elements. Note that
  $J(n,1)=K_n$.  Furthermore, the graph $J(n,2)$ is the line graph of
  $K_n$: it has all 2-element subsets of $[n]$ as vertices and any two
  of them are adjacent if their intersection is non-empty. It is
  noticed in \cite{ChanG97} that $J(n,2)$ is not Godsil for $n\ge7$.
  For establishing the separation, we show that $J(n,2)$ is indeed
  Tinhofer.  The proof is given in Section \ref{ssec:john-tin}.

\item {Separation of \tinhofer and \refinable:} \ifproceedings{A graph
    separating these classes is exhibited in
    Section~\ref{ssec:tin-ref}.
}{\proofSepRefTin}
\end{description}
}
\ifproceedings{%
The strictness part of Theorem \ref{thm:hierarchy} is proved in
Appendix~\ref{sa:hier}.
}{\proofstrict}

Finally, we show that testing membership in each of the graph classes
in the hierarchy \refeq{hier-classes} is \p-hard.

\begin{restatable}{theorem}{phard}\label{thm:phard}
  The recognition problem of each of the classes in the hierarchy
  \refeq{hier-classes} is \p-hard under uniform AC$^0$ many-one
  reductions.
\end{restatable}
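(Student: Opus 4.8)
The plan is to exhibit a single uniform AC$^0$ many-one reduction from the \p-complete monotone circuit value problem MCVP that settles \p-hardness for all six classes in \refeq{hier-classes} at once. The decisive structural fact is that this chain is a sandwich between its smallest member \discrete and its largest member \refinable. Consequently, it suffices to construct from an MCVP instance $(C,x)$ a vertex-colored graph $G_{C,x}$ such that $C(x)=1$ implies $G_{C,x}$ is discrete and $C(x)=0$ implies $G_{C,x}$ is not refinable. A discrete graph lies in every class of the chain, while a non-refinable graph lies in none of them; hence for each of the six classes, membership of $G_{C,x}$ is equivalent to $C(x)=1$. This is precisely a many-one reduction from MCVP to the recognition problem of that class, and since MCVP is \p-complete under AC$^0$ reductions, the claim follows for all six classes simultaneously.

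To build $G_{C,x}$ I would adapt the CFI-style gadget constructions of Grohe~\cite{Grohe99} (ultimately going back to~\cite{CFI92}). For each gate of the monotone circuit we place a constant-size gadget whose essential feature is a colored pair of vertices carrying the ``truth value'' of that gate, and whose automorphisms are constrained by an even-flip parity condition of the kind recalled later for the $\CFI$ gadget. Input gadgets are initialized according to $x$, and AND- and OR-gadgets are wired to their predecessors so that color refinement is able to split the pair of a gate into two singletons exactly when that gate evaluates to $1$. Because every pair and every set of connecting vertices receives its own color, the wiring is rigid under color refinement and each local piece of $G_{C,x}$ is determined by a bounded neighborhood in the description of $(C,x)$; this makes the entire map computable in uniform AC$^0$.

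It then remains to verify the two implications. For $C(x)=1$ one argues along a topological order of the gates that the propagated ``true'' signals force color refinement to split every gadget pair into singletons, which in turn individualizes the connecting vertices as well, so the stable partition of $G_{C,x}$ consists only of singletons and the graph is discrete. For $C(x)=0$ one locates a gate that evaluates to $0$ whose gadget pair is therefore \emph{not} split by color refinement and thus remains a two-element cell of the stable partition; the even-flip parity constraint, however, forbids any automorphism from swapping the two vertices of this pair, so they lie in distinct orbits of $\Aut(G_{C,x})$. Hence the orbit partition is strictly finer than the stable partition and $G_{C,x}$ fails to be refinable.

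The main obstacle is the design and correctness of the AND- and OR-gadgets: one must ensure that the two target behaviors---complete individualization in the accepting case and a genuine stable-partition/orbit mismatch in the rejecting case---are realized robustly for every wiring pattern arising from a monotone circuit, while keeping each gadget local enough for AC$^0$ uniformity. The delicate point is making the parity-based non-refinability argument compatible with the monotone combination of signals, so that the presence of a single $0$-gate suffices to break refinability of the whole graph, and dually that reaching value $1$ at the output genuinely forces discreteness throughout.
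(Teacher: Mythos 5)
Your overall strategy is exactly the paper's: a single uniform AC$^0$ reduction from MCVP producing a colored graph that is \discrete when $C(x)=1$ and not \refinable when $C(x)=0$, which settles \p-hardness for all six classes at once because \discrete is the bottom and \refinable the top of the chain \refeq{hier-classes}. However, both implications, as you sketch them, fail, and the missing ingredients are precisely the two extra gadget layers in the paper's proof. First, the discreteness claim: if your gadgets make CR split the pair $P_k$ exactly when gate $g_k$ evaluates to $1$ (which is what your topological-order argument gives), then the pairs of gates evaluating to $0$ --- constant-$0$ inputs, and internal gates computing $0$, which exist even when $C(x)=1$ (e.g., an or-gate with inputs $1$ and $0$) --- are never split, so the graph is \emph{not} discrete. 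The paper repairs this with a feedback step: the output pair $P_l$ is joined by parallel edges to every constant-$0$ input pair, so that when $C(x)=1$ the splitting of $P_l$ propagates back to the $0$-inputs and then forward again through all gadgets, individualizing everything; when $C(x)=0$ this feedback is inert.

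Second, and more seriously, your non-refinability witness is wrong. For a gate evaluating to $0$ you claim the parity constraint ``forbids any automorphism from swapping the two vertices of this pair.'' In CFI-style constructions the opposite is typical: pairs carrying value $0$ \emph{can} be flipped by automorphisms --- flipping both input pairs of a $\CFI$ gadget extends to an automorphism fixing the output pair, and flipping an isolated constant-$0$ input pair is trivially an automorphism. Concretely, if the circuit is a single constant-$0$ gate, your graph is a monochromatic pair of vertices, whose stable partition and orbit partition coincide: it is refinable (indeed amenable and compact), so the reduction fails for every class except \discrete. The paper's fix is a dedicated gadget: attach one extra $\IMP(P_l,P_{l+1})$-gadget from the output pair to a fresh pair $P_{l+1}$. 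By construction no automorphism can ever flip $P_{l+1}$ --- flipping it would require flipping exactly one of the intermediate pairs $P'_l,P''_l$, but the parallel edges force both to move together with $P_l$ --- while CR splits $P_{l+1}$ iff it splits $P_l$, i.e., iff $C(x)=1$. So when $C(x)=0$ the pair $P_{l+1}$ is a $2$-cell of the stable partition whose two vertices lie in distinct orbits, and this is what breaks refinability. Without these two additions your reduction is incorrect; they are the actual content of the proof rather than routine verification, so the gap is genuine.
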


\newcommand{\figphard}{%
\begin{figure}[t]
  \centering
    \begin{tikzpicture}[baseline=10pt,xscale=.55,yscale=.8]
      \tikzstyle{every edge}=[draw,color=black] \tikzstyle{every
        node}=[fill,inner sep=2pt,circle] \draw[line width=10pt,line
      cap=round,color=gray!55] (.9,0) to (2.1,0)
      node[fill=none,color=black,left=10pt,below] {$P_i$}; \draw[line
      width=10pt,line cap=round,color=gray!55] (3.9,0) to (5.1,0)
      node[fill=none,color=black,left=10pt,below] {$P_j$}; \draw[line
      width=10pt,line cap=round,color=gray!55] (2.4,3) to (3.6,3)
      node[fill=none,color=black,left=10pt,above] {$P_k$}; \draw[line
      width=10pt,line cap=round,color=gray!55] (1.4,1.5) to (4.6,1.5)
      node[fill=none,color=black,right=5pt,above=-1.5mm] {$F_k$}; \path (1,0) node
      (a0) {} (2,0) node (a1) {} (4,0) node (b0) {} (5,0) node (b1) {}
      (1.5,1.5) node (e00) {} (2.5,1.5) node (e01) {} (3.5,1.5) node (e10)
      {} (4.5,1.5) node (e11) {} (2.5,3) node (c0) {} (3.5,3) node (c1)
      {}; \path[-,thick] (e00) edge (a0) edge (b0) edge (c0);
      \path[-,thick] (e01) edge (a0) edge (b1) edge (c1);
      \path[-,thick] (e10) edge (a1) edge (b0) edge (c1);
      \path[-,thick] (e11) edge (a1) edge (b1) edge (c0);
      \node[fill=none] at (3,-1.5) {$\CFI(P_i,P_j,P_k)$}; 
    \end{tikzpicture}\hfil
    \begin{tikzpicture}[baseline=10pt,xscale=.55,yscale=.8]
      \tikzstyle{every edge}=[draw,color=black] \tikzstyle{every
        node}=[fill,inner sep=2pt,circle] \draw[line width=10pt,line
      cap=round,color=gray!55] (.9,1) to (2.1,1)
      node[fill=none,color=black,left=20pt,below=-1mm] {$P'_i$}; \draw[line
      width=10pt,line cap=round,color=gray!55] (3.9,1) to (5.1,1)
      node[fill=none,color=black,right=2pt,below=-1mm] {$P''_i$};
      \draw[line width=10pt,line cap=round,color=gray!55] (2.4,0) to
      (3.6,0) node[fill=none,color=black,left=10pt,below] {$P_i$};
      \draw[line width=10pt,line cap=round,color=gray!55] (2.4,3) to
      (3.6,3) node[fill=none,color=black,left=10pt,above] {$P_k$};
      \draw[line width=10pt,line cap=round,color=gray!55] (1.4,2) to
      (4.6,2) node[fill=none,color=black,right=5pt,above=-1.5mm] {$F_{ik}$}; \path
      (1,1) node (a0) {} (2,1) node (a1) {} (4,1) node (b0) {} (5,1)
      node (b1) {} (1.5,2) node (e00) {} (2.5,2) node (e01) {} (3.5,2)
      node (e10) {} (4.5,2) node (e11) {} (2.5,3) node (c0) {} (3.5,3)
      node (c1) {} (2.5,0) node (d0) {} (3.5,0) node (d1) {};
      \path[-,thick] (d0) edge (a0) edge (b0); \path[-,thick] (d1)
      edge (a1) edge (b1); \path[-,thick] (e00) edge (a0) edge (b0)
      edge (c0); \path[-,thick] (e01) edge (a0) edge (b1) edge (c1);
      \path[-,thick] (e10) edge (a1) edge (b0) edge (c1);
      \path[-,thick] (e11) edge (a1) edge (b1) edge (c0);
      \node[fill=none] at (3,-1.5) {$\IMP(P_i,P_k)$};
    \end{tikzpicture}\hfil
    \begin{tikzpicture}[baseline=10pt,xscale=.55,yscale=.8]
      \tikzstyle{every edge}=[draw,color=black] \tikzstyle{every
        node}=[fill,inner sep=2pt,circle] \draw[line width=10pt,line
      cap=round,color=gray!55] (2.4,0) to (3.6,0)
      node[fill=none,color=black,left=10pt,below] {$P_1$}; \draw[line
      width=10pt,line cap=round,color=gray!55] (6.4,0) to (7.6,0)
      node[fill=none,color=black,left=10pt,below] {$P_2$}; \draw[line
      width=10pt,line cap=round,color=gray!55] (2.4,3) to (3.6,3)
      node[fill=none,color=black,left=10pt,above] {$P_3$}; \draw[line
      width=10pt,line cap=round,color=gray!55] (6.4,3) to (7.6,3)
      node[fill=none,color=black,left=10pt,above] {$P_4$}; \draw[line
      width=10pt,line cap=round,color=gray!55] (5.4,1.5) to (8.6,1.5)
      node[fill=none,color=black,right=2pt,above=-1mm] {$F'$};
      \draw[line width=10pt,line cap=round,color=gray!55] (1.4,1.5) to
      (4.6,1.5) node[fill=none,color=black,right=2pt,above=-.6mm]
      {$F$}; \path (2.5,0) node (a0) {} (3.5,0) node (a1) {} (6.5,0)
      node (b0) {} (7.5,0) node (b1) {} (1.5,1.5) node (e00) {}
      (2.5,1.5) node (e01) {} (3.5,1.5) node (e10) {} (4.5,1.5) node
      (e11) {} (5.5,1.5) node (v00) {} (6.5,1.5) node (v01) {}
      (7.5,1.5) node (v10) {} (8.5,1.5) node (v11) {} (2.5,3) node
      (c0) {} (3.5,3) node (c1) {} (6.5,3) node (w0) {} (7.5,3) node
      (w1) {}; \path[-,thick] (e00) edge (a0) edge (b0) edge (c0);
      \path[-,thick] (e01) edge (a0) edge (b1) edge (c1);
      \path[-,thick] (e10) edge (a1) edge (b0) edge (c1);
      \path[-,thick] (e11) edge (a1) edge (b1) edge (c0);
      \path[-,thick] (v00) edge (b0) edge (a0) edge (w0);
      \path[-,thick] (v01) edge (b0) edge (a1) edge (w1);
      \path[-,thick] (v10) edge (b1) edge (a0) edge (w1);
      \path[-,thick] (v11) edge (b1) edge (a1) edge (w0);
      \node[fill=none] at (5,-1.5) {$G$};
    \end{tikzpicture}\vspace{-8mm}
  \caption{\label{fig:gadgets}The $\CFI(P_i,P_j,P_k)$- and
    $\IMP(P_i,P_k)$-gadgets and a graph $G$ separating $\refinable$ from
    $\tinhofer$}
\end{figure}
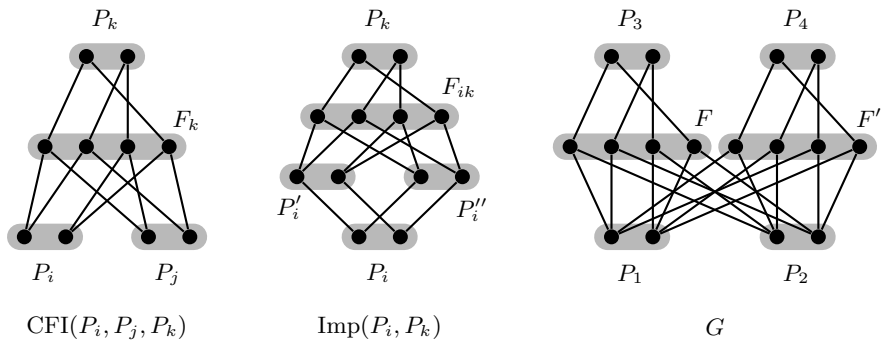}
\newcommand{\proofphard}{%
\begin{proof}
We sketch the proof. Given a monotone boolean circuit $C$ with and-
  and or-gates and constant input gates we construct a graph $G$ as
  follows:
\begin{itemize}[topsep=.5mm]
\item For each gate $g_k$ of $C$, $G$ contains a pair $P_k=\{a_k, b_k\}$ of
  vertices.
\item If $g_k$ is a constant input gate with value 1, then $a_k$ and
  $b_k$ get different colors (i.e., they form singleton color
  classes); otherwise $a_k$ and $b_k$ both get the same color (i.e.,
  they form a color class of size 2).
\item For each and-gate $g_k$ with input gates $g_i$ and $g_j$, $G$
  additionally contains a color class $F_k$ of size 4 that together
  with the two input pairs $P_i$ and $P_j$ as well as the output pair
  $P_k$ forms a $\CFI(P_i, P_j, P_k)$-gadget; see
  Figure~\ref{fig:gadgets}.
\item For each or-gate $g_k$ with input gates $g_i$ and $g_j$, $G$
  additionally contains two color classes $F_{ik}$ and $F_{jk}$ of
  size four, and four color classes $P'_i$, $P''_i$, $P'_j$, $P''_j$
  of size 2. The color classes $P'_i$, $P''_i$, $P_k$ and $F_{ik}$
  form a $\CFI(P'_i, P''_i, P_k)$-gadget and each of the pairs $P'_i$
  and $P''_i$ is linked to $P_i$ by two parallel edges. Henceforth, we
  denote this gadget by $\IMP(P_i,P_k)$; see
  Figure~\ref{fig:gadgets}. Likewise, the color classes $P'_j$,
  $P''_j$ and $F_{jk}$ are used to form an $\IMP(P_j,P_k)$-gadget.
\end{itemize}

A straightforward induction on the height of the and- and or-gates in
$C$ shows that CR on input $G$ refines a color class $P_k$ if and only
if the corresponding gate $g_k$ outputs value 1. This follows from the
following observations.
\begin{itemize}[itemsep=.5mm]
\item If $g_k$ is an and-gate with input gates $g_i$ and $g_j$, then
  the vertices in $P_k$ get different $C^{r+2}$ colors if and only if
  the vertices in $P_i$ as well as the vertices in $P_j$ have
  different $C^{r}$ colors.
\item If $g_k$ is an or-gate with input gates $g_i$ and $g_j$, then
  the vertices in $P_k$ get different $C^{r+3}$ colors if and only if
  either the vertices in $P_i$ or the vertices in $P_j$ have different
  $C^{r}$ colors.
\end{itemize}
Now let $G'$ be the graph that results from $G$ by connecting the
vertex pair $P_l$ corresponding to the output gate $g_l$ by two
parallel edges with each pair $P_k$ corresponding to a constant 0
input gate $g_k$. Then $C$ evaluates to 1 if and only if $G'$ is
discrete (i.e., CR on input $G'$ individualizes all vertices of $G'$).

Moreover, if we connect the output pair $P_l$ via an additional
$\IMP(P_l,P_{l+1})$-gadget to a new vertex pair $P_{l+1}$, then the
resulting graph $G''$ is not even refinable if $C$ evaluates to 0. The
reason is that no automorphism of $G''$ flips the pair $P_{l+1}$, but
CR only refines the color class $P_{l+1}$ if $C$ evaluates to 1.

Hence, the mapping $C\mapsto G''$ simultaneously reduces MCVP to each
of the graph classes in the hierarchy \refeq{hier-classes}.  \qed
\end{proof}

We observe that the graph $G''$ used in the proof of the hardness
results can be easily replaced by an uncolored graph. In fact, the
vertex colors can be substituted by suitable graph gadgets in such a
way that the automorphism group as well as the stable partition remain
essentially unchanged (up to the addition of several singleton
cells). Hence, the hardness results are also valid for the restricted
versions of the classes in the hierarchy \refeq{hier-classes} where we
consider only uncolored graphs.}
\ifproceedings{}{%
\figphard
\proofphard
}




}


\begin{thebibliography}{10}

\bibitem{arxiv}
V.~Arvind, J.~Köbler, G.~Rattan, and O.~Verbitsky.
\newblock Graph isomorphism, color refinement, and compactness.
\newblock {\em E-print}, \url{http:/arxiv.org/abs/1502.01255}, 4 February,
  2015.

\bibitem{BabaiES80}
L.~Babai, P.~Erd{\"o}s, and S.~M. Selkow.
\newblock Random graph isomorphism.
\newblock {\em SIAM J. Comput.}, 9(3):628--635, 1980.

\bibitem{BK79}
L.~Babai and L.~Kucera.
\newblock Canonical labelling of graphs in linear average time.
\newblock In {\em 20th Annual Symposium on Foundations of Computer Science},
  pages 39--46, Oct 1979.

\bibitem{BerkholzBG13}
C.~Berkholz, P.~Bonsma, and M.~Grohe.
\newblock Tight lower and upper bounds for the complexity of canonical colour
  refinement.
\newblock In {\em Proceedings of 21st Annual European Symposium on Algorithms
  (ESA)}, volume 8125 of {\em Lecture Notes in Computer Science}, pages
  145--156. Springer, 2013.

\bibitem{BorriCP11}
A.~Borri, T.~Calamoneri, and R.~Petreschi.
\newblock Recognition of unigraphs through superposition of graphs.
\newblock {\em J. Graph Algorithms Appl.}, 15(3):323--343, 2011.

\bibitem{BusackerS65}
R.~Busacker and T.~Saaty.
\newblock {\em Finite graphs and networks: an introduction with applications}.
\newblock International Series in Pure and Applied Mathematics. McGraw-Hill
  Book Company, New York etc., 1965.

\bibitem{CFI92}
J.-Y. Cai, M.~F\"{u}rer, and N.~Immerman.
\newblock An optimal lower bound on the number of variables for graph
  identification.
\newblock {\em Combinatorica}, 12(4):389--410, 1992.

\bibitem{CardonC82}
A.~Cardon and M.~Crochemore.
\newblock Partitioning a graph in {$O(|A| \log_2 |V|)$}.
\newblock {\em Theor. Comput. Sci.}, 19:85--98, 1982.

\bibitem{ChanG97}
A.~Chan and C.~D. Godsil.
\newblock Symmetry and eigenvectors.
\newblock In {\em Graph symmetry}, volume 497, pages 75--106. Kluwer Acad.
  Publ., Dordrecht, 1997.

\bibitem{EvdokimovKP99}
S.~Evdokimov, M.~Karpinski, and I.~N. Ponomarenko.
\newblock Compact cellular algebras and permutation groups.
\newblock {\em Discrete Mathematics}, 197-198:247--267, 1999.

\bibitem{Godsil97}
C.~Godsil.
\newblock Compact graphs and equitable partitions.
\newblock {\em Linear Algebra and its Applications}, 255(1–3):259 -- 266,
  1997.

\bibitem{Grohe99}
M.~Grohe.
\newblock Equivalence in finite-variable logics is complete for polynomial
  time.
\newblock {\em Combinatorica}, 19(4):507--532, 1999.

\bibitem{GroheKMS14}
M.~Grohe, K.~Kersting, M.~Mladenov, and E.~Selman.
\newblock Dimension reduction via colour refinement.
\newblock In {\em Proceeding of 22th Annual European Symposium on Algorithms
  (ESA)}, volume 8737 of {\em Lecture Notes in Computer Science}, pages
  505--516. Springer, 2014.

\bibitem{ImmermanL90}
N.~Immerman and E.~Lander.
\newblock Describing graphs: A first-order approach to graph canonization.
\newblock In {\em Complexity Theory Retrospective}, pages 59--81. Springer,
  1990.

\bibitem{Johnson75}
R.~Johnson.
\newblock Simple separable graphs.
\newblock {\em Pac. J. Math.}, 56:143--158, 1975.

\bibitem{KerstingMGG14}
K.~Kersting, M.~Mladenov, R.~Garnett, and M.~Grohe.
\newblock Power iterated color refinement.
\newblock In {\em Proceedings of the Twenty-Eighth {AAAI} Conference on
  Artificial Intelligence}, pages 1904--1910. {AAAI} Press, 2014.

\bibitem{KSS15}
S.~Kiefer, P.~Schweitzer, and E.~Selman.
\newblock Graphs identified by logics with counting.
\newblock {\em E-print}, \url{http:/arxiv.org/abs/1503.08792}, 2015.

\bibitem{Koren76}
M.~Koren.
\newblock Pairs of sequences with a unique realization by bipartite graphs.
\newblock {\em Journal of Combinatorial Theory, Series B}, 21(3):224 -- 234,
  1976.

\bibitem{KrebsV14}
A.~Krebs and O.~Verbitsky.
\newblock Universal covers, color refinement, and two-variable logic with
  counting quantifiers: Lower bounds for the depth.
\newblock {\em E-print}, \url{http:/arxiv.org/abs/1407.3175}, 2014.

\bibitem{RamanaSU94}
M.~V. Ramana, E.~R. Scheinerman, and D.~Ullman.
\newblock Fractional isomorphism of graphs.
\newblock {\em Discrete Mathematics}, 132(1-3):247--265, 1994.

\bibitem{ShervashidzeSLMB11}
N.~Shervashidze, P.~Schweitzer, E.~J. van Leeuwen, K.~Mehlhorn, and K.~M.
  Borgwardt.
\newblock Weisfeiler-{L}ehman graph kernels.
\newblock {\em Journal of Machine Learning Research}, 12:2539--2561, 2011.

\bibitem{Tinhofer86}
G.~Tinhofer.
\newblock Graph isomorphism and theorems of {B}irkhoff type.
\newblock {\em Computing}, 36:285--300, 1986.

\bibitem{Tinhofer89}
G.~Tinhofer.
\newblock Strong tree-cographs are {B}irkhoff graphs.
\newblock {\em Discrete Applied Mathematics}, 22(3):275--288, 1989.

\bibitem{Tinhofer91}
G.~Tinhofer.
\newblock A note on compact graphs.
\newblock {\em Discrete Applied Mathematics}, 30(2-3):253--264, 1991.

\bibitem{TinhoferK99}
G.~Tinhofer and M.~Klin.
\newblock Algebraic combinatorics in mathematical chemistry. {M}ethods and
  algorithms {III}. {G}raph invariants and stabilization methods, 1999.

\bibitem{Tyshkevich00}
R.~Tyshkevich.
\newblock Decomposition of graphical sequences and unigraphs.
\newblock {\em Discrete Mathematics}, 220(1-3):201--238, 2000.

\bibitem{Valiente02}
G.~Valiente.
\newblock {\em Algorithms on Trees and Graphs}.
\newblock Springer, 2002.

\bibitem{Whit32}
H.~Whitney.
\newblock Congruent graphs and connectivity of graphs.
\newblock {\em Amer. J. Math.}, 54:150--168, 1932.

\bibitem{Ziv-Av}
M.~Ziv-Av.
\newblock Results of computer algebra calculations for triangle free strongly
  regular graphs.
\newblock {\em E-print}, \url{http:/www.math.bgu.ac.il/~zivav/math/eqpart.pdf},
  2013.

\end{thebibliography}

\newpage
\appendix

\ifproceedings{%
\section{Proofs of Lemmas \protect\ref{lem:God-Tin} and \protect\ref{lem:Tin-ref}}

\GodTin*
\proofGodTin

\Tinref*
\proofTinref
}{}%


\section{Missing parts of the proof of Theorem
  \protect\ref{thm:hierarchy}}\label{sa:hier}

\ifproceedings{%
We recall the chain of inclusions in Theorem \ref{thm:hierarchy}:
\[
  \discrete\subset  \amenable\subset\compact\subset\godsil\subset\tinhofer\subset\refinable.
\]
\proofstrict
}{}%

\subsection{The Petersen Graph is \godsil}\label{ssec:petg}

It is well-known that the Petersen graph, denoted by $P$, is
isomorphic to the Kneser graph $K(5,2)$. The \emph{Kneser graph}
$K(n,k)$ has the $k$-element subsets of $[n]=\{1,\ldots,n\}$ as
vertices and any two of them are adjacent if they are disjoint.  An
important fact about $K(5,2)$ is that its automorphism group is
isomorphic to the symmetric group $S_5$ acting on the set
$\{1,\ldots,5\}$.  In fact, any automorphism of $K(5,2)$ can be
realized by extending the action of a permutation $\pi \in S_5$ to the
vertex set of $K(5,2)$~\cite{Whit32}.

First, we state some useful facts about the Petersen graph.
\begin{proposition}\label{thm:pet}
  The Petersen graph has the following properties:
  \begin{enumerate}[leftmargin=9mm,label=(\roman*),topsep=1mm,itemsep=.5mm]
  \item There are no cycles of length 3, 4 and 7.
  \item There are no independent sets of size greater than $4$.
  \item Any two adjacent vertices have no common neighbors and any two
    non-adjacent vertices have a unique common neighbor.
  \end{enumerate}
\end{proposition}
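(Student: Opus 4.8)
The plan is to work entirely in the Kneser model $K(5,2)$, where the vertices are the $2$-subsets of $[5]=\{1,\dots,5\}$ and two of them are adjacent precisely when they are disjoint. The backbone of the whole proposition is Part~(iii), which I would establish first by a union-cardinality count. If $A$ and $B$ are adjacent, then $|A\cup B|=4$, so any common neighbour would have to be a $2$-subset of the single remaining element, which is impossible; hence adjacent vertices have no common neighbour. If $A$ and $B$ are non-adjacent, then $|A\cap B|=1$ and $|A\cup B|=3$, so the only $2$-subset disjoint from both is $[5]\setminus(A\cup B)$, giving exactly one common neighbour.

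From Part~(iii), the short cycles in Part~(i) follow at once. A triangle would display two adjacent vertices with a common neighbour, contradicting the first half of~(iii). In a $4$-cycle $A\,B\,C\,D$, the vertices $B$ and $D$ are two common neighbours of the pair $\{A,C\}$; whether $A$ and $C$ are adjacent or not, this contradicts~(iii) (no common neighbour, respectively a unique one). The genuinely nontrivial case, and the step I expect to be the main obstacle, is the absence of $7$-cycles. Here the plan is a counting argument followed by a path argument. Given a putative $7$-cycle $v_0,\dots,v_6$ with indices taken mod~$7$, for each element $z\in[5]$ the index set $\{i:z\in v_i\}$ contains no two cyclically consecutive indices (by disjointness of consecutive $v_i$), so it is an independent set in $C_7$ and has size at most $3$. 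Since $\sum_{z}|\{i:z\in v_i\}|=\sum_i|v_i|=14$ while $5\cdot 3=15$, the only possibility is that exactly one element $z$ lies in precisely two of the $v_i$ and each of the other four lies in precisely three. The two indices containing $z$ are non-consecutive, so deleting them from $C_7$ leaves two arcs of consecutive indices whose lengths sum to $5$. Along each arc the vertices are $2$-subsets of the $4$-set $[5]\setminus\{z\}$ with consecutive ones disjoint; but in a $4$-set each $2$-subset has a \emph{unique} disjoint partner, namely its complement, so a path of distinct such subsets has at most two vertices. Hence each arc has length at most $2$, giving $5\le 2+2$, a contradiction.

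Finally, for Part~(ii) I would translate independent sets into intersecting families: an independent set of $K(5,2)$ is exactly a family $\mathcal F$ of pairwise intersecting $2$-subsets of $[5]$. The plan is a star-versus-triangle dichotomy. If all members of $\mathcal F$ share a common element $x$, then $\mathcal F$ is contained in the four $2$-subsets through $x$, so $|\mathcal F|\le 4$. Otherwise choose $A,B\in\mathcal F$ with $A\cap B=\{x\}$, say $A=\{x,a\}$ and $B=\{x,b\}$, together with a member $C$ avoiding $x$; since $C$ must meet both $A$ and $B$ while missing $x$, it is forced to be $\{a,b\}$, and then any further member, having to meet all three of $\{x,a\}$, $\{x,b\}$, $\{a,b\}$, is forced into $\{A,B,C\}$. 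Thus a non-star intersecting family has at most three members, and in every case $|\mathcal F|\le 4$, which is precisely the claim in Part~(ii).
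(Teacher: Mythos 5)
Your proof is correct, and it is worth noting that the paper itself offers no proof of this proposition at all: it is stated as a list of ``useful facts'' about the Petersen graph (immediately after identifying it with the Kneser graph $K(5,2)$) and then used in the case analysis showing the Petersen graph is \godsil. So there is nothing in the paper to compare against; your argument is a self-contained justification of facts the authors treat as standard. Within your proof, the easy parts are handled exactly as one would expect in the Kneser model: Part~(iii) is the union-cardinality count, and the exclusion of $3$- and $4$-cycles follows formally from it. The only substantive step is the absence of $7$-cycles, and your counting argument there is sound: the incidence sets $I_z=\{i : z\in v_i\}$ are independent in $C_7$, hence of size at most $3$, and since $\sum_z|I_z|=14=15-1$ exactly one element appears twice and the rest three times; removing the two (non-consecutive) indices of the deficient element leaves two arcs totalling $5$ vertices, each consisting of $2$-subsets of a $4$-set in which disjointness forces complementation, so each arc has at most $2$ distinct vertices --- a contradiction. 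The intersecting-family dichotomy (star versus triangle) in Part~(ii) is likewise complete: in the non-star case the third set $C$ is forced to be $\{a,b\}$ and every further member is forced into $\{A,B,C\}$, giving the bound $4$ in all cases.
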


We will need some definitions regarding partitions of the vertex set
of a graph $G=(V,E)$.  Given a partition $\Sigma = \{S_1,\dots,S_k\}$
of $V$, we refer to the sets $S_1,\dots,S_k$ as the \emph{cells} of
$\Sigma$.  If the size of a cell is $k$, we call it a $k$-\emph{cell}.
Two cells $S$ and $S'$ are said to be \emph{compatible} if the induced
bipartite graph $P[S,S']$ is biregular (it can be empty).  Otherwise,
we say they are \emph{incompatible}.  Recall that any cell $S$ of an
equitable partition induces a regular graph $G[S]$.  Moreover, in that
case, any two cells $S,S'$ are compatible and the number of edges in
the biregular graph $G[S,S']$ is a common multiple of $|S|$ and
$|S'|$.

Now we are ready to prove the following theorem.

\begin{theorem}
  The Petersen graph $P$ is \godsil.
\end{theorem}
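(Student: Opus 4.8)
The plan is to take an arbitrary equitable partition $\Sigma$ of $P$ and exhibit a subgroup $H\le\Aut(P)$ whose orbit partition equals $\Sigma$; since every orbit partition is equitable, this establishes that the two notions coincide, i.e.\ that $P$ is \godsil. Throughout I would use the identification $\Aut(P)\cong S_5$, realized by the action of $S_5$ on the ten vertices via its action on the $2$-element subsets of $[5]$ in the Kneser model $K(5,2)$. Under this identification the orbit partitions of $P$ are exactly the orbit partitions on $\binom{[5]}{2}$ of the subgroups of $S_5$, so the task reduces to the purely combinatorial problem of classifying the equitable partitions of $P$ and matching each of them to such a subgroup.

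First I would dispose of the two extreme cases: the one-cell partition is the orbit partition of $S_5$ itself, and the discrete partition into ten singletons is the orbit partition of the trivial group. For the remaining partitions the strategy is a case analysis on the multiset of cell sizes, which sum to $10$. Here the structural facts collected in Proposition~\ref{thm:pet} do the heavy lifting: the absence of triangles and $4$-cycles forces any cell inducing a positive-degree regular graph to be an induced matching or an induced $2$-regular subgraph (a disjoint union of chordless cycles of length at least $5$); the independence number bound restricts every independent cell to size at most $4$; and the unique-common-neighbor property of the $(10,3,0,1)$-strongly regular graph severely limits which size patterns can be pairwise compatible. Size patterns that would force two incompatible cells cannot occur in an equitable partition at all, which eliminates most candidate patterns outright.

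For each surviving size pattern I would pin down the cells up to an automorphism of $P$ and then read off the realizing subgroup. A pattern $4+6$ is the point-stabilizer partition (pairs through a fixed element versus pairs avoiding it), realized by the $S_4$ fixing an element of $[5]$; a pattern $1+3+6$ is a vertex-stabilizer partition, realized by the $S_2\times S_3$ stabilizing a vertex; and the pattern $5+5$ is the splitting of the ``consecutive'' pairs from the ``skip'' pairs, realized by the cyclic (or dihedral) group generated by a $5$-cycle. Refinements of these are then handled by passing to smaller subgroups whose orbits cut out exactly the finer cells, using transitivity of $\Aut(P)$ on the relevant configurations to place the coarser cell in a standard position before refining.

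The main obstacle I anticipate is the completeness and rigidity of the case analysis. I must verify that no sporadic equitable partition escapes the classification, and, more delicately, that once a size pattern is fixed the equitability (biregularity) constraints determine the cells \emph{uniquely} up to $\Aut(P)$, so that $\Sigma$ genuinely coincides with an orbit partition rather than merely sharing its cell sizes. The intermediate patterns—especially $5+5$ and those containing cells of size $2$ or $3$—are where this rigidity is least automatic, and where the unique-common-neighbor property together with the compatibility conditions must be invoked most carefully to force the cells into orbit form.
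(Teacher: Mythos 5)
Your plan coincides with the paper's proof in every structural respect: identify $\auto P$ with $S_5$ via the Kneser model $K(5,2)$, use the facts in Proposition~\ref{thm:pet} (triangle- and $4$-cycle-freeness, independence number $4$, unique common neighbors) to constrain which cells can coexist, classify all equitable partitions, and exhibit for each a subgroup of $S_5$ realizing it as an orbit partition. The paper organizes the cases by the minimum cell size $\delta\in\{1,2,3,4,5\}$ rather than by the multiset of cell sizes, but that is cosmetic. Your three named patterns are also accurate: the $4+6$ partition (pairs through a point versus pairs avoiding it, realized by a point stabilizer $S_4$), the $1+3+6$ partition (vertex stabilizer $S_2\times S_3$), and the $5+5$ partition (two $5$-cycles, realized by a $5$-cycle of $S_5$) are precisely the maximal non-trivial equitable partitions, and every remaining one does refine the first or the second.

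The gap is that the proposal stops exactly where the proof begins: the exhaustive enumeration and the rigidity claims, which constitute the entire body of the paper's argument, are announced but not performed. Concretely, one must show that no equitable partition has minimum cell size $3$ (the paper rules out both $3+7$ and $3+3+4$ by regularity and edge-counting); that $\delta=4$ forces an independent $4$-set against a $6$-set inducing a $3$-matching (eliminating the $1$-regular alternative for the $4$-cell by counting edges across the cut); that $\delta=5$ forces two $5$-cycles joined by a perfect matching whose structure is rigid enough to yield the automorphism $(12345)(1'2'3'4'5')$; and, most laboriously, one must work through the $\delta=2$ and $\delta=1$ families --- the paper finds $\{2,4,4\}$ and $\{2,4,2,2\}$, then $\{1,3,6\}$, $\{1,3,3,3\}$, and four further families containing singleton cells --- each requiring its own compatibility analysis to pin the cells down up to $\auto P$ and its own realizing subgroup (e.g.\ $\langle(ab),(cd),(ac)(bd)\rangle$ for $\{2,4,4\}$, $\langle(ab)(cd)\rangle$ for $\{1,1,2,2,2,2\}$). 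You correctly flag this completeness-and-rigidity burden as the main obstacle, but flagging it does not discharge it; since the theorem is precisely the assertion that this case analysis closes with an orbit partition in every branch, the proposal as written is a correct strategy rather than a proof.
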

\begin{proof}
  To prove the theorem, we will enumerate all equitable partitions of
  $P$. For each such partition $\Sigma$, we describe a subgroup of
  $\auto P$ such that its orbit partition coincides with $\Sigma$. We
  represent the vertices of $P$ by the two-element subsets of the set
  $\Omega=\{a,b,c,d,e\}$, where two vertices are adjacent if they are
  disjoint. This representation allows us to describe any subgroup of
  $\auto P$ as a subgroup of the permutation group $S_{\Omega}$ on
  $\Omega$.


  The two trivial partitions of $V(P)$ into one set and into ten
  singleton sets are clearly orbit partitions, since the Petersen
  graph is vertex-transitive.  For our case analysis, we classify the
  remaining non-trivial equitable partitions of $P$ by the minimum
  size $\delta$ of the cells in the partition.  Clearly, $\delta \leq
  5$.  In the following claims we show for each $k\in \{1,2,3,4,5\}$
  that any equitable partition of $P$ with $\delta=k$ is an orbit
  partition of $P$.

  \begin{myclaim}\label{cl:delta3}
    $P$ does not have any equitable partition with $\delta=3$.
  \end{myclaim}

  \begin{proof}
    Suppose that there is an equitable partition $\Sigma$ with
    $\delta=3$ and let $S$ be a 3-cell in it. Then $\Sigma$ either has
    the form $\Sigma=\{S,T\}$, where $|T|=7$, or the form
    $\Sigma=\{S,U,V\}$ where $|U|=3$ and $|V|=4$.  The first case is
    ruled out since $P[T]$ can never be regular ($P$ has neither
    independent sets of size $7$ nor cycles of size $7$). Suppose the
    second case is possible. Then $P[S]$ and $P[U]$ must be empty
    (since $P$ has no triangles). Furthermore, the bipartite graphs
    $P[S,V]$ and $P[U,V]$ must be both biregular. The graph $P[S,V]$
    (likewise, $P[U,V]$) is empty or it has $12$ edges. It is not
    possible that $P[S,V]$ has $12$ edges because then $P[V]$ has only
    $3$ edges and cannot be regular. If both $P[S,V]$ and $P[U,V]$ are
    empty then $V$ is disconnected from the rest of the graph, which
    is a contradiction.$~~\rule{1.0ex}{1.2ex}$
  \end{proof}

  \begin{myclaim}\label{cl:delta4}
    All equitable partitions of $P$ with $\delta=4$ are orbit partitions.
  \end{myclaim}

  \begin{proof}
    We first show that any equitable partition $\Sigma$ with
    $\delta=4$ has one 4-cell $S$ and one 6-cell $T$, where $P[S]$ is
    empty and $P[T]$ is a 3-matching (a matching with 3
    edges). Clearly, $\Sigma$ must be of the form $\{S,T\}$, where
    $|S|=4$ and $|T|=6$. Moreover, $P[S]$ must be empty (0-regular) or
    2-matching (1-regular) since it cannot be a 4-cycle
    (2-regular). In fact, the case of 2-matching can also be ruled out
    by counting the number of edges as follows. For $S$ and $T$ to be
    compatible, there must be $12$ edges in the graph $P[S,T]$. Then
    there is exactly one edge left in the induced graph $P[T]$ which
    is impossible. Therefore, $P[S]$ must be empty. This also implies
    that the graph $P[S,T]$ has $4 \times 3 =12$ edges. Hence, $P[T]$
    must be a 3-matching.

    Now observe that any independent-set $S$ of size 4 in $P$ must be
    of the kind $S=\{ab,ac,ad,ae\}$ (up to automorphisms), implying
    that $T=\{bc,bd,be,cd,ce,de\}$.  The partition $\{S,T\}$ can be
    easily verified to be equitable and that it is the orbit partition
    of the subgroup $S_{\{b,c,d,e\}}$.$~~\rule{1.0ex}{1.2ex}$
  \end{proof}

  \begin{myclaim}\label{cl:delta5}
    All equitable partitions of $P$ with $\delta=5$ are orbit partitions.
  \end{myclaim}

  \begin{proof}
    In this case $\Sigma$ must have the form $\Sigma=\{S,T\}$ where
    $|S|=|T|=5$. Moreover, since $P$ does not have independent sets of
    size $5$, $P[S]$ and $P[T]$ must be 5-cycles. Clearly, such
    partitions exist, and any such partition has a matching between
    sets $S$ and $T$.

    It remains to show that $\Sigma=\{S,T\}$ is indeed an orbit
    partition of some subgroup of $\auto P$. Denote the 5-cycle in $S$
    by 1-2-3-4-5. Let $1'$ be the matching partner of $1$ in $T$ and
    so on. Now, $1'$ and $2'$ cannot be adjacent, else there is a
    4-cycle in $P$. The unique common neighbor of $1'$ and $2'$ must
    be $4'$, otherwise it is easy to verify that we will have a
    4-cycle in $P$. The partners $3'$ and $5'$ can also be uniquely
    determined in $T$.  The permutation $\pi = (12345)(1'2'3'4'5')$
    can be verified to be an automorphism of $P$ and the orbits of the
    subgroup generated by $\pi$ are precisely
    $\{S,T\}$.$~~\rule{1.0ex}{1.2ex}$
  \end{proof}

  \begin{myclaim}\label{cl:delta2}
    All equitable partitions of $P$ with $\delta=2$ are orbit partitions.
  \end{myclaim}

  \begin{proof}
    Let $\Sigma$ be an equitable partition of $P$ with $\delta=2$ and
    let $S=\{u,v\}$ be a 2-cell in it. We first show that $uv$ must
    be an edge.  This holds because any two non-adjacent vertices have
    a unique common neighbor $x$. The cell containing $x$ can only be
    a singleton set, which contradicts $\delta=2$.

    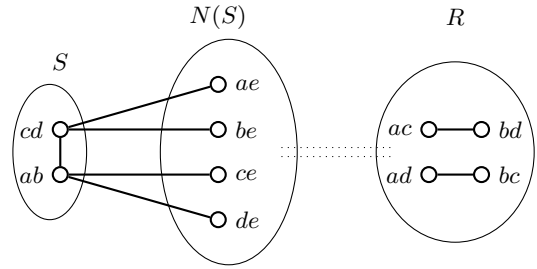
\begin{figure}
      \centering
      \begin{tikzpicture} [lineDecorate/.style={-,thick},xscale=.7,yscale=.6,%
        nodeDecorate/.style={shape=circle,inner sep=2pt,draw,thick}]
        \foreach \nodename/\x/\y/\direction/\navigate in {
          ab/0/0/left/west, cd/0/1/left/west, ae/3/2/right/east,
          be/3/1/right/east,ce/3/0/right/east,de/3/-1/right/east,
          ac/7/1/left/west, bd/8/1/right/east, ad/7/0/left/west,
          bc/8/0/right/east} { \node (\nodename) at (\x,\y)
          [nodeDecorate] {}; \node [\direction] at
          (\nodename.\navigate) {\footnotesize$\nodename$}; }
        \path \foreach \startnode/\endnode in
        {ab/cd,ac/bd,ad/bc,cd/ae,cd/be,ab/ce,ab/de} { (\startnode)
          edge[lineDecorate] node {} (\endnode) }; \draw (-.2,0.5)
        ellipse (0.7cm and 1.5cm); \node at (0,2.5) {$S$}; \draw
        (3.2,0.5) ellipse (1.3cm and 2.5cm); \node at (3,3.5) {$N(S)$};
        \draw (7.5,0.5) ellipse (1.5cm and 2cm); \node at (7.5,3.5)
        {$R$}; \draw[dotted] (4.2,.6) -- (6.3,.6); \draw[dotted]
        (4.2,.4) -- (6.3,.4);
      \end{tikzpicture}
      \caption{\label{fig:delta2}The case $\delta=2$.}
    \end{figure}

    Next we show that the neighborhood $N(S)=\bigcup_{x\in
      S}N(x)\setminus S$ of $S$ is also a cell of $\Sigma$ (see
    Figure~\ref{fig:delta2}).  Since $uv$ is an edge, there are no
    common neighbors of $u$ and $v$.  Therefore, $|N(S)|=4$.
    Moreover, $N(S)$ is an independent set since any edge among
    vertices in $N(S)$ can be used to construct a cycle of length 3 or
    4 passing through the edge $uv$.  This is not possible by
    Proposition~\ref{thm:pet}. Now let $R = V(P) \backslash (S \cup
    N(S))$ be the set of the four remaining vertices as shown in
    Figure~\ref{fig:delta2}.  Observe that no cell can contain
    vertices from both $N(S)$ and $R$, since then it would be
    incompatible with $S$.  Since $N(S)$ is an independent set, there
    cannot be a 2-cell inside $N(S)$. Clearly, there cannot be
    $1$-cells and hence 3-cells inside $N(S)$.  Therefore, $N(S)$ must
    indeed be a cell.

    By accounting for edges of $S$ and $N(S)$, it is easy to verify
    that $R$ has exactly two edges, and hence $P[R]$ must be a
    2-matching.  Since $\delta=2$, $R$ does not contain any $1$-cell
    and hence, any $3$-cells.  This leaves us with only two cases.
  \begin{description}[leftmargin=1mm,labelsep=1.25mm,itemindent=0mm,topsep=1.5mm,itemsep=1.5mm]
  \item[\mdseries Case 1:] $R$ is a cell.  We characterize all such partitions
    by naming a typical case.  W.l.o.g, let $S=\{ab,cd\}$ since $S$ is
    an edge.  Then $N(S)$ must be $\{ae,be,ce,de\}$ and $R$ must be
    $\{ac,ad,bc,bd\}$.  The partition
    $\{ab,cd\},\{ae,be,$ $ce,de\},\{ac,ad,bc,bd\}$ can be easily verified
    to be equitable. Moreover, it is easy to check that it is the
    orbit partition of the subgroup of all permutations in
    $S_{\Omega}$ which preserve the $\Omega$-partition
    $\{ab\},\{cd\},\{e\}$.  This is also the subgroup generated by the
    automorphisms $(ab),(cd),(ac)(bd)$.
  \item[\mdseries Case 2:] $\Sigma$ partitions $R$ in two sets $A$ and $B$ where
    $|A|=|B|=2$. Since each $2$-cell has to be an edge (see above),
    the sets $A$ and $B$ must be $\{ac,bd\}$ and $\{bc,ad\}$.  The
    partition $\{ab,cd\},\{ae,be,$ $ce,de\},\{ac,bd\},\{ad,bc\}$ can be
    easily verified to be equitable.  Moreover, it is easy to check
    that it is the orbit partition of the subgroup of all
    permutations in $S_{\Omega}$ which preserve the $\Omega$-partition
    $\{ab\},\{cd\},\{e\}$ and additionally, stabilize the sets
    $\{ac,bd\}$ and $\{ad,bc\}$. This is also the subgroup generated
    by the automorphisms $(ac)(bd), (ad)(bc), (ab)(cd)$.$~~\rule{1.0ex}{1.2ex}$
  \end{description}
\end{proof}

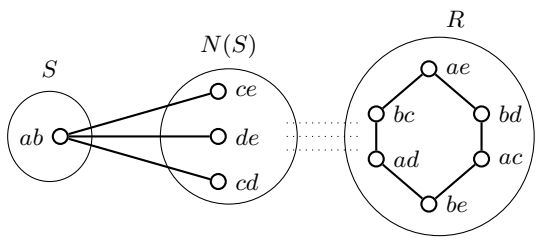
\begin{figure}
\centering
\begin{tikzpicture}
[lineDecorate/.style={-,thick},xscale=.7,yscale=.6,%
  nodeDecorate/.style={shape=circle,inner sep=2pt,draw,thick}]
\foreach \nodename/\x/\y/\direction/\navigate in {
  ab/0/1/left/west,
  ce/3/2/right/east,  de/3/1/right/east,  cd/3/0/right/east,
  ae/7/2.5/right/east, be/7/-.5/right/east,
  bc/6/1.5/right/east, ac/8/0.5/right/east,
   ad/6/0.5/right/east, bd/8/1.5/right/east}
{
  \node (\nodename) at (\x,\y) [nodeDecorate] {};
  \node [\direction] at (\nodename.\navigate) {\footnotesize$\nodename$};
}
\path
\foreach \startnode/\endnode in {ab/cd,ab/ce,ab/de,ae/bc,bc/ad,ad/be,be/ac,ac/bd,bd/ae}
{
  (\startnode) edge[lineDecorate] node {} (\endnode)
};
\draw (-.2,1) ellipse (0.8cm and 1cm);
\node at (-.2,2.5) {$S$};
\draw (3.2,1) ellipse (1.3cm and 1.5cm);
\node at (3.2,3) {$N(S)$};
\draw (7.2,1) ellipse (1.8cm and 2.2cm);
\node at (7.5,3.6) {$R$};
\draw[dotted] (4.3,1) -- (5.68,1);
\draw[dotted] (4.3,1.3) -- (5.68,1.3);
\draw[dotted] (4.3,0.7) -- (5.68,0.7);
\end{tikzpicture}
\caption{\label{fig:delta1}The case $\delta=1$.}
\end{figure}

  \begin{myclaim}\label{cl:delta1}
    All equitable partitions of $P$ with $\delta=1$ are orbit partitions.
  \end{myclaim}

  \begin{proof}
    Let $S$ be a singleton set in such an equitable partition. Similar
    to a previous argument, a cell cannot have vertices from both
    $N(S)$ and $V\backslash N(S)$. Therefore, any equitable partition
    refines the partition $S,N(S),R$ (see Figure \ref{fig:delta1}).
    Observe that $N(S)$ must be an independent set (otherwise there is
    a 3-cycle).  Moreover, if we assume that $S=\{ab\}$, $N(S)$ must
    be $\{ce,de,cd\}$ and therefore, $R=\{ae,be,ac,bc,ad,bd\}$ forms a
    6-cycle, as shown in the figure.  We proceed by further
    classifying equitable partitions on the basis of the partition
    induced by them inside $N(S)$.  Since $|N(S)|=3$, we have three
    possible cases. Either $N(S)$ is a cell, or it contains three
    1-cells, or it contains one singleton and one 2-cell.
%
%
  \begin{description}[leftmargin=1mm,labelsep=1.25mm,itemindent=0mm,topsep=1.5mm,itemsep=1.5mm]
  \item[\mdseries Case 1:] $N(S)$ is a cell.  We further classify the equitable
    partitions in this case on the basis of the partition induced on
    the set $R$.  First, we examine the possible cells $X$ in $R$
    which are compatible with $N(S)$. $X$ cannot be of size $1$ or
    $2$, otherwise $P[N(S),X]$ has at most two edges.  Also, $X$
    cannot be of size $4$ or $5$ since this would imply a cell of size
    $1$ or $2$ in $R$. Therefore, either $R$ is a cell, or there are
    two 3-cells in $R$.
    \begin{enumerate}%
[leftmargin=3mm,label=(\alph*),labelsep=1.25mm,itemindent=3mm,itemsep=1.5mm,topsep=1.5mm]
    \item $R$ is a cell.  The partition $\{ab\}$, $\{de,cd,ce\}$,
      $\{ac,ad,ae,bc,bd,be\}$ can be verified to be an equitable
      partition.  Moreover, it is easy to check that it is the orbit
      partition of the subgroup $S_{\{c,d,e\}} \times S_{\{a,b\}}$.
    \item The partition induced on $R$ is of the form $\{A,B\}$, where
      $|A|=|B|=3$. Because of regularity, the only possible $3$-cells
      in $R$ are the independent sets $\{ad,ac,ae\}$ and
      $\{bc,bd,be\}$.  The partition $\{ab\}$, $\{de,cd,ce\}$,
      $\{ad,ac,ae\}$, $\{bc,bd,be\}$ is clearly equitable. Moreover,
      it is easy to check that this partition is the orbit partition
      of the subgroup $S_{\{c,d,e\}}$.
    \end{enumerate}
  \item[\mdseries Case 2:] $N(S)$ contains three 1-cells.  Again, we classify
    the equitable partitions on the basis of the partition induced on
    the set $R$.  We can check that a cell of size more than two in
    $R$ will have at least one edge to some singleton in $N(S)$, and
    will be incompatible with that singleton.  Therefore, cells in $R$
    must have size at most $2$.  Moreover, any $2$-cell must be of the
    form $\{ax,bx\}$ for some $x\in\{d,c,e\}$ since all other 2-cells
    can be seen to be incompatible with some singleton cell in $N(S)$.
    Finally, it can be seen that every possible $1$-cell is
    incompatible with these three $2$-cells. Hence, $R$ must consist
    of three cells of size $2$, namely $\{ad,bd\}$, $\{ac,bc\}$,
    $\{ae,be\}$.  The partition $\{ab\}$, $\{cd\}$, $\{ce\}$,
    $\{de\}$, $\{ad,bd\}$, $\{ac,bc\}$, $\{ae,be\}$ can be easily seen
    to be equitable.  Moreover, it is easy to check that it is the
    orbit partition of the subgroup $S_{\{a,b\}}$.
  \item[\mdseries Case 3:] $N(S)$ contains a 2-cell $U=\{ce,de\}$ and a 1-cell
    $V=\{cd\}$.  Again, we need to classify the equitable partitions
    on the basis of the partition induced on the set $R$.  First, we
    examine the possible cells $X$ in $R$ which are compatible with
    $U$ and $V$. Clearly, $X$ cannot be a 5-cell since $P[X]$ cannot
    be regular.  It cannot be a 3-cell as well since the two candidate
    3-cells are the independent sets $\{ad,ac,ae\}$ and
    $\{bc,bd,be\}$.  Neither of them can be compatible with the
    singleton set $V$.  Also, $R$ cannot be a cell since it is
    incompatible with the singleton set $V$.  Moreover, the only
    possible $4$-cell is the neighborhood of the set $U$,
    i.e. $\{ac,bd,ad,bc\}$.  Any other $4$-cell is incompatible with
    $U$.  Overall, we have no cells of size 3, 5, or 6 in $R$.
    Therefore, we have only the following four remaining subcases.
    \begin{enumerate}%
[leftmargin=3mm,label=(\alph*),labelsep=1.25mm,itemindent=3mm,itemsep=1.5mm,topsep=1.5mm]
    \item $R$ consists of one 4-cell and two 1-cells.  This case is
      not possible since a $1$-cell cannot be compatible with a
      $4$-cell.
    \item $R$ consists of one 4-cell and one 2-cell. The cells are
      $\{ac,bd,ad,bc\}$ and $\{ae,be\}$.  The partition $\{ab\}$,
      $\{cd\}$, $\{ce,de\}$, $\{ae,be\}$, $\{ac,bd,ad,bc\}$ can be
      verified to be an equitable partition.  Moreover, it is easy to
      check that it is the orbit partition of the subgroup
      $S_{\{a,b\}} \times S_{\{c,d\}}$
    \item $R$ consists of three 2-cells. First, $ae$ and $be$ must be
      in the same $2$-cell, otherwise the cell containing any of them
      would be incompatible with $V$.  For the remaining vertices
      $ac,ad,bc,bd$, we can pair them up in three ways: (i) $ac,ad$
      and $bc,bd$, (ii) $ac,bc$ and $ad,bd$, or (iii) $ac,bd$ and
      $ad,bc$ The first case is not possible since $\{ae,be\}$ and
      $\{ac,ad\}$ are not compatible. The second case is not possible
      because $\{ac,bc\}$ and $U=\{ce,de\}$ are not compatible. The
      third case gives an equitable partition $\{ab\}$, $\{cd\}$,
      $\{ce,de\}$, $\{ae,be\}$, $\{ac,bd\}$, $\{ad,bc\}$.  Moreover,
      it is easy to check that it is the orbit partition of the
      subgroup generated by $(ab)(cd)$.
    \item $R$ consists of a bunch of $1$-cells and $2$-cells.
      Clearly, the vertices $ac,ad,$ $bc,bd$ cannot form a singleton
      cell, since such a 1-cell will not be compatible with $U$.
      Therefore, $\{ae\}$ and $\{be\}$ are the only possible singleton
      cells.  Neither of them can pair up with one of $ac,ad,bc,bd$
      since that cell would be incompatible with $V$. Therefore, they
      are forced to be singleton cells.  It remains to partition
      $ac,ad,bc,bd$ into two 2-cells.  The vertex $ac$ cannot be
      paired up with $bd$ or $bc$ since it will be incompatible with
      $be$. Therefore, the only possible case is to have 2-cells
      $\{ac,ad\}$ and $\{bc,bd\}$. The partition $\{ab\}$, $\{cd\}$,
      $\{ce,de\}$, $\{ae\}$, $\{be\}$, $\{ac,ad\}$, $\{bc,bd\}$ can be
      verified to be equitable. Moreover, it is easy to check that it
      is the orbit partition of the subgroup $S_{\{c,d\}}$.  (This
      case is identical to Case 2(b)).$~~\rule{1.0ex}{1.2ex}$\qed
    \end{enumerate}
  \end{description}
\end{proof}
\end{proof}

\subsection{The Johnson Graphs $J(n,2)$ are \tinhofer} \label{ssec:john-tin}

In this section, we show that the Johnson graphs $J(n,2)$ are
Tinhofer.  We begin with some necessary definitions.  Let $G$ be a
graph and denote the automorphism group of $G$ by $A$. For $v\in
V(G)$, by $A_v$ we denote the stabilizer subgroup of $A$ that fixes the vertex
$v$. Furthermore, for a subset $F\subset V(G)$, let $A_F=\bigcap_{v\in
  F}A_v$.  Let $\mathcal{P}_F$ denote the stable
partition of the colored version of $G$ where each vertex in $F$ is individualized.
Then the orbit partition of $A_F$ is a subpartition of
$\mathcal{P}_F$. Note that $G$ is Tinhofer if and only if, for every $F$, the orbit
partition of $A_F$ coincides with $\mathcal{P}_F$.

One way to prove that the two partitions coincide is to show that each
orbit $O$ of $A_F$ is definable in terms of $F$ in two-variable
first-order logic. Here, ``in terms of $F$'' means that a defining
formula $\Phi_O(x)$ can use constant symbols (names) for each vertex
in $F$. Furthermore, $\Phi_O(x)$ contains occurrences of only two variables, $x$
and $y$.  At least one occurrence of $x$ is free.  $\Phi_O(x)$ uses
two binary relation symbols $\sim$ and $=$ for adjacency and equality
of vertices. This formula is true on $G$ for $x=v$ exactly when $v\in O$.

Once $\Phi_O(x)$ is found for each $O$, the equality of the partitions
follows by a similar argument as in \cite[Theorem 1.8.1]{ImmermanL90}
or directly from the definitions of orbits, as those will imply that
any two orbits are separated by color refinement starting from the
individualization of $F$. The number of refinement steps sufficient to
separate $O$ from any other orbit can be only one greater than the
quantifier depth of~$\Phi_O(x)$.

In order to implement this scenario for $G=J(n,2)$, it will be
convenient to assume that $V(G)={[n]\choose2}$ (note, however, that the formulas
$\Phi_O(x)$ do not involve variables over $[n]$).  Given $\alpha\in
S_n$, by $\ell(\alpha)$ we denote the corresponding permutation of
$[n]\choose2$. Obviously, every $\ell(\alpha)$ is an automorphism of
$G$, and the automorphism group $A$ contains nothing else by the Whitney theorem~\cite{Whit32}.

Before designing the definitions $\Phi_O(x)$, we will need to make two
preliminary steps:
Describe $A_F$ and, then,
describe the orbits of $A_F$ (first irrespectively of any
  logical formalism; expressing these descriptions in two-variable
  first-order logic will be the next task).

We now proceed to the detailed proof. 

\begin{theorem}
$J(n,2)$ is a Tinhofer graph for all~$n$.  
\end{theorem}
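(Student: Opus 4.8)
The plan is to use the characterization, recalled just above, that $J(n,2)$ is a Tinhofer graph if and only if for every $F\subseteq V(G)$ the orbit partition of $A_F$ coincides with the stable partition $\mathcal{P}_F$. Since the orbit partition of $A_F$ is always a subpartition of $\mathcal{P}_F$, it suffices to prove the opposite refinement: every orbit $O$ of $A_F$ is a union of cells of $\mathcal{P}_F$. I would obtain this by exhibiting, for each orbit $O$, a first-order formula $\Phi_O(x)$ with counting quantifiers that uses only the two variables $x,y$ together with constant names for the vertices in $F$ and that holds on $J(n,2)$ exactly when $x\in O$. By the Immerman--Lander correspondence used above, the existence of such a formula implies that color refinement, started from the individualization of $F$, separates $O$ from every other orbit, so that $O$ is a union of $\mathcal{P}_F$-cells.

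First I would pin down $A_F$ as a subgroup of $S_n$. By Whitney's theorem every automorphism of $J(n,2)$ is of the form $\ell(\alpha)$ for some $\alpha\in S_n$, and $\ell(\alpha)$ fixes the vertex $\{i,j\}$ precisely when $\alpha$ stabilizes the set $\{i,j\}$. Viewing $F$ as a graph on the ground set $[n]$, this means $\alpha\in A_F$ if and only if $\alpha$ stabilizes every edge of $F$ setwise. A short degree analysis then shows that every element lying in a connected component of $F$ with at least two edges is fixed by $\alpha$; every isolated edge $\{u,w\}$ of $F$ may be independently flipped or fixed; and the set $I$ of ground elements untouched by $F$ is permuted arbitrarily. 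Consequently $A_F\cong(\mathbb{Z}_2)^{k}\times\mathrm{Sym}(I)$, where $k$ is the number of isolated edges of $F$.

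Next I would read off the $A_F$-orbits on $\binom{[n]}{2}$ by classifying a vertex $\{x,y\}$ according to the types---\emph{fixed}, \emph{flip-endpoint}, or \emph{free}---of its two ground elements. This yields a short list of orbit shapes: two fixed elements give a singleton orbit; a fixed element together with a flip-endpoint gives a $2$-element orbit; a fixed element together with $I$ gives the star of all pairs joining that element to $I$; two distinct flip-edges give a $4$-element orbit; and so on. The central device for turning these descriptions into two-variable formulas is that a fixed ground element $a$ can be addressed even though the logic ranges over pairs rather than over $[n]$: if $a$ is incident to two named edges $\{a,b\}$ and $\{a,c\}$, then the pairs adjacent to both of them are exactly the pairs through $a$ together with the single pair $\{b,c\}$, and $\{b,c\}$ is singled out as the unique such pair having the fewest neighbours inside this set (a condition expressible with a counting quantifier). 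This gives a two-variable predicate ``$x$ contains the fixed element $a$'', and with these predicates and the named constants each orbit on the list becomes definable.

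The main obstacle is the genuine two-variable restriction. Because the structure is $J(n,2)$, whose points are the pairs, every assertion about the underlying ground set $[n]$ must be simulated through adjacencies of pairs while recycling only $x$ and $y$; in particular the orbits supported on the free part $I$, and the final verification that no two distinct orbits satisfy a common $\Phi_O$, are where the bookkeeping is delicate. A few small values of $n$, where Whitney's description of $\auto{J(n,2)}$ needs separate justification, would be checked directly. Once the ground-element predicates are available in two-variable logic, the case analysis closes, and the bound on the quantifier depth of the $\Phi_O$ translates into the number of refinement rounds needed, completing the proof that $J(n,2)$ is a Tinhofer graph.
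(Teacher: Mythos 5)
Your proposal is correct and takes essentially the same route as the paper: the same reduction to showing that each orbit of $A_F$ is two-variable first-order definable with constants for $F$, the same Whitney-based determination of $A_F$ (pointwise-fixed elements in components of $F$ with at least two edges, independently flippable isolated edges, free action on the untouched ground elements), the same orbit classification by ground-element types, and the same handling of small $n$; your common-neighborhood device for the predicate ``$x$ contains the fixed element $a$'' is precisely the trick behind the paper's formulas $x\sim q_1\wedge x\sim q_2$ in its Cases 4--5. The one detail your sketch glosses over---a fixed element incident to only \emph{one} edge $\{a,a'\}$ of $F$ (the paper's subcases (ii))---is repaired immediately within your own framework, since then $a'$ lies on two edges of $F$, so ``$x$ contains $a$'' is expressed as $x\sim\{a,a'\}$ together with the negation of your predicate ``$x$ contains $a'$''.
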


\begin{proof}
  Note that $J(2,2)=K_1$, $J(3,2)=K_3$, and $J(4,2)$ is the octahedral
  graph, whose complement is $K(4,2)=3K_2$. Thus, these three graphs
  are amenable and, hence, Tinhofer. We can, therefore, assume that
  $n\ge5$.

  Call a fixed vertex $p\in F$ \emph{isolated} if $F$ contains no
  vertex adjacent to $p$. Let $F=F_1\cup F_2$ be the partition of $F$
  into non-isolated and isolated vertices. Furthermore, we define the
  partition
  \[
  [n]=W_1\cup W_2\cup W_3
  \]
  as follows: $W_1$ is the union of all non-isolated pairs $p$ (i.e.,
  all $p$ in $F_1$), and $W_2$ is the union of all isolated pairs $p$
  (i.e., all $p$ in $F_2$).
Thus, $W_3$ consists of the points of $[n]$ that are not included in any fixed pair.

  Note now that $\ell(\alpha)\in A_F$ if and only if $\alpha$ either fixes
or transposes the two points in each fixed pair.  It follows that $\ell(\alpha)\in A_F$
exactly when
  \begin{itemize}
  \item $\alpha(w)=w$ for every $w\in W_1$ and
  \item $\alpha(p)=p$ for every $p\in F_2$.
  \end{itemize}
  Given a vertex $u=\{a,b\}$ of $G$, let $O(u)$ denote its orbit with
  respect to $A_F$.  There are six kinds of orbits. Below we
  describe all of them along with providing suitable formal
  definitions $\Phi_{O(u)}(x)$.
  \begin{description}[leftmargin=1mm,labelsep=1.25mm,itemindent=0mm,topsep=1.5mm,itemsep=1.5mm]
  \item[\mdseries Case 1:] $\{a,b\}\subseteq W_1$.  Then $O(u)=\{u\}$.
    \emph{Formal definition:} $x=u$.
  \item[\mdseries Case 2:] $\{a,b\}\subseteq W_2$.  Here we have two subcases.  If
    $u\in F_2$, then $O(u)=\{u\}$ again.  Otherwise, $F_2$ contains
    two pairs $p_1=\{a,a'\}$ and $p_2=\{b,b'\}$.  In this subcase,
    \[
    O(u)=\{\{a,b\},\{a',b\},\{a,b'\},\{a',b'\}\},
    \]
    which is exactly the common neighborhood of $p_1$ and $p_2$.
    \emph{Formal definition:} $x\sim p_1\wedge x\sim p_2$.
  \item[\mdseries Case 3:] $\{a,b\}\subseteq W_3$.  Now $O(u)={W_3\choose2}$,
    which are exactly the non-fixed vertices with no neighbor in~$F$.
    \emph{Formal definition:} $\bigwedge_{p\in F}(x\ne p\wedge
    x\not\sim p)$.
  \item[\mdseries Case 4:] $a\in W_1$, $b\in W_2$. Let
    $p=\{b,b'\}$ be the pair in $F_2$ containing $b$. Then,
    \[
    O(u)=\{\{a,b\},\{a,b'\}\}.
    \]
To give a formal definition of $O(u)$, we consider two subcases.
    \begin{enumerate}%
[leftmargin=3mm,label=(\roman*),labelsep=1.25mm,itemindent=3mm,itemsep=1.5mm,topsep=1.5mm]
    \item
$a$ belongs to two adjacent vertices $q_1=\{a,a_1\}$ and
    $q_2=\{a,a_2\}$ in $F_1$.\\[.5mm]
    \emph{Formal definition:} $x\sim p\wedge x\sim q_1\wedge x\sim
    q_2$.  Indeed, the condition $x\sim p$ forces $x$ to contain
    either $b$ or $b'$.  This excludes the possibility that
    $x=\{a_1,a_2\}$ and, therefore, $x$ is forced to contain $a$ by
    the adjacency to $q_1$ and~$q_2$.
  \item 
$a$ belongs to a single vertex $q_1=\{a,a'\}$ in $F_1$.
By definition, $F_1$ contains also a vertex $q_2=\{a',a''\}$.
    \emph{Formal definition:} $x\sim p\wedge x\sim q_1\wedge x\not\sim q_2$.
    \end{enumerate}
  \item[\mdseries Case 5:] $a\in W_1$, $b\in W_3$.  Then
    $$O(u)=\setdef{\{a,b'\}}{b'\in W_3}.$$
Similarly to the preceding case, we distinguish two subcases.
   \begin{enumerate}%
[leftmargin=3mm,label=(\roman*),labelsep=1.25mm,itemindent=3mm,topsep=1.5mm,itemsep=1.5mm]
    \item
$a$ belongs to two adjacent vertices $q_1=\{a,a_1\}$ and
    $q_2=\{a,a_2\}$ in $F_1$.\\[.5mm]
\emph{Formal definition:}
 First of all, we say that $x\sim q_1\wedge x\sim q_2$.  It
    remains to exclude the possibility that $x\subseteq W_1\cup W_2$
    (in particular, this will exclude $x=\{a_1,a_2\}$ and force $x$ to
    contain $a$).  We do this by adding the following expression
    \begin{multline} {\!\!\bigwedge_{p\in F}x\ne p}\wedge\!\!\!
      {\bigwedge_{p,q\in F,p\not\sim q}\neg(x\sim p\wedge x\sim
        q)}\\\wedge{}  {\bigwedge_{p,q\in F_1,p\sim q}(x\sim
        p\wedge x\sim q\to\exists y\,(y\sim x\wedge y\sim p\wedge
        y\sim q))}.\label{eq:thirdpart}
    \end{multline}
    The first conjunctive term prevents $x$ to be one of the pairs in
    $F$.  The second term excludes the case that $x$ is covered by two
    disjoint pairs $p$ and $q$ in $F$. The third term excludes the
    case that $x$ is covered by two intersecting pairs $p$ and $q$ in
    $F$ or, equivalently, the case where $x$, $p$, and $q$ form a
    triangle.  It would be not enough just to forbid $x$, $p$, and $q$
    from forming a clique because this could also exclude a
    permissible case where $x$, $p$, and $q$ form a star (which is
    captured by the subformula beginning with $\exists y$).  Note,
    that we need the assumption $n\ge5$ in this place.
  \item 
$a$ belongs to a single vertex $q_1=\{a,a'\}$ in $F_1$, and
$q_2=\{a',a''\}$ is another vertex in~$F_1$.
    \emph{Formal definition:} $x\sim q_1\wedge x\not\sim q_2\wedge x\not\subseteq W_1\cup W_2$,
the last being expressed by the formula~\refeq{thirdpart}.
    \end{enumerate}
  \item[\mdseries Case 6:] $a\in W_2$, $b\in W_3$.  In this case, $F_2$ contains
    a pair $p=\{a,a'\}$ and
    \[
    O(u)=\setdef{\{a,b'\}}{b'\in W_3}\cup\setdef{\{a',b'\}}{b'\in
      W_3}.
    \]
    \emph{Formal definition:} $x\sim p\wedge x\not\subseteq W_1\cup
    W_2$, the latter being expressed by~\refeq{thirdpart}.
  \end{description}
  The proof is complete.\qed
\end{proof}

\ifproceedings{%
\subsection{A refinable non-Tinhofer graph}\label{ssec:tin-ref}

\proofSepRefTin
}{}%

\ifproceedings{%
\section{Proof of Theorem \protect\ref{thm:phard}}

\figphard
\phard*
\proofphard
}{}

\end{document}

Color refinement is a classical technique used to show that two given graphs G and H are non-isomorphic; it is very efficient, although it does not succeed on all graphs. We call a graph G amenable to color refinement if the color-refinement procedure succeeds in distinguishing G from any non-isomorphic graph H.  Tinhofer (1991) explored a linear programming approach to Graph Isomorphism and defined the notion of compact graphs: A graph is compact if its fractional automorphisms polytope is integral. Tinhofer noted that isomorphism testing for compact graphs can be done quite efficiently by linear programming. However, the problem of characterizing and recognizing compact graphs in polynomial time remains an open question.

Our results are summarized below:

– We determine the exact range of applicability of color refinement by showing that amenable graphs are recognizable in time O((n + m)logn), where n and m denote the number of vertices and the number of edges in the input graph.  – We show that all amenable graphs are compact. Thus, the applicability range for Tinhofer’s linear programming approach to isomorphism testing is at least as large as for the combinatorial approach based on color refinement.

– Exploring the relationship between color refinement and compactness further, we study related combinatorial and algebraic graph properties introduced by Tinhofer and Godsil. We show that the corresponding classes of graphs form a hierarchy and we prove that recognizing each of these graph classes is P-hard. In particular, this gives a first complexity bound for recognizing compact graphs.  and defined the notion of compact graphs: A graph is compact if its fractional automorphisms polytope is integral. Tinhofer noted that isomorphism testing for compact graphs can be done quite efficiently by linear programming. However, the problem of characterizing and recognizing compact graphs in polynomial time remains an open question. Our results are summarized below:

– We determine the exact range of applicability of color refinement by showing that amenable graphs are recognizable in time O((n + m)logn), where n and m denote the number of vertices and the number of edges in the input graph.

– We show that all amenable graphs are compact. Thus, the applicability range for Tinhofer’s linear programming approach to isomorphism testing is at least as large as for the combinatorial approach based on color refinement.

– Exploring the relationship between color refinement and compactness further, we study related combinatorial and algebraic graph properties introduced by Tinhofer and Godsil. We show that the corresponding classes of graphs form a hierarchy and we prove that recognizing each of these graph classes is P-hard. In particular, this gives a first complexity lower bound for recognizing compact graphs.

This version is identical to Version 2, except for the better formatted abstract and modified comments:

1. The incorrect Lemma 10 in the first version is now corrected (see Theorem 9 in the new version).

2. P-hardness proofs for the classes Discrete, Amenable, Compact, Tinhofer, and Refinable are included.

3. A graph separating the classes Tinhofer and Refinable is now included. We had left this open in the first version.